\title{Solving the $n$-color ice model}
\author{Patrick Addona, Ethan Bockenhauer, Ben Brubaker, \\ Michael Cauthorn, Cianan Conefrey-Shinozaki, David Donze, \\ William Dudarov, Jessamyn Dukes, Andrew Hardt, Cindy Li, \\ Jigang Li, Yanli Liu, Neelima Puthanveetil, Zain Qudsi, \\ Jordan Simons, Joseph Sullivan, and Autumn Young}
\date{\today}
\begin{document} \maketitle

\begin{abstract}
Given an arbitrary choice of two sets of nonzero Boltzmann weights for $n$-color lattice models, we provide explicit algebraic conditions on these Boltzmann weights which guarantee a solution (i.e., a third set of weights) to the Yang-Baxter equation. Furthermore we provide an explicit one-dimensional parametrization of all solutions in this case. These $n$-color lattice models are so named because their admissible vertices have adjacent edges labeled by one of $n$ colors with additional restrictions. The two-colored case specializes to the six-vertex model, in which case our results recover the familiar quadric condition of Baxter for solvability. The general $n$-color case includes important solutions to the Yang-Baxter equation like the evaluation modules for the quantum affine Lie algebra $U_q(\hat{\mathfrak{sl}}_n)$. Finally, we demonstrate the invariance of this class of solutions under natural transformations, including those associated with Drinfeld twisting.
\end{abstract}

\section{Introduction}

Lattice models are discrete dynamical systems on two-dimensional lattices arising in statistical mechanics \cite{Baxter-book}. Local interactions at each \emph{vertex} in the lattice are described in terms of \emph{Boltzmann weights}, and these interactions can be combined into a global weight for each admissible configuration on the lattice by taking the product of weights over all vertices. The weighted sum over all admissible configurations with fixed boundary conditions is called the \emph{partition function} of the model. From the perspective of statistical mechanics, the partition function is related to important physical quantities, such as the energy of the system and the associated Gibbs measure. 

A lattice model is called \emph{solvable} (or sometimes \emph{integrable}) if its Boltzmann weights admit a solution to the \emph{(quantum) Yang-Baxter} equation. In this remarkable case, the partition function of the lattice model satisfies symmetries, or sometimes recursion relations, that lead to closed form expressions of the partition function (hence the term \emph{solvable}). Such solutions to the Yang-Baxter equation are difficult to find, but highly prized, as the associated partition functions describe important special functions in many areas of mathematics, including knot theory, integrable probability, Schubert calculus, orthogonal polynomials, and $p$-adic representation theory. 

A solution to the Yang-Baxter equation is often expressed in terms of an $R$-matrix. In its algebraic formulation, the Yang-Baxter equation is an identity of endomorphisms. Given vector spaces $U,V$, and $W$ and endomorphisms $R\in\End(U\otimes V)$, $S\in\End(V\otimes W)$, and $T\in\End(U\otimes W)$, the Yang-Baxter equation is the relationship \[RST = TSR \hspace{10pt} \text{as elements of} \hspace{10pt} \End(U\otimes V\otimes W),\] where each of $R,S,T$ acts on the appropriate tensor factors (and as the identity on the third factor). We are often given $S$ and $T$ and asked to solve for the matrix $R$. 

Partly motivated by an effort to find sources for Yang-Baxter equations, Drinfeld \cite{Drinfeld-hopf-quantum, Drinfeld-ICM} and Jimbo \cite{Jimbo-q-difference, Jimbo-Hecke-algebra} were led to define and study quantum enveloping algebras, also known as \emph{quantum groups}. These include $q$-deformations of universal enveloping algebras of Lie algebras and are examples of quasi-triangular Hopf algebras whose associated module category is braided. It is from this structure that we obtain solutions to the Yang-Baxter equation from modules of quantum groups. There's a general heuristic linking solutions to the Yang-Baxter equation to algebraic structures like the modules of quasitriangular Hopf algebras. A precise version of this connection is given by the Faddeev-Reshetikhin-Takhtajian construction (see \cite[VIII.6]{Kassel-quantum-groups}).

Absent this quantum group structure, much work has been done, particularly by physicists, on solutions to the Yang-Baxter equation for general classes of models with a fixed collection of admissible vertices. Given such a fixed set of admissible vertices, one can ask for the following:
\begin{itemize}
    \item Necessary and sufficient conditions on the Boltzmann weights of $S$ and $T$ such that the lattice model is solvable.
    \item A parametrization of all such Yang-Baxter solutions.
\end{itemize}
Providing answers to these questions is often referred to as \emph{solving the model}. Some authors reserve this term for the associated explicit expression for the partition function of the lattice model, though this typically follows in a straightforward way from the existence of Yang-Baxter equations.

The prototypical example is Baxter's solution of the six-vertex model on the square lattice, where every vertex has four adjacent edges \cite{Baxter-inversion-relation, Baxter-book}. This model has six admissible vertices, where adjacent edges are labelled with a $0$ or $1$ in a pattern which must follow the \emph{ice rule} (see next section). Figure~\ref{example-state}, has an example state for the six-vertex model, in which each of the possible six vertices satisfying the ice rule appear.

\begin{figure}[h]
\begin{center}
\scalebox{0.8}{
\begin{tikzpicture}
  \coordinate (ab) at (1,0);
  \coordinate (ad) at (3,0);
  \coordinate (af) at (5,0);
  \coordinate (ah) at (7,0);
  \coordinate (ba) at (0,1);
  \coordinate (bc) at (2,1);
  \coordinate (be) at (4,1);
  \coordinate (bg) at (6,1);
  \coordinate (bi) at (8,1);
  \coordinate (cb) at (1,2);
  \coordinate (cd) at (3,2);
  \coordinate (cf) at (5,2);
  \coordinate (ch) at (7,2);
  \coordinate (da) at (0,3);
  \coordinate (dc) at (2,3);
  \coordinate (de) at (4,3);
  \coordinate (dg) at (6,3);
  \coordinate (di) at (8,3);
  \coordinate (eb) at (1,4);
  \coordinate (ed) at (3,4);
  \coordinate (ef) at (5,4);
  \coordinate (eh) at (7,4);
  \coordinate (fa) at (0,5);
  \coordinate (fc) at (2,5);
  \coordinate (fe) at (4,5);
  \coordinate (fg) at (6,5);
  \coordinate (fi) at (8,5);
  \coordinate (gb) at (1,6);
  \coordinate (gd) at (3,6);
  \coordinate (gf) at (5,6);
  \coordinate (gh) at (7,6);
  \coordinate (bb) at (1,1);
  \coordinate (bd) at (3,1);
  \coordinate (bf) at (5,1);
  \coordinate (bh) at (7,1);
  \coordinate (db) at (1,3);
  \coordinate (dd) at (3,3);
  \coordinate (df) at (5,3);
  \coordinate (dh) at (7,3);
  \coordinate (fb) at (1,5);
  \coordinate (fd) at (3,5);
  \coordinate (ff) at (5,5);
  \coordinate (fh) at (7,5);
  \coordinate (bax) at (0,1.5);
  \coordinate (bcx) at (2,1.5);
  \coordinate (bex) at (4,1.5);
  \coordinate (bgx) at (6,1.5);
  \coordinate (bix) at (8,1.5);
  \coordinate (dax) at (0,3.5);
  \coordinate (dcx) at (2,3.5);
  \coordinate (dex) at (4,3.5);
  \coordinate (dgx) at (6,3.5);
  \coordinate (dix) at (8,3.5);
  \coordinate (fax) at (0,5.5);
  \coordinate (fcx) at (2,5.5);
  \coordinate (fex) at (4,5.5);
  \coordinate (fgx) at (6,5.5);
  \coordinate (fix) at (8,5.5);
  \draw (ab)--(gb);
  \draw (ad)--(gd);
  \draw (af)--(gf);
  \draw (ah)--(gh);
  \draw (ba)--(bi);
  \draw (da)--(di);
  \draw (fa)--(fi);
  \draw[line width=0.5mm,blue] (fa)--(fi);
  \draw[line width=0.5mm,blue] (gb)--(db)--(dd)--(bd)--(bi);
  \draw[line width=0.5mm,blue] (gf)--(af);
  \draw[line width=0.5mm,red] (ba)--(bd)--(ad);
  \draw[line width=0.5mm,red] (gd)--(dd)--(dh)--(ah);
  \draw[line width=0.5mm,red] (da)--(db)--(ab);
  \draw[line width=0.5mm,red] (gh)--(dh)--(di);
  \draw[line width=0.5mm,red,fill=white] (ab) circle (.25);
  \draw[line width=0.5mm,red,fill=white] (ad) circle (.25);
  \draw[line width=0.5mm,blue,fill=white] (af) circle (.25);
  \draw[line width=0.5mm,red,fill=white] (ah) circle (.25);
  \draw[line width=0.5mm,red,fill=white] (ba) circle (.25);
  \draw[line width=0.5mm,red,fill=white] (bc) circle (.25);
  \draw[line width=0.5mm,blue,fill=white] (be) circle (.25);
  \draw[line width=0.5mm,blue,fill=white] (bg) circle (.25);
  \draw[line width=0.5mm,blue,fill=white] (bi) circle (.25);
  \draw[line width=0.5mm,red,fill=white] (cb) circle (.25);
  \draw[line width=0.5mm,blue,fill=white] (cd) circle (.25);
  \draw[line width=0.5mm,blue,fill=white] (cf) circle (.25);
  \draw[line width=0.5mm,red,fill=white] (ch) circle (.25);
  \draw[line width=0.5mm,red,fill=white] (da) circle (.25);
  \draw[line width=0.5mm,blue,fill=white] (dc) circle (.25);
  \draw[line width=0.5mm,red,fill=white] (de) circle (.25);
  \draw[line width=0.5mm,red,fill=white] (dg) circle (.25);
  \draw[line width=0.5mm,red,fill=white] (di) circle (.25);
  \draw[line width=0.5mm,blue,fill=white] (eb) circle (.25);
  \draw[line width=0.5mm,red,fill=white] (ed) circle (.25);
  \draw[line width=0.5mm,blue,fill=white] (ef) circle (.25);
  \draw[line width=0.5mm,red,fill=white] (eh) circle (.25);
  \draw[line width=0.5mm,blue,fill=white] (fa) circle (.25);
  \draw[line width=0.5mm,blue,fill=white] (fc) circle (.25);
  \draw[line width=0.5mm,blue,fill=white] (fe) circle (.25);
  \draw[line width=0.5mm,blue,fill=white] (fg) circle (.25);
  \draw[line width=0.5mm,blue,fill=white] (fi) circle (.25);
  \draw[line width=0.5mm,blue,fill=white] (gb) circle (.25);
  \draw[line width=0.5mm,red,fill=white] (gd) circle (.25);
  \draw[line width=0.5mm,blue,fill=white] (gf) circle (.25);
  \draw[line width=0.5mm,red,fill=white] (gh) circle (.25);
  \node at (gb) {$1$};
  \node at (gd) {$0$};
  \node at (gf) {$1$};
  \node at (gh) {$0$};
  \node at (fa) {$1$};
  \node at (fc) {$1$};
  \node at (fe) {$1$};
  \node at (fg) {$1$};
  \node at (fi) {$1$};
  \node at (eb) {$1$};
  \node at (ed) {$0$};
  \node at (ef) {$1$};
  \node at (eh) {$0$};
  \node at (da) {$0$};
  \node at (dc) {$1$};
  \node at (de) {$0$};
  \node at (dg) {$0$};
  \node at (di) {$0$};
  \node at (cb) {$0$};
  \node at (cd) {$1$};
  \node at (cf) {$1$};
  \node at (ch) {$0$};
  \node at (ba) {$0$};
  \node at (bc) {$0$};
  \node at (be) {$1$};
  \node at (bg) {$1$};
  \node at (bi) {$1$};
  \node at (ab) {$0$};
  \node at (ad) {$0$};
  \node at (af) {$1$};
  \node at (ah) {$0$};
\end{tikzpicture}}
\end{center}
\caption{An example lattice model state for the ice-type six-vertex model. Here, we have two labels, 0 and 1. Note that the edges with a given label form paths that propagate through the grid.}
\label{example-state}
\end{figure}

In his treatment of the six-vertex model, Baxter requires the weights to be {\em symmetric}, that is invariant when swapping labels $0$ and $1$. This is sometimes referred to as the ``field-free'' case. Baxter found that for these lattice models, solvability is governed by a quadric $\Delta$ in the Boltzmann weights of vertex type $S$ and $T$ -- the model is solvable if and only if $\Delta(S) = \Delta(T)$. Moreover $\Delta$ plays an important role in the physical properties of the model.

In this special case on the square lattice, the Yang-Baxter equation is expressible as an identity of lattice model partition functions. The matrices $R, S$, and $T$ can be viewed as three different types of vertex, each with its own set of Boltzmann weights. In pictorial form, the Yang-Baxter equation becomes an equality of partition functions of the following two lattices for every choice of the six boundary edge labels $E_1,E_2,E_3,F_1,F_2,F_3$:
\begin{align}\label{generic-YBE-diagram}
\begin{array}{c}
\scalebox{.85}{
\begin{tikzpicture}
    %lines in YBE
  \draw (0,0)--(2,0);
  \draw (0,2)--(2,2);
  \draw (1,-1)--(1,3);
  \coordinate (a1) at (-2,0);
  \coordinate (c1) at (0,2);
  \coordinate (a2) at (-2,2);
  \coordinate (c2) at (0,0);
  % R vertex
  \draw (a1) to [out=0,in=180] (c1);
  \draw (a2) to [out=0,in=180] (c2);
  % boundary circles
  \draw[fill=white] (-2,0) circle (.35);
  \draw[fill=white] (-2,2) circle (.35);
  \draw[fill=white](1,3) circle(.35);
  \draw[fill=white] (2,2) circle (.35);
  \draw[fill=white] (2,0) circle (.35);
  \draw[fill=white](1,-1) circle (.35);
  \draw[fill=white] (0,2) circle (.35);
  \draw[fill=white] (0,0) circle (.35);
  \draw[fill=white](1,1) circle (.35);
  %outer labels
  \node at (-2,0) {$E_1$};
  \node at (-2,2) {$E_2$};
  \node at (1,3) {$E_3$};
  \node at (2,2) {$F_1$};
  \node at (2,0) {$F_2$};
  \node at (1,-1) {$F_3$};
  %inner circles
  \path[fill=white] (-1,1) circle (.35);%R_i,j
  \path[fill=white] (1,2) circle (.35);%z_i
  \path[fill=white] (1,0) circle (.35);%z_j
  %inner labels
  \node at (1,2) {$S$};
  \node at (1,0) {$T$};
  \node at (-1,1){$R$};
  %equality
  \node at (3,1) {{\Large $=$}};
  \end{tikzpicture}}
\hspace{1em}
\scalebox{.85}{\begin{tikzpicture}
    %lines in YBE
  \draw (0,0)--(-2,0); 
  \draw (0,2)--(-2,2);
  \draw (-1,-1)--(-1,3);
  \coordinate (a1) at (2,0);
  \coordinate (c1) at (0,2);
  \coordinate (a2) at (2,2);
  \coordinate (c2) at (0,0);
  %R vertex
  \draw (a1) to [out=180,in=0] (c1);
  \draw (a2) to [out=180,in=0] (c2);
  %boundary circles
  \draw[fill=white] (-2,0) circle (.35);
  \draw[fill=white] (-2,2) circle (.35);
  \draw[fill=white](-1,3) circle(.35);
  \draw[fill=white] (2,2) circle (.35);
  \draw[fill=white] (2,0) circle (.35);
  \draw[fill=white](-1,-1) circle (.35);
  \draw[fill=white] (0,2) circle (.35);
  \draw[fill=white] (0,0) circle (.35);
  \draw[fill=white](-1,1) circle (.35);
  %outer labels
  \node at (-2,0) {$E_1$};
  \node at (-2,2) {$E_2$};
  \node at (-1,3) {$E_3$};
  \node at (2,2) {$F_1$};
  \node at (2,0) {$F_2$};
  \node at (-1,-1) {$F_3$};
  %inner vertex circles
  \path[fill=white] (1,1) circle (.35);%R_i,j
  \path[fill=white] (-1,2) circle (.35);%z_j
  \path[fill=white] (-1,0) circle (.35);%z_i
  %inner vertex labels
  \node at (-1,2) {$T$};
  \node at (-1,0) {$S$};
  \node at (1,1) {$R$};
  \end{tikzpicture}}\end{array}.
\end{align}
We've used a common shorthand here, writing that the lattice configurations are equal when we mean an equality of their partition functions. That partition function is typically a sum over a very small set, as there are just three internal edges to be prescribed in each configuration. A precise definition in terms of lattice models is presented in the next section.

Baxter's result was extended by Korepin, Bogoliubov, and Izergin \cite{KBI} and by Brubaker, Bump, and Friedberg~\cite{BBF-Schur-polynomials} to obtain the solution to the six-vertex model in full generality. In this non-field-free setting, $\Delta$ is replaced by a pair of invariants.

In this paper, we {\em solve} a natural generalization of the six-vertex model -- the $n$-color lattice model, for any fixed positive integer $n$. We may view the six-vertex model as a two-color lattice model (as depicted in Figure~\ref{example-state}) whose admissible vertices are selected from the 16 possible vertex configurations so that the resulting states form colored paths moving downward and rightward through the model, any of which may cross or not at a vertex. That choice agrees with the set of admissible vertices picked out by matching Boltzmann weights with entries of the $R$-matrix for the standard module of $U_q(\mathfrak{sl}_2)$ -- a four-by-four matrix supported at six entries. (See \cite[Section 7.5]{ChariPressley} for the association of lattice model Boltzmann weights to $R$-matrix entries. The paper \cite{BBF-Schur-polynomials} also matches our pictorial approach very closely.) Our $n$-colored models have admissible states determined similarly. Its admissible vertices combine to make colored paths traveling downward and rightward through the lattice with no restrictions on crossings. An example with $n=4$ is given in Figure~\ref{example-color-state} and the admissible vertex types are stated precisely in Figure~\ref{rect-vertices}. Again, these precisely match the admissible vertices picked out by the non-zero entries of the $R$-matrix for the standard module of $U_q(\mathfrak{sl}_n)$.\footnote{The reason that admissible vertices corresponding to non-zero $R$-matrix entries result in the path dynamics described above is straightforward to explain. In the associated quantum group module, each weight space is one-dimensional and can be assigned a unique color. The fact that the Cartan subalgebra, under comultiplication, commutes with the $R$-matrix ensures that weight spaces are preserved. At the level of lattice model vertices, this then implies that the colored edges coming into a vertex, from above and left, must match the colored paths exiting the vertex below and to the right, hence forming paths.}

Other solutions for classes of colored lattice models have been obtained by Perk and Schultz \cite{PerkSchultz-1, PerkSchultz-2}, by Sun, Wang, Wu, and collaborators \cite{SunWangWu, SunWangWu-eight-vertex, RenWangWu}, and by many others e.g. \cite{Khachatryan, Vieira, VieiraLima-Santos, Hardt-Hamiltonians}. We remark that the Perk-Schultz use of color is similar to ours--this relationship is discussed below--while the Sun-Wang-Wu use of color is substantially different.

Our main results appear in Section~\ref{sec:n-label}, and we combine results here to provide a version of our main theorem -- conditions for the solvability of the $n$-color model.

\begin{mainthm}[Theorems \ref{nlabelSol} and \ref{nlabelCond}] \label{main-theorem}
The non-degenerate $n$-color ice-type lattice model has nonzero solutions to the Yang-Baxter equation precisely when the $4n^3-11n^2+7n$ conditions given in \emph{(\ref{n-color-conditions})} are satisfied. In this case, the solution $R$ is unique up to scalar multiple and is given in \emph{(\ref{n-color-parametrization})} in terms of the Boltzmann weights of $S$ and $T$.
\end{mainthm}

One interesting aspect of the conditions (\ref{n-color-conditions}) is the appearance of analogues of Baxter's $\Delta$. Namely, there are $n(n-1)$ quadrics $\Delta_{ij}$, one for every ordered pair of distinct colors $i$ and $j$, such that $\Delta_{ij}(S) = \Delta_{ij}(T)$ is a necessary condition for solvability. However, there are additional conditions for solvability which are not easily expressed as invariants of the model; that is, the conditions can't be easily separated into algebraic relations involving only the Boltzmann weights for $S$ versus that of $T$. It is an interesting open question whether these conditions can be rephrased in those terms.

It is then natural to consider $n$-color lattice models where (some of) the quadrics $\Delta_{ij}$ vanish. In the six-vertex model, this is the well-known \emph{free fermion point}. Baxter~\cite{Baxter-book} showed for the symmetric six-vertex model that when $S$ and $T$ are free fermionic, $R$ is also free fermionic, and this was later generalized to all six-vertex models~\cite{KBI,BBF-Schur-polynomials}. In the $n$-color ice model, the vanishing of $\Delta_{ij}$ can be considered independently for each pair of colors. It turns out that results like the above hold \emph{pairwise}: if for any labels $i,j$, $\Delta_{ij}(S) = \Delta_{ij}(T) = 0$, then $\Delta_{ij}(R) = 0$ where it is defined. See Proposition \ref{ff-prop}.

The free fermion point corresponds to the center of the disordered regime~\cite[Section~8.10]{Baxter-book} where all motion is entropic -- the Boltzmann weights, which express the energy of the configuration, provide no energetic penalty nor reward for paths to touch or collide. Partition functions of free fermionic six-vertex models can be expressed as determinants \cite{AggarwalBorodinPetrovWheeler, Naprienko-ff}, and as $\tau$-functions of discrete-time Hamiltonian operators \cite{Hardt-Hamiltonians}. In fact, the latter interpretation generalizes to ice-type lattice models with ``charge'' (see \cite[Section~8]{Hardt-Hamiltonians}). The generalization of discrete-time Hamiltonian operators to involve ``color'' is an interesting open problem, and it is conceivable that these objects might correspond with solvable $n$-color lattice models with $\Delta_{ij}=0$.

Specific choices of Boltzmann weights for the $n$-color ice model have featured prominently in the literature. For example, the $R$-matrix for evaluation representations of $U_q(\mathfrak{gl}_n)$ found by Jimbo~\cite{Jimbo-Uqsln} and of $U_q(\mathfrak{gl}_{1|n})$ as in \cite{Kojima-superalgebra} are supported on the set of $n$-color admissible vertices, as alluded to above. The use of color here to denote different labels (or equivalently basis elements in quantum group modules) is due to Borodin and Wheeler \cite{BorodinWheeler-bosonic}, and solvable lattice models using color in various ways have been recently used to study functions such as LLT polynomials \cite{AggarwalBorodinWheeler-fermionic, CGKM-LLT, CFYZZ-super-LLT}, Grothendieck polynomials \cite{FrozenPipes, BuciumasScrimshaw-grothendieck}, and Iwahori Whittaker functions \cite{BBBG-Iwahori, BBBG-metahori, BumpNaprienko-bosonic}, each with respective interesting connections to quantum affine algebra and superalgebra modules. Most relevant to the present work, Perk and Schultz \cite{PerkSchultz-1, PerkSchultz-2, Schultz-thesis} found a large class of solvable $n$-color lattice models, which were later generalized by Perk and Au-Yang \cite{Perk-Au-Yang} to include more spectral parameters. The resulting class of solutions is very large; in fact, while some of these solutions are ice type, others are $n$-color generalizations of the eight-vertex model. It is an open question to determine the extent of the overlap between their solutions and ours. Let us briefly explain why this is so, drawing contrasts between our approach and others along the way.

One approach to solvable lattice models, which could be called \emph{generative}, is to find a joint parametrization of sets of $R,S$, and $T$ weights such that these weights together satisfy the Yang-Baxter equation. This approach was taken by Perk and Schultz, and has the advantage of exhibiting concrete solutions (for them, parametrized as families of hyperbolic trigonometric functions). By contrast, our approach is much closer to that of Baxter in the six-vertex model, and could be deemed \emph{prescriptive}. We view the $S$ and $T$ weights as fixed and prove a precise criterion for when there exists a set of $R$ weights that makes the model solvable. When this happens, we give a formula for the $R$-weights in terms of the $S$ and $T$ weights. One of the main advantages to our approach is that we can determine solvability and the resulting $R$-weights without any advance knowledge of what the $R$ weights might look like. It is a common occurrence that one has weights $S$ and $T$ resulting in partition functions that {\em conjecturally} match certain special functions, and such conjectures typically follow if the model is shown to be solvable. If one has a given set of prospective $R$ weights, it is a straightforward calculation to check whether or not the Yang-Baxter equation is satisfied. It is much harder to determine whether there exists any set of $R$ weights that makes the model solvable, especially if we vary over the number of colors $n$. Our prescriptive solution here allows one to simply check the conditions (often uniformly in $n$) and generate the corresponding $R$ matrix solution.

There are some other differences between our work and that of Perk and Schultz. They assume that the vertex-dependence of a set Boltzmann weights is given by a single parameter (often called a \emph{spectral parameter}). This is a well-motivated assumption in the (then-unknown) context of quantum group modules; however, we require no such restriction. (Perk and Au-Yang \cite{Perk-Au-Yang} have multiple spectral parameters; however, different choices of their parameters may sometimes give the same weights). A more minor difference is that Perk and Schultz assume cylindrical boundary conditions and find families of weights that cause the transfer matrices to commute. We have no such restriction on the boundary conditions; solvability under our definition leads to commutation relations between transfer matrices for any boundary conditions. Finally, the $n$-color Yang-Baxter equation is a set of polynomial equations (see Proposition \ref{genEnum}); our solutions are manifestly Laurent polynomials in the Boltzmann weights (which can be normalized to produce polynomials), while the Perk-Schultz solutions involve the aforementioned transcendental functions. As such, our solutions may be better suited to algebraic combinatorics, and perhaps even commutative algebra.

When a solution to the Yang-Baxter equation arises from a quantum group module, the associated Boltzmann weights may be deformed according to a certain constrained procedure known as \emph{Drinfeld-Reshetikhin twisting} which preserves the solvability of the weights (see Section~\ref{sec:qgroups-solutions} for details). This transformation, originally defined by Drinfeld, preserves the algebra structure of the quantum group, but modifies the coalgebra structure. Reshetikhin \cite[\S~2]{Reshetikhin-twist} found a class of explicit examples where the quasitriangular Hopf algebra structure is preserved, and so new $R$-matrices are produced. When applied to the standard $U_q(\mathfrak{gl}_n)$ $R$-matrix, this twist has a nice combinatorial description in terms of the Boltzmann weights. We show (Corollary~\ref{twist-corollary}) that in fact a similar transformation holds for \emph{all} solutions to the nondegenerate $n$-color ice-type model.

In this way, we may partition the set of solutions into families up to twisting, and suggest that each such class may have a natural algebraic origin through a quantum group module or related object. It would be interesting to explore various constructions for building ``quantum objects'' from solvable lattice models (e.g., using the Yang-Baxter algebra) or from solutions to the Yang-Baxter equation using the FRT construction \cite{FRT-construction}, though we don't pursue these in the present work.

\begin{figure}[h]
\begin{center}
\scalebox{0.8}{
\begin{tikzpicture}
  \coordinate (ab) at (1,0);
  \coordinate (ad) at (3,0);
  \coordinate (af) at (5,0);
  \coordinate (ah) at (7,0);
  \coordinate (ba) at (0,1);
  \coordinate (bc) at (2,1);
  \coordinate (be) at (4,1);
  \coordinate (bg) at (6,1);
  \coordinate (bi) at (8,1);
  \coordinate (cb) at (1,2);
  \coordinate (cd) at (3,2);
  \coordinate (cf) at (5,2);
  \coordinate (ch) at (7,2);
  \coordinate (da) at (0,3);
  \coordinate (dc) at (2,3);
  \coordinate (de) at (4,3);
  \coordinate (dg) at (6,3);
  \coordinate (di) at (8,3);
  \coordinate (eb) at (1,4);
  \coordinate (ed) at (3,4);
  \coordinate (ef) at (5,4);
  \coordinate (eh) at (7,4);
  \coordinate (fa) at (0,5);
  \coordinate (fc) at (2,5);
  \coordinate (fe) at (4,5);
  \coordinate (fg) at (6,5);
  \coordinate (fi) at (8,5);
  \coordinate (gb) at (1,6);
  \coordinate (gd) at (3,6);
  \coordinate (gf) at (5,6);
  \coordinate (gh) at (7,6);
  \coordinate (bb) at (1,1);
  \coordinate (bd) at (3,1);
  \coordinate (bf) at (5,1);
  \coordinate (bh) at (7,1);
  \coordinate (db) at (1,3);
  \coordinate (dd) at (3,3);
  \coordinate (df) at (5,3);
  \coordinate (dh) at (7,3);
  \coordinate (fb) at (1,5);
  \coordinate (fd) at (3,5);
  \coordinate (ff) at (5,5);
  \coordinate (fh) at (7,5);
  \coordinate (bax) at (0,1.5);
  \coordinate (bcx) at (2,1.5);
  \coordinate (bex) at (4,1.5);
  \coordinate (bgx) at (6,1.5);
  \coordinate (bix) at (8,1.5);
  \coordinate (dax) at (0,3.5);
  \coordinate (dcx) at (2,3.5);
  \coordinate (dex) at (4,3.5);
  \coordinate (dgx) at (6,3.5);
  \coordinate (dix) at (8,3.5);
  \coordinate (fax) at (0,5.5);
  \coordinate (fcx) at (2,5.5);
  \coordinate (fex) at (4,5.5);
  \coordinate (fgx) at (6,5.5);
  \coordinate (fix) at (8,5.5);
  \draw (ab)--(gb);
  \draw (ad)--(gd);
  \draw (af)--(gf);
  \draw (ah)--(gh);
  \draw (ba)--(bi);
  \draw (da)--(di);
  \draw (fa)--(fi);
  \draw[line width=0.5mm,red] (fa)--(fi);
  \draw[line width=0.5mm,blue] (gb)--(db)--(dd)--(bd)--(bi);
  \draw[line width=0.5mm,blue] (gf)--(af);
  \draw[line width=0.5mm,green] (ba)--(bd)--(ad);
  \draw[line width=0.5mm,green] (gd)--(dd)--(dh)--(ah);
  \draw[line width=0.5mm,violet] (da)--(db)--(ab);
  \draw[line width=0.5mm,violet] (gh)--(dh)--(di);
  \draw[line width=0.5mm,violet,fill=white] (ab) circle (.25);
  \draw[line width=0.5mm,green,fill=white] (ad) circle (.25);
  \draw[line width=0.5mm,blue,fill=white] (af) circle (.25);
  \draw[line width=0.5mm,green,fill=white] (ah) circle (.25);
  \draw[line width=0.5mm,green,fill=white] (ba) circle (.25);
  \draw[line width=0.5mm,green,fill=white] (bc) circle (.25);
  \draw[line width=0.5mm,blue,fill=white] (be) circle (.25);
  \draw[line width=0.5mm,blue,fill=white] (bg) circle (.25);
  \draw[line width=0.5mm,blue,fill=white] (bi) circle (.25);
  \draw[line width=0.5mm,violet,fill=white] (cb) circle (.25);
  \draw[line width=0.5mm,blue,fill=white] (cd) circle (.25);
  \draw[line width=0.5mm,blue,fill=white] (cf) circle (.25);
  \draw[line width=0.5mm,green,fill=white] (ch) circle (.25);
  \draw[line width=0.5mm,violet,fill=white] (da) circle (.25);
  \draw[line width=0.5mm,blue,fill=white] (dc) circle (.25);
  \draw[line width=0.5mm,green,fill=white] (de) circle (.25);
  \draw[line width=0.5mm,green,fill=white] (dg) circle (.25);
  \draw[line width=0.5mm,violet,fill=white] (di) circle (.25);
  \draw[line width=0.5mm,blue,fill=white] (eb) circle (.25);
  \draw[line width=0.5mm,green,fill=white] (ed) circle (.25);
  \draw[line width=0.5mm,blue,fill=white] (ef) circle (.25);
  \draw[line width=0.5mm,violet,fill=white] (eh) circle (.25);
  \draw[line width=0.5mm,red,fill=white] (fa) circle (.25);
  \draw[line width=0.5mm,red,fill=white] (fc) circle (.25);
  \draw[line width=0.5mm,red,fill=white] (fe) circle (.25);
  \draw[line width=0.5mm,red,fill=white] (fg) circle (.25);
  \draw[line width=0.5mm,red,fill=white] (fi) circle (.25);
  \draw[line width=0.5mm,blue,fill=white] (gb) circle (.25);
  \draw[line width=0.5mm,green,fill=white] (gd) circle (.25);
  \draw[line width=0.5mm,blue,fill=white] (gf) circle (.25);
  \draw[line width=0.5mm,violet,fill=white] (gh) circle (.25);
  \node at (gb) {$1$};
  \node at (gd) {$2$};
  \node at (gf) {$1$};
  \node at (gh) {$3$};
  \node at (fa) {$0$};
  \node at (fc) {$0$};
  \node at (fe) {$0$};
  \node at (fg) {$0$};
  \node at (fi) {$0$};
  \node at (eb) {$1$};
  \node at (ed) {$2$};
  \node at (ef) {$1$};
  \node at (eh) {$3$};
  \node at (da) {$3$};
  \node at (dc) {$1$};
  \node at (de) {$2$};
  \node at (dg) {$2$};
  \node at (di) {$3$};
  \node at (cb) {$3$};
  \node at (cd) {$1$};
  \node at (cf) {$1$};
  \node at (ch) {$2$};
  \node at (ba) {$2$};
  \node at (bc) {$2$};
  \node at (be) {$1$};
  \node at (bg) {$1$};
  \node at (bi) {$1$};
  \node at (ab) {$3$};
  \node at (ad) {$2$};
  \node at (af) {$1$};
  \node at (ah) {$2$};
\end{tikzpicture}}
\end{center}
\caption{An example state for the $n$-color ice-type lattice model. Here, we have four labels, 0, 1, 2, and 3. Paths of each color propagate down and to the right.}
\label{example-color-state}
\end{figure}

We conclude the introduction with an outline of the remaining sections. Section \ref{background-section} introduces the ice-type $n$-color lattice model and the Yang-Baxter equation, and outlines the process for obtaining a set of polynomials from the Yang-Baxter equation. The Yang-Baxter equation is equivalent to the vanishing of these polynomials, which we call \emph{Yang-Baxter polynomials}. Section \ref{enumeration-section} then enumerates these polynomials.

Sections \ref{3-label-section} and \ref{n-label-section} together contain the proof of the Main Theorem. Section \ref{3-label-section} treats the $n=2$ and $n=3$ cases, in turn. Section \ref{n-label-section} obtains necessary and sufficient conditions for an ice-type $n$-color lattice model to be solvable (Theorem \ref{nlabelCond}, as well as a parametrization of the solutions when they exist (Theorem \ref{nlabelSol}. The proof relies on considering 3-label subsystems, lattice models obtained from the general case by allowing only vertices with a particular size-3 subset of the edge labels. The solvability conditions for general $n$ turn out to be equivalent to the union of the solvability conditions for all of the 3-label subsystems. Hence, the $n=3$ case is paramount in our solution of the general model. Section \ref{n-label-section} also considers the case where $\Delta_{ij} = 0$ and then analyzes when all $R$-weights are nonzero.

Finally, Section \ref{sec:qgroups-solutions} explores transformations of the Boltzmann weights that preserve solvability. After a discussion of Drinfeld-Reshetikhin twisting, we prove Corollary \ref{twist-corollary} and give a second transformation that also preserves solvability. This transformation does not yet have an algebraic interpretation, but we suspect that it may be related to a change of basis.

\textbf{Acknowledgements:} We would like to thank Daniel Bump for helpful conversations and Jacques Perk for providing numerous useful references and context related to his work on the $n$-color ice model. Much of the work for this paper was done as part of the 2022 Polymath Jr.~program. We would like to thank the tireless and committed organizers, as well as the entire community who took part in the program. This research was partially supported by NSF awards DMS-2101392 (Brubaker), DMS-1937241 (Hardt, as part of an RTG grant), and DMS-2218374 (Polymath Jr.).

\section{Background} \label{background-section}

\subsection{Lattice models}

We work with ``ice-type'' lattice models, which are finite grids of intersecting lines. The points where grid lines cross are called \emph{vertices}, and the line segments connecting any two vertices are called \emph{interior edges}. We use the term \emph{boundary edge} for the half-edge connecting a vertex to the outside of the grid.

Fix a positive integer $n$. All edges (both interior and boundary), can be decorated with one of $n$ labels $c_0,\ldots,c_{n-1}$, also called \emph{spins} or \emph{colors}. For brevity, we will usually refer to each color simply by its index $0,\ldots,n-1$. In some texts it is customary to distinguish no color or the uncolor with its own label; here we consider it to be same as any other color or label. 

The power of lattice models comes from considering the ``local'' assignments of spins around each individual vertex, and using this data to build up ``global'' statistics of the lattice model. Every vertex $\mathfrak{v}$, also called a \emph{rectangular vertex}, is surrounded by four edges, on the North, West, South, and East sides of the vertex. We call the edges to the North and West of the vertex \emph{incoming} edges, and the edges to the South and East of the vertex \emph{outgoing edges}.

We impose the following \emph{generalized ice rule} on the colors of these edges: \begin{align*}&\hspace{20pt}\textit{The number of incoming edges of color $i$ at $\mathfrak{v}$} \\&\textit{ must equal the number of outgoing edges of color $i$ at $\mathfrak{v}$.}\end{align*}

More concretely, the edges surrounding each rectangular vertex must match one of the configurations in Figure \ref{rect-vertices}. When the edges surrounding $\mathfrak{v}$ match one of these configurations, we call $\mathfrak{v}$ an \emph{admissible vertex}.

\begin{rmk}
The reason for the term \emph{generalized ice rule} is the connection between these lattice models and the classical \emph{six-vertex model} or \emph{ice model}. In the case $n=2$, let the label $c_0$ be written as $+$ and the label $c_1$ be written as $-$. Then Figure \ref{rect-vertices} consists of six admissible vertices, which match those of the classical six-vertex model (see \cite{Baxter-book}).

Further interpret each vertex as an oxygen atom. A $+$ spin on an incoming edge and a $-$ spin on an outgoing edge correspond to a hydrogen atom closely bonded to the oxygen atom, while the other spins correspond to a hydrogen atom weakly bonded to the oxygen atom. Then, the six admissible vertices are precisely the six ways of choosing two of the four hydrogen atoms to be closely bonded to the oxygen atom. The similarities between this set-up and the structure of square ice motivate the terms \emph{ice model} and \emph{ice rule}, which our rule generalizes.
\end{rmk}

\begin{figure}[h]
\centering
\scalebox{1.05}{$
\begin{array}{c@{\hspace{10pt}}c@{\hspace{10pt}}c}
\toprule
\vx{a_i} & \vx{b_{ij}} & \vx{c_{ij}} \\
\midrule
\begin{tikzpicture}
    \recvert{$i$}{$i$}{$i$}{$i$}
\end{tikzpicture}
%%%%%%%
&
\begin{tikzpicture}
    \recvert{$i$}{$j$}{$i$}{$j$}
\end{tikzpicture}
%%%%%%%
&
\begin{tikzpicture}
    \recvert{$i$}{$j$}{$j$}{$i$}
\end{tikzpicture}
%%%%%%%
\\

   \bottomrule
\end{array}$}
\caption{The admissible vertices for the $n$-color ice-type lattice model. Here, $i,j\in [0,n)$.}
    \label{rect-vertices}
\end{figure}

Next, we assign each configuration in Figure \ref{rect-vertices} a \emph{Boltzmann weight}. These Boltzmann weights can be elements of $\mathbb{C}$, or functions in some number of indeterminates. Boltzmann weights can depend on the position of a vertex in its lattice model, so in the most general setting, every vertex has its own set of Boltzmann weights. For our purpose, we only need two sets of weights (plus an additional set defined in the next subsection which we treat slightly differently): the \emph{$S$-weights}, denoted $a_i(S), b_{ij}(S), c_{ij}(S)$, and the \emph{$T$-weights}, denoted $a_i(T), b_{ij}(T), c_{ij}(T)$. Both $a_i(S)$ and $a_i(T)$ are Boltzmann weights for a vertex $\mathfrak{v}$ of configuration $\vx{a_i}$, but the former will be used when $\mathfrak{v}$ is associated to the $S$-weights, and the latter will be used when $\mathfrak{v}$ is associated to the $T$-weights.

Let the spins on the boundary edges of $\mathfrak{S}$ be fixed, and consider an assignment of spins to the interior edges. When every vertex in a lattice model $\mathfrak{S}$ is admissible, we say that the resulting global configuration $\mathfrak{s}$ is an \emph{admissible state}. The Boltzmann weight of an admissible state is simply the product of the Boltzmann weights of each of its vertices. Finally, define the \emph{partition function} $Z(\mathfrak{S})$ to be the sum of the Boltzmann weights of all admissible states: \[Z(\mathfrak{S}) := \sum_{\text{state } \mathfrak{s}} \wt(\mathfrak{s}) = \sum_{\text{state } \mathfrak{s}} \prod_{\text{vertex } \mathfrak{v}} \wt(\mathfrak{v}).\]

\subsection{The Yang-Baxter equation}

In this subsection, we review the Yang-Baxter equation, which is the main focus of this paper. Solutions to the Yang-Baxter equation are highly prized in the study of lattice models and other integrable systems. To describe the Yang-Baxter equation, we introduce a third set of Boltzmann weights, called the \emph{$R$-weights}. Instead of the rectangular vertices from Figure \ref{rect-vertices}, these weights are associated to the \emph{$R$-vertices} vertices displayed in Figure \ref{R-vertices}. The only different between the two is that the $R$-vertices are rotated $45^\circ$ counterclockwise. In fact it is possible to treat the $R,S$, and $T$ weights all on the same footing, but as our goal is to solve for the $R$-weights in terms of the $S$- and $T$-weights, the slightly asymmetrical rendering is apt. The $R$-vertices and $R$-weights use capital letters, and since there is no possibility of confusion, we will usually leave off the $R$ from the notation when writing specific $R$-weights (e.g. $A_i$ instead of $A_i(R)$).

\begin{figure}[h]
\centering
\scalebox{1.05}{$
\begin{array}{c@{\hspace{10pt}}c@{\hspace{10pt}}c}
\toprule
A_i & B_{ij} & C_{ij} \\
\midrule
\begin{tikzpicture}[scale=0.7]
    \diavert{$i$}{$i$}{$i$}{$i$}
\end{tikzpicture}
%%%%%%%
&
\begin{tikzpicture}[scale=0.7]
    \diavert{$i$}{$j$}{$i$}{$j$}
\end{tikzpicture}
%%%%%%%
&
\begin{tikzpicture}[scale=0.7]
    \diavert{$i$}{$j$}{$j$}{$i$}
\end{tikzpicture}
%%%%%%%
\\
   \bottomrule
\end{array}$}
\caption{The $R$-vertices for the $n$-color ice-type lattice model. Here, $i,j\in [1,n]$.}
    \label{R-vertices}
\end{figure}

The Yang-Baxter equation is the following equality of partition functions:

\begin{equation} \label{YBE-diagram}
    Z\left(
        \begin{tikzpicture}[scale=0.9,baseline=0.7cm]
            \lybd{E_1}{E_2}{E_3}{F_1}{F_2}{F_3}{}{}{}
        \end{tikzpicture}\right)
        \quad=\quad Z\left(\begin{tikzpicture}[scale=0.9,baseline=0.7cm]
            \rybd{E_1}{E_2}{E_3}{F_1}{F_2}{F_3}{}{}{}
        \end{tikzpicture}\right),
\end{equation}

for every choice of boundary conditions $E_1, E_2, E_3, F_1, F_2, F_3 \in [0,n)$. We call the diagram on the left of this equation a \emph{left Yang-Baxter diagram}, and the diagram on the right of this equation a \emph{right Yang-Baxter diagram}. Each choice of boundary conditions $E_1, E_2, E_3, F_1, F_2, F_3$ produces an equation describing a relationship between the Boltzmann weights of $R$, $S$, and $T$. This equation is equivalent to the vanishing of a polynomial that we call a \emph{Yang-Baxter polynomial}. Many Yang-Baxter polynomials are trivial, either because both Yang-Baxter diagrams have no admissible states, or because their partition functions are manifestly equal. However, certain boundary conditions lead to nontrivial relations. We enumerate the resulting polynomials in the next section.

\begin{eg}
We construct the Yang-Baxter polynomial $X^{010}_{001}$, defined in the next section. This polynomials corresponds to the following equality of partition functions:

\begin{equation*}
    Z\left(
        \begin{tikzpicture}[scale=0.85,baseline=0.7cm]
            \lybd{0}{1}{0}{0}{0}{1}{}{}{}
        \end{tikzpicture}\right)
        \quad=\quad Z\left(\begin{tikzpicture}[scale=0.85,baseline=0.7cm]
            \rybd{0}{1}{0}{0}{0}{1}{}{}{}
        \end{tikzpicture}\right),
\end{equation*}
The left hand Yang-Baxter diagram has two admissible states, whereas the right only has one admissible state. Expanding both partition functions as sums over their states:
\begin{equation*}
    \text{wt}\left(
        \begin{tikzpicture}[scale=0.625,baseline=0.7cm]
            \lybd{0}{1}{0}{0}{0}{1}{1}{1}{0}
        \end{tikzpicture}\right)
        \quad+\quad 
        \text{wt}\left(\begin{tikzpicture}[scale=0.625,baseline=0.7cm]
            \lybd{0}{1}{0}{0}{0}{1}{0}{0}{1}
        \end{tikzpicture}\right)
        \quad=\quad \text{wt}\left(\begin{tikzpicture}[scale=0.625,baseline=0.7cm]
            \rybd{0}{1}{0}{0}{0}{1}{1}{0}{0}
        \end{tikzpicture}\right),
\end{equation*}

and writing each state weight as a product of its vertex weights gives \[B_{01} a_{0}(S) c_{10}(T) + C_{01} b_{01}(T) c_{10}(S) = A_{0} b_{01}(S) c_{10}(T).\] Moving every term onto one side of the equation, the vanishing of the Yang-Baxter polynomial \[X^{010}_{001} = B_{01} a_{0}(S) c_{10}(T) + C_{01} b_{01}(T) c_{10}(S) - A_{0} b_{01}(S) c_{10}(T)\] is equivalent to the partition function equality with these boundary conditions.
\end{eg}

Notice that the polynomial $X^{010}_{001}$ in the previous example enjoys a nice structure, shared by all Yang-Baxter polynomials. It is homogeneous of degree 3; moreover, it is homogeneous of degree 1 in each set ($R,S,T$) of Boltzmann weights. Therefore, given a choice of $R$-weights causing $X^{010}_{001}$ to vanish, any scalar multiple of these weights will do the same, and setting all $R$-weights to zero always causes $X^{010}_{001}$ to vanish.

Because the terms are homogeneous of degree 1, we  also know every Yang-Baxter polynomial will have zero as a solution. As such, we are interested in the nonzero solutions.

Here is a much more trivial example.

\begin{eg}
Let $i,j\in [0,n)$ be distinct. Consider the following equality of partition functions:

\begin{equation*}
    Z\left(
        \begin{tikzpicture}[scale=0.85,baseline=0.7cm]
            \lybd{i}{i}{j}{i}{j}{j}{}{}{}
        \end{tikzpicture}\right)
        \quad=\quad Z\left(\begin{tikzpicture}[scale=0.85,baseline=0.7cm]
            \rybd{i}{j}{j}{i}{j}{j}{}{}{}
        \end{tikzpicture}\right),
\end{equation*}

Both the left and right partition functions must be zero, as the boundary conditions do not satisfy the generalized ice rule; thus the equation is trivially satisfied.
\end{eg}

This leads to the following simple result.

\begin{prop} \label{conserveCor}
Given $E_1, E_2, E_3, F_1, F_2, F_3 \in [0, n)$, the equation (\ref{YBE-diagram}) with these boundary conditions is identically zero (and thus trivially satisfied) unless $\{E_1, E_2, E_3\}$ and $\{F_1, F_2, F_3\}$ are equal as multisets.\end{prop}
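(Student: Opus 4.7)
The plan is to derive a global color-conservation law on the Yang-Baxter diagram by summing the generalized ice rule over its three internal vertices. The local ice rule says that at each vertex the multiset of colors on incoming edges equals the multiset of colors on outgoing edges. Each interior edge of the Yang-Baxter diagram serves as an outgoing edge at exactly one vertex and as an incoming edge at exactly one other vertex, so when we add the three local multiset identities, every interior edge contributes once to each side and cancels, leaving only boundary edges.

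First I would pin down the boundary orientations. For the rectangular $S$- and $T$-vertices the incoming edges sit on the north and west and the outgoing edges sit on the south and east; the $R$-vertex inherits this orientation from its $45^\circ$ rotation. A direct inspection of the left and right diagrams in (\ref{YBE-diagram}) shows that exactly three of the six boundary edges act as global inputs of the Yang-Baxter diagram while the other three act as global outputs, and that this partition matches $\{E_1,E_2,E_3\}$ versus $\{F_1,F_2,F_3\}$ (the same grouping arises for both the left and right diagrams).

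Combining these two observations, every admissible state on either side of (\ref{YBE-diagram}) forces $\{E_1,E_2,E_3\}=\{F_1,F_2,F_3\}$ as multisets of colors. Contrapositively, if the two multisets differ, neither Yang-Baxter diagram admits any state, so each partition function is an empty sum, and (\ref{YBE-diagram}) reduces to the trivially true identity $0=0$.

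The only real subtlety is the orientation bookkeeping on the boundary: one must check that the same input/output partition $\{E_i\}$ versus $\{F_j\}$ is obtained from both the left and right diagrams, so that the conservation law extracted from each side gives the same multiset equality. This is a direct picture check rather than a calculation, and once it is made the rest is automatic.
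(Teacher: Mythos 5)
Your proof is correct and follows the same idea the paper relies on: the paper states this proposition as an immediate consequence of the generalized ice rule (illustrated by the preceding example) without writing out a formal argument, and your summation-over-vertices conservation argument with cancellation of interior edges is precisely the formalization of that reasoning. The orientation bookkeeping you flag is indeed the only detail to check, and it works out as you describe.
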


\section{Enumeration of Yang-Baxter polynomials} \label{enumeration-section}

Our objective is to find conditions on the $S$ and $T$ weights such that the Yang-Baxter equation is satisfied, and then parameterize the resulting $R$-weights in terms of the $S$- and $T$-weights. Let $K$ be a field. Throughout the rest of the paper, we make the following assumption: \[\text{\emph{Every $S$ and $T$ weight is an element of $K^\times$.}}\] This nonzeroness assumption is vital for our results, as we will often need to invert the $S$ and $T$ Boltzmann weights. Therefore, we leave out some important special cases, such as the five-vertex model. However, the nonzeroness assumption only restricts us to a dense open set of the full choice of weights, and for other lattice models, it has often been true that taking a judicious limit allows one to consider to consider $S$ and $T$ vertices of weight zero. We will \emph{not} make any assumptions on the $R$-weights.

In this section, we will make explicit the equations that constitute the realization of the Yang-Baxter equation in the $n$-label ice-type model. To do so, we need to determine the number of admissible vertices in an $n$-label lattice model. As shown in Figure \ref{R-vertices}, $A$-vertices have exactly one label and $B$ and $C$-vertices have exactly two labels. So there are $n$ possible $A_i$ vertices, $2\binom{n}{2}$ possible $B_{i,j}$ vertices, and $2\binom{n}{2}$ possible $C_{i,j}$ vertices ($n(2n-1)$ in total).

Using these vertices, we can construct the left and right Yang-Baxter diagrams for each boundary condition. As described in the previous section, from these diagrams the Yang-Baxter equation induces a set of polynomials where the equation is satisfied if and only if each polynomial is identically zero. We seek to enumerate all such polynomials.
\begin{defn} \label{r_notation}
    For the boundary conditions $E_1,E_2,E_3,F_1,F_2,F_3$, we use $L^{E_1,E_2,E_3}_{F_1,F_2,F_3}$ to denote the left Yang-Baxter diagram and $R^{E_1,E_2,E_3}_{F_1,F_2,F_3}$ to denote the right Yang-Baxter diagram. The associated {\it Yang-Baxter polynomial} is: \[X^{E_1,E_2,E_3}_{F_1,F_2,F_3} := Z(L^{E_1,E_2,E_3}_{F_1,F_2,F_3}) - Z(R^{E_1,E_2,E_3}_{F_1,F_2,F_3})\]
\end{defn}

Since $n$ is finite, there are finitely many sets of boundary conditions of Yang-Baxter diagrams for a given $n$. We may define an equivalence relation on these boundary conditions. Two tuples of boundary conditions are ``permutation equivalent'' when some permutation of the label set $[0,n)$ takes one to the other. Since each tuple of boundary conditions corresponds to a Yang-Baxter polynomial, the equivalence on Yang-Baxter diagrams gives a natural equivalence relation on the set of polynomials. Explicitly,
\begin{defn} \label{YBPequiv} We say two equations are \textit{permutation equivalent} and write
    $X^{E_1,E_2,E_3}_{F_1,F_2,F_3} \equiv X^{e_1,e_2,e_3}_{f_1,f_2,f_3}$ if there exists some permutation $\sigma\in S_n$ acting on the set of labels $[0,n)$ such that $E_i=\sigma(e_i)$ and $F_i=\sigma(f_i)$ for $i=1,2,3$.
\end{defn}

Based on the above notation, two such polynomials are in the same equivalence class if and only if they are identical up to relabeling. In this sense, we may generate all our polynomials by such relabelings on a set of representatives of equivalence classes. In is \emph{not} the case that permutation equivalent polynomials are equal, but their structure is similar.

By Proposition \ref{conserveCor}, each label $E_i$ in the entrance set must equal some label $F_j$ in the exit set. This means that any set of boundary conditions with four or more distinct labels is trivial. Thus, all potentially nontrivial Yang-Baxter polynomials may be enumerated (up to permutation equivalence) by the cases where there are 1, 2, or 3 distinct labels in the corresponding boundary conditions. We do so in the \ref{Appendix} and see that there are $\binom{3}{0}^2+\binom{3}{1}^2+\binom{3}{3}3! = 16$ equivalence classes of polynomials.

\begin{eg}
Explicit Yang-Baxter diagrams and corresponding Yang-Baxter polynomials' equivalence classes can be found in the \ref{Appendix}.
\end{eg}

The enumeration in the previous example shows there is one equivalence class with only one distinct label, 9 with 2, and 6 with 3. Naively counting all these polynomials yields $1 \binom{n}{1} + 9 \binom{n}{2} + 6 \binom{n}{3}$ that need to be satisfied. Fortunately, we can throw out vacuous cases where $Z(L^{E_1,E_2,E_3}_{F_1,F_2,F_3}) = Z(R^{E_1,E_2,E_3}_{F_1,F_2,F_3})$ and $X^{E_1,E_2,E_3}_{F_1,F_2,F_3} = 0$. Further, we use our equivalence from Definition \ref{YBPequiv} to represent these by a fixed subset. This brings us to the following result:
\begin{prop} \label{genEnum}
    For the n-label lattice model, there are exactly $5n^3-8n^2+3n$ nonzero Yang-Baxter polynomials. Specifically, these polynomials are:
    $$ X^{iij}_{iji},\ X^{iij}_{jii},\ X^{iji}_{iij},\ X^{iji}_{iji},\ X^{iji}_{jii},\ X^{ijj}_{jij},\ X^{ijj}_{jji},\ X^{ijk}_{ikj},\ X^{ijk}_{jik},\ X^{ijk}_{kij},\ X^{ijk}_{jki},\ X^{ijk}_{kji} $$
    for any choice of distinct labels $i$, $j$, $k$ from $[0,n)$.
\end{prop}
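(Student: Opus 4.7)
My approach would be exhaustive enumeration of equivalence classes of boundary conditions, matched against a case-by-case check of vacuity. By Proposition~\ref{conserveCor}, any potentially nonzero Yang-Baxter polynomial $X^{E_1,E_2,E_3}_{F_1,F_2,F_3}$ must have $\{E_1,E_2,E_3\}$ and $\{F_1,F_2,F_3\}$ equal as multisets, so at most three distinct labels can appear among the six boundary edges. I would stratify the analysis by the number $d\in\{1,2,3\}$ of distinct labels.

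For the enumeration, I would count equivalence classes in each stratum. When $d=1$, the only class is represented by $X^{iii}_{iii}$. When $d=2$, after choosing $i$ as the majority label, the top and bottom each admit $\binom{3}{1}=3$ arrangements of the multiset $\{i,i,j\}$, giving $3\cdot 3 = 9$ classes. When $d=3$, the top may be standardized to $(i,j,k)$ and the bottom to any of the $3!=6$ permutations, giving $6$ classes---for a total of $16$ equivalence classes, matching the count in Appendix~\ref{Appendix}.

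Next, for each of these $16$ classes I would enumerate admissible internal edge configurations (subject to the generalized ice rule) for both the left and right Yang-Baxter diagrams, expand each partition function as a sum of vertex-weight products, and check whether $X^{\ldots}_{\ldots}$ is identically zero. I expect the vacuous classes to be exactly the four ``identity-scattering'' configurations in which top and bottom agree position-by-position with no genuine path crossing: $X^{iii}_{iii}$, $X^{iij}_{iij}$, $X^{jii}_{jii}$, and $X^{ijk}_{ijk}$. In each of these, the ice rule forces a unique admissible state on each side whose weight is a single monomial that agrees between the two diagrams. Notably, the remaining diagonal class $X^{iji}_{iji}$ is \emph{not} vacuous, because the middle edge admits multiple consistent internal configurations, producing a nontrivial polynomial relation. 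This leaves exactly $7$ non-vacuous $d=2$ classes and $5$ non-vacuous $d=3$ classes, precisely the representatives listed in the statement.

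Finally, I would pass from equivalence classes to polynomials: each class with $d$ distinct labels is an $S_n$-orbit indexed by ordered $d$-tuples of distinct labels from $[0,n)$, contributing $n(n-1)\cdots(n-d+1)$ polynomials. Summing over the twelve non-vacuous classes gives
\[
  7\,n(n-1) + 5\,n(n-1)(n-2) \;=\; n(n-1)\bigl[7 + 5(n-2)\bigr] \;=\; 5n^3 - 8n^2 + 3n,
\]
as desired. The principal obstacle is the third step: one must verify, for each of the sixteen representatives, that the admissible internal configurations on the two sides either cancel exactly (in the four identity classes) or leave uncancelled terms (in the twelve listed classes). This is a finite but fiddly case analysis, and the explicit tabulation of Yang-Baxter diagrams and polynomials in Appendix~\ref{Appendix} is the workhorse that closes the argument.
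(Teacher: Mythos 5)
Your proposal is correct and follows essentially the same route as the paper: reduce via Proposition~\ref{conserveCor} to at most three distinct labels, enumerate the $16$ permutation-equivalence classes of boundary conditions, identify the four vacuous ``diagonal'' classes by the case-by-case tabulation of admissible states (the content of the Appendix), and count ordered tuples to get $7n(n-1)+5n(n-1)(n-2)=5n^3-8n^2+3n$. The only cosmetic difference is that the paper counts the one- and two-label classes together by flipping three characters of the string $000111$ and quotienting by the label swap ($\binom{6}{3}/2=10$), whereas you count them separately as $1+3\cdot 3$; both give the same classes and the same final answer.
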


\begin{proof}
    As mentioned before, we use Proposition \ref{conserveCor} to see that we need only inspect cases with 1, 2, or 3 distinct labels on the boundary. First consider the cases with 1 or 2 labels together and say that they are 0 and 1. For a set of entry and exit conditions $(e_1,e_2,e_3)$ and $(f_1,f_2,f_3)$, $(f_1,f_2,f_3)$ is a permutation of $(e_1,e_2,e_3)$ by Proposition \ref{conserveCor}. Represent the boundary conditions as a string $e_1e_2e_3f_1f_2f_3$. Then, starting with the string $s=000111$, any such string is a unique selection of three characters in $s$ where a selected character is changed from 0 to 1 or 1 to 0. There are $\binom{6}{3}=20$ ways to do so. Among the boundary conditions generated, those pairs in which swapping the places of 0 and 1 in one gives the other are permutation equivalent. These are actually the only equivalences in this set, so there are exactly 10 classes of conditions. In the \ref{Appendix}, we calculate their Yang-Baxter polynomials. The nontrivial polynomials are precisely those we have listed.

    In the case where there are exactly three labels on the boundary, we exactly describe the boundary conditions satisfying Proposition \ref{conserveCor} as those where the exit set is a permutation of the entry set. So $e_1e_2e_3f_1f_2f_3=e_1e_2e_3\sigma(e_1e_2e_3)$ for some permutation $\sigma$. On a set of size 3, there are 6 distinct permutations: again, we calculate their Yang-Baxter polynomials in the \ref{Appendix} and list those that are nontrivial.
\end{proof}

Essential to the rest of our work is the fact that, when we analyze the $n$-label case, we need only examine 3-label ``subsystems'' and the compatibilities between them. A 3-label subsystem refers some to choice of three colors and all the Yang-Baxter polynomials with those colors. For labels 0, 1, and 2, the next corollary describes the 3-label case ($n=3$). 

\begin{cor} \label{enumeration}
For $n = 3$, Proposition \ref{genEnum} gives us 72 polynomials in the 15 $R$-vertex weights that must simultaneously vanish. Up to permutation of $\{0,1,2\}$, they are:
    \begin{equation}
        \begin{aligned}
        X^{001}_{010} = Y_{1}(0,1) := & A_{0} b_{01}(S) c_{01}(T) - B_{01} a_{0}(S) c_{01}(T) - C_{10} b_{01}(T) c_{01}(S) \\
        X^{001}_{100} = Y_2(0,1) := & A_{0} a_{0}(T) c_{01}(S) - B_{10} b_{01}(T) c_{01}(S) - C_{01} a_{0}(S) c_{01}(T) \\
        X^{010}_{001} = Y_3(0,1) := & B_{01} a_{0}(S) c_{10}(T) + C_{01} b_{01}(T) c_{10}(S) - A_{0} b_{01}(S) c_{10}(T) \\
        X^{010}_{010} = Y_4(0,1) := & C_{01} c_{01}(T) c_{10}(S) - C_{10} c_{01}(S) c_{10}(T) \\
        X^{010}_{100} = Y_5(0,1) := & C_{01} a_{0}(T) b_{10}(S) - B_{10} c_{01}(S) c_{10}(T) - C_{01} a_{0}(S) b_{10}(T) \\
        X^{011}_{101} = Y_6(0,1) := & B_{01} c_{01}(S) c_{10}(T) + C_{01} a_{1}(S) b_{01}(T) - C_{01} a_{1}(T) b_{01}(S) \\
        X^{011}_{110} = Y_7(0,1) := & B_{01} b_{10}(T) c_{01}(S) + C_{01} a_{1}(S) c_{01}(T) - A_{1} a_{1}(T) c_{01}(S) \\
        X^{012}_{021} = Y_8(0,1,2) := & B_{01} b_{02}(S) c_{12}(T) - B_{02} b_{01}(S) c_{12}(T) \\
        X^{012}_{102} = Y_9(0,1,2) := & C_{01} b_{12}(S) b_{02}(T) - C_{01} b_{02}(S) b_{12}(T) \\
        X^{012}_{201} = Y_{10}(0,1,2) := & C_{01} c_{12}(S) b_{01}(T) + B_{01} c_{02}(S) c_{10}(T) - C_{02} b_{01}(S) c_{12}(T) \\
        X^{012}_{120} = Y_{11}(0,1,2) := & C_{01} b_{12}(S) c_{02}(T) - C_{21} c_{02}(S) b_{12}(T) - B_{12} c_{01}(S) c_{12}(T) \\
        X^{012}_{210} = Y_{12}(0,1,2) := & C_{01} c_{12}(S) c_{01}(T) + B_{01} c_{02}(S) b_{10}(T) \\&\hspace{85pt}- C_{12} c_{01}(S) c_{12}(T) - B_{21} c_{02}(S) b_{12}(T)
        \end{aligned} \label{Yang-Baxter-polynomials}
    \end{equation}
\end{cor}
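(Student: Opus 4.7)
The plan is to combine a direct count from Proposition~\ref{genEnum} with an explicit enumeration of admissible states for each of the twelve representative Yang-Baxter diagrams. Substituting $n=3$ into the formula $5n^3-8n^2+3n$ gives $135-72+9=72$, establishing the count. Proposition~\ref{genEnum} groups these polynomials into twelve permutation-equivalence classes: seven two-label classes $X^{iij}_{iji}, X^{iij}_{jii}, X^{iji}_{iij}, X^{iji}_{iji}, X^{iji}_{jii}, X^{ijj}_{jij}, X^{ijj}_{jji}$ indexed by ordered pairs $(i,j)$, and five three-label classes $X^{ijk}_{ikj}, X^{ijk}_{jik}, X^{ijk}_{kij}, X^{ijk}_{jki}, X^{ijk}_{kji}$ indexed by ordered triples $(i,j,k)$. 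The corollary fixes one representative from each class by taking $(i,j)=(0,1)$ or $(i,j,k)=(0,1,2)$, labeling them $Y_1(0,1), \ldots, Y_{12}(0,1,2)$. Since $n=3$ admits six ordered pairs and six ordered triples from $\{0,1,2\}$, instantiating the templates gives exactly $7\cdot 6+5\cdot 6=72$ polynomials, reconfirming the count.

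To obtain the explicit expression for each $Y_m$, I would work directly from the diagrams of Definition~\ref{r_notation}. For each representative boundary $(E_1 E_2 E_3; F_1 F_2 F_3)$, I enumerate the admissible assignments to the three interior edges of $L^{E_1 E_2 E_3}_{F_1 F_2 F_3}$ and $R^{E_1 E_2 E_3}_{F_1 F_2 F_3}$ by applying the generalized ice rule at each of the three vertices. The weight of an admissible state is the product of one $R$-weight drawn from $\{A_i, B_{ij}, C_{ij}\}$ together with one $S$-weight and one $T$-weight drawn from $\{a_i, b_{ij}, c_{ij}\}$, determined entirely by the local vertex configurations. The polynomial $X^{E_1 E_2 E_3}_{F_1 F_2 F_3}=Z(L^{E_1 E_2 E_3}_{F_1 F_2 F_3})-Z(R^{E_1 E_2 E_3}_{F_1 F_2 F_3})$ then follows as a sum of at most a handful of monomials. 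The worked example $X^{010}_{001}=B_{01}a_0(S)c_{10}(T)+C_{01}b_{01}(T)c_{10}(S)-A_0 b_{01}(S)c_{10}(T)=Y_3(0,1)$ given just before Proposition~\ref{conserveCor} illustrates the procedure in a single case.

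The main obstacle is organizational rather than conceptual. For several of the twelve boundaries the left or right diagram admits more than one admissible interior state, and one must exhaust all possibilities without duplication. Two invariants keep the bookkeeping manageable: each $Y_m$ is tri-homogeneous of degree $(1,1,1)$ in the $R$, $S$, and $T$ weights, so every surviving monomial contains exactly one weight from each family; and by Proposition~\ref{conserveCor} the interior labels are tightly constrained by the boundary, typically leaving only one or two degrees of freedom at a given interior edge. Applying this procedure to each of the twelve representatives and simplifying then yields the polynomials in~\eqref{Yang-Baxter-polynomials}.
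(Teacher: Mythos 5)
Your proposal is correct and follows essentially the same route as the paper: the count $5n^3-8n^2+3n=72$ at $n=3$ (equivalently $7\cdot 6 + 5\cdot 6$ over ordered pairs and triples), together with a diagram-by-diagram enumeration of admissible interior states for the twelve nontrivial representatives, which is precisely what the paper carries out in its Appendix. The arithmetic and the classification into seven two-label and five three-label classes both check out against Proposition~\ref{genEnum}.
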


We will use the notation $Y_m(i,j)$ and $Y_m(i,j,k)$ more generally to represent the polynomials (\ref{Yang-Baxter-polynomials}) with $0, 1$, and $2$ replaced by $i,j$, and $k$ respectively. We avoid confusion by adopting the convention that any $Y_m(i,j), Y_m(i,j,k)$, or indeed any quantity, with repeated colors is taken to be zero.

\section{Solving the 3-label Yang-Baxter equation} \label{3-label-section}

The key to solving the general $n$-color Yang-Baxter equation is the case $n=3$. The reason for this is that the polynomials $Y_m, 1\le m\le 12$ depend on at most three labels. Thus, as we show in Section \ref{n-label-section}, solutions to the Yang-Baxter equation for $n$-color ice-type lattice model systems can be expressed in terms of its 3-color subsystems.

In fact, looking at the Yang-Baxter polynomials using subsets of the label set will prove useful throughout. Define the following sets of polynomials: \[P_{ij} := \{Y_m(i,j)\;|\; 1\le m\le 7\}, \hspace{20pt} P_{\{i,j\}} := P_{ij}\cup P_{ji} \hspace{20pt} i,j\in [0,n),\] \[P_{ijk} := \{Y_m(i,j,k) \;|\; 8\le m\le 12\}, \hspace{20pt} P_{\{i,j,k\}} := \bigcup_{i',j',k'\in \{i,j,k\}} P_{i'j'k'}, \hspace{20pt} i,j,k\in [0,n),\] \[Q_I := \left(\bigcup_{i,j\in I} P_{ij}\right) \cup \left(\bigcup_{i,j,k\in I} P_{ijk}\right), \hspace{20pt} I\subset [0,n).\] In the sets $P_{ij}$ and $P_{ijk}$, the order of $i,j$, and $k$ matters: they consist of all relevant Yang-Baxter polynomials in \ref{Yang-Baxter-polynomials} where $0$ must be replaced with $i$, $1$ with $j$, and $2$ with $k$. By contrast, $Q_I$ consists of all Yang-Baxter polynomials with labels in $I$, so that $Q := Q_{[0,n)}$ is the full set of nonzero Yang-Baxter polynomials given by Proposition \ref{genEnum}.

\begin{eg} \label{poly-set-example}
Consider the case $n=2$. Since we only have two labels, 0 and 1, none of the three-label polynomials $Y_m$ with $m\ge 8$ will appear. We have \[P_{01} = \{Y_1(0,1), Y_2(0,1), Y_3(0,1), Y_4(0,1), Y_5(0,1), Y_6(0,1), Y_7(0,1)\}\] and \[P_{10} = \{Y_1(1,0), Y_2(1,0), Y_3(1,0), Y_4(1,0), Y_5(1,0), Y_6(1,0), Y_7(1,0)\}.\] Meanwhile, \[Q_{\{0,1\}} = P_{\{0,1\}} = P_{01} \cup P_{10}\] is the full set of all 14 polynomials that arise from the 2-color Yang-Baxter equation, and $P_{\{0,1,2\}}=\emptyset$.
\end{eg}

Let $d = n(2n-1)$ be the number of $R$-vertices in the $n$-color ice-type lattice model. Assuming the $S$ and $T$ weights to be fixed, a solution to the Yang-Baxter equation is an element of $K^d$, where $K$ is our field. Given a set $J\subset Q$ of Yang-Baxter polynomials, let \[V(J) = \{A_i,B_{ij},C_{ij}, i,j\in [0,n) \;|\; p(A_i,B_{ij},C_{ij})=0 \text{ for all } p\in J\} \subset{K^d},\] and abbreviate $V_{ij} := V(P_{ij}), V_{ijk} := V(P_{ijk}), V_I := V(Q_I)$. In general, $V(J)$ is the set of all possible sets of $R$-weights such that the equations corresponding to elements of $J$ are satisfied. Elements of $V(Q)$ are therefore solutions to the (full) $n$-color Yang-Baxter equation. Since the Yang-Baxter polynomials are homogeneous of degree 1 in the R-weights, the zero solution (where all the R-weights are zero) is always an element of $V(J)$. Thus, we will say that $V(J)$ is \emph{nonzero} if it contains any element other than the zero solution.

\begin{rmk} \label{notation-remark}
We will abuse notation in one important way: since the polynomials in $P_{ij}$ only involve labels $i$ and $j$, the solution set $V_{ij}$ will have no restrictions on any R-vertex whose labels are not both in $\{i,j\}$. Therefore, we'll often consider elements of $V_{ij}$ to be choices only of the Boltzmann weights $\{A_i,A_j, B_{ij}, B_{ji}, C_{ij}, C_{ji}\}$, and therefore elements of $K^6$, even though technically they are elements of $K^d$. The other cases, $V_{ijk}$ and $V_I$, will be treated similarly. This abuse of notation allows us to identify $V_I$ where $|I|=n'<n$ with the solutions of the $n'$-color Yang-Baxter equation.
\end{rmk}

The reader may notice that $V(J)$ is indeed the variety associated to the ideal of polynomials generated by $J$. Aside from the suggestive use of notation, we won't need to pursue this angle, although geometric interpretations of the Yang-Baxter equation could be interesting.

\subsection{The 2-label subcase}

    We begin with the 2-label case, which is the classical six-vertex model. As we saw in Example \ref{poly-set-example}, $Q_{\{0,1\}}$ is the set $\{Y_m(0,1), Y_m(1,0) \;|\; 1\le m\le 7\}$, so the polynomials $Y_8,\ldots,Y_{12}$ do not affect this case. Brubaker, Bump and Friedberg \cite[Theorem~1]{BBF-Schur-polynomials} have given a combinatorial solution to the Yang-Baxter equation here, using ideas from Baxter \cite{Baxter-book}. However, our proof of this result will also be a first step towards the 3-color case, and also gives slightly refined information about which conditions follow from which pieces of the Yang-Baxter equation.
    
    Recall Baxter's invariant $\Delta_{ij}$ associated to a set of six-vertex weights which determines conditions for the solvability of the two-color model:
	$$ \Delta_{ij}(x) = \frac{a_{i}(x)a_j(x) + b_{ij}(x)b_{ji}(x) - c_{ij}(x)c_{ji}(x)}{a_{i}(x)b_{ij}(x)}$$
	for $x$ being either $S$ or $T$. The following results give a condition involving $\Delta_{01}$ for solution to the Yang-Baxter equation.
	
	\begin{prop} \label{deltaProp}
        The solution set $V_{01}$ is nonzero if and only if $\Delta_{01}(S) = \Delta_{01}(T)$. When this holds, the solution is unique up to scalar multiple (i.e. one-dimensional).
	\end{prop}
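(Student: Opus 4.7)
The plan is to treat the 14 Yang-Baxter polynomials making up $P_{\{0,1\}} = P_{01}\cup P_{10}$ as a homogeneous linear system in the six unknowns $A_0, A_1, B_{01}, B_{10}, C_{01}, C_{10}$, with coefficients in the (invertible) $S$- and $T$-weights. A nonzero $R$-solution exists iff the coefficient matrix has rank strictly less than six, and the uniqueness-up-to-scalar claim is the statement that the rank is then exactly five, so a single consistency condition must emerge from the system.

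The first step is to use $Y_4(0,1)$ (which uses no unknown other than $C_{01}, C_{10}$) to fix $C_{10} = C_{01}\cdot c_{01}(T) c_{10}(S)/\bigl(c_{01}(S) c_{10}(T)\bigr)$; note that the corresponding polynomial in $P_{10}$ gives the same relation. Next I would solve successively: $Y_1(0,1)$ and $Y_3(0,1)$ form a $2\times 2$ linear system in $A_0$ and $B_{01}$ whose determinant is a nonzero product of $S$- and $T$-weights, so it forces $A_0$ and $B_{01}$ to be explicit scalar multiples of $C_{01}$; then $Y_2(0,1)$ determines $B_{10}$ and $Y_7(0,1)$ determines $A_1$, each again as a scalar multiple of $C_{01}$.

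Having parametrized all six $R$-weights by $C_{01}$, I would substitute back into the two unused polynomials of $P_{01}$, namely $Y_5(0,1)$ and $Y_6(0,1)$, and into all seven members of $P_{10}$. Each becomes $C_{01}$ times a rational expression in the $S,T$-weights, and for a nonzero solution all nine expressions must vanish. The crux of the proof is to simplify these nine consistency conditions and recognize that, after clearing denominators, every one of them is a nonzero scalar (in $S,T$-weights) times the single relation
\[
 a_0(S) a_1(S) b_{01}(T)\bigl(a_0(T) a_1(T) + b_{01}(T) b_{10}(T) - c_{01}(T) c_{10}(T)\bigr)
 \;-\; a_0(T) a_1(T) b_{01}(S)\bigl(a_0(S) a_1(S) + b_{01}(S) b_{10}(S) - c_{01}(S) c_{10}(S)\bigr) = 0,
\]
which, dividing through by $a_0(S) a_1(S) b_{01}(T) a_0(T) a_1(T) b_{01}(S)$ (nonzero by assumption), is exactly $\Delta_{01}(S) = \Delta_{01}(T)$. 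Conversely, when this condition holds, the parametrization in terms of $C_{01}$ yields a one-parameter family of solutions.

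The main obstacle is the last step: pattern-matching each of the nine residual expressions to Baxter's $\Delta$-form is unavoidable algebra, and one has to be careful that the $P_{10}$ conditions, which look like they should produce $\Delta_{10}$ rather than $\Delta_{01}$, in fact give an equivalent relation (the ratio $\Delta_{01}/\Delta_{10}$ is a ratio of $R$-weight-free $S,T$ expressions, which cancels out of the vanishing condition). The symmetry under $0\leftrightarrow 1$ cuts the work roughly in half, and the already-derived parametrization of the $R$-weights in terms of $C_{01}$ immediately establishes the uniqueness-up-to-scalar claim once the single consistency condition is verified.
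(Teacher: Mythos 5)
There are two genuine gaps here. First, you are solving the wrong system: the proposition concerns $V_{01}=V(P_{01})$, i.e.\ only the seven polynomials $Y_1(0,1),\dots,Y_7(0,1)$, not the fourteen polynomials of $P_{\{0,1\}}=P_{01}\cup P_{10}$. That distinction is the whole point of the paper's split between Proposition \ref{deltaProp} and Corollary \ref{deltaTranspose}: imposing all fourteen forces \emph{two} conditions, $\Delta_{01}(S)=\Delta_{01}(T)$ and $\Delta_{10}(S)=\Delta_{10}(T)$, and these are not equivalent in general. Your claim that the $P_{10}$ conditions collapse to the $\Delta_{01}$ relation because ``the ratio $\Delta_{01}/\Delta_{10}$ cancels'' is false: $\Delta_{01}(x)/\Delta_{10}(x)=a_1(x)b_{10}(x)/\bigl(a_0(x)b_{01}(x)\bigr)$ depends on $x\in\{S,T\}$, so $\Delta_{01}(S)=\Delta_{01}(T)$ does not imply $\Delta_{10}(S)=\Delta_{10}(T)$ unless the weights are field-free or $\Delta_{01}=0$. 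As written, your argument would ``prove'' that $V_{\{0,1\}}$ is nonzero under the single condition $\Delta_{01}(S)=\Delta_{01}(T)$, contradicting Corollary \ref{deltaTranspose}.

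Second, your elimination step fails concretely: $Y_1(0,1)$ and $Y_3(0,1)$ do \emph{not} form an invertible $2\times 2$ system in $(A_0,B_{01})$. Writing them as
\begin{align*}
\bigl(A_0 b_{01}(S)-B_{01}a_0(S)\bigr)c_{01}(T) &= C_{10}\,b_{01}(T)c_{01}(S),\\
\bigl(A_0 b_{01}(S)-B_{01}a_0(S)\bigr)c_{10}(T) &= C_{01}\,b_{01}(T)c_{10}(S),
\end{align*}
one sees both equations constrain only the single combination $A_0 b_{01}(S)-B_{01}a_0(S)$; the determinant of your proposed system is identically zero, and the compatibility of the two right-hand sides is exactly the relation $C_{10}=\tau_{01}C_{01}$ already supplied by $Y_4$. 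The paper avoids this by first using $Y_4,Y_5,Y_6$ to express $C_{10},B_{10},B_{01}$ in terms of $C_{01}$ (in particular $Y_6$, not $Y_1$ or $Y_3$, determines $B_{01}$), and then extracting three a priori different expressions for $A_0$ from $Y_1,Y_2,Y_3$; the single nontrivial consistency condition among those expressions is precisely $\Delta_{01}(S)=\Delta_{01}(T)$. Restricting to $P_{01}$ and reorganizing your elimination along those lines would repair the argument.
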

    
    This proposition involves the vanishing of seven Yang-Baxter polynomials: \begin{equation} Y_1(0,1) = Y_2(0,1) = Y_3(0,1) = Y_4(0,1) = Y_5(0,1) = Y_6(0,1) = Y_7(0,1) = 0. \label{two-color-necessary-equation}\end{equation} We can simplify these equations using the following quantities:
	
	\begin{equation} \tau_{ij} := \frac{c_{ij}(T)c_{ji}(S)}{c_{ij}(S)c_{ji}(T)}, \hspace{30pt} \beta_{ij} := \frac{a_j(T)b_{ij}(S)-a_j(S)b_{ij}(T)}{c_{ij}(S)c_{ji}(T)}, \hspace{30pt} i,j\in [0,n). \label{tau-beta-def}\end{equation}
	
	Note that $\tau_{ij}$ and $\beta_{ij}$ are always defined, since all $S$ and $T$-weights are nonzero. As we will see, these quantities come up frequently throughout our proof of the $n$-label case.
	
	\begin{proof}
    Using (\ref{tau-beta-def}), we can simplify the equations in (\ref{two-color-necessary-equation}). In particular, $Y_4(0,1)$, $Y_5(0,1)$, and $Y_6(0,1)$ may be rewritten as follows:
    \begin{align*}
        Y_4(0,1) &= c_{01}(S)c_{10}(T)\cdot\left(\frac{c_{01}(T)c_{10}(S)}{c_{01}(S)c_{10}(T)}C_{01} - C_{10}\right) = c_{01}(S)c_{10}(T)\cdot(\tau_{01}C_{01} - C_{10}) \\
        Y_5(0,1) &= c_{01}(S)c_{10}(T)\cdot\left(\frac{a_{0}(T)b_{10}(S) - a_{0}(S)b_{10}(T)}{c_{01}(S)c_{10}(T)}C_{01} - B_{10}\right) \\
        &= c_{01}(S)c_{10}(T)\cdot(\beta_{10}\tau_{01}C_{01} - B_{10}) \\
        Y_6(0,1) &= c_{01}(S)c_{10}(T)\cdot\left(B_{01} - \frac{ a_{1}(T)b_{01}(S) - a_{1}(S)c_{01}(T)}{c_{01}(S)c_{10}(T)}C_{01} \right) \\
        &= c_{01}(S)c_{10}(T)\cdot(B_{01} - \beta_{01}C_{01})
    \end{align*}
    
    Notice that they are all 0 if and only if $C_{10}=\tau_{01}C_{01}$, $B_{10}=\beta_{10}\tau_{01}C_{01}$ and $B_{01}=\beta_{01}C_{01}$. We can thus express $B_{01}, B_{10}$, and $C_{10}$ in terms of $C_{01}$. We can use this observation to obtain, from (\ref{two-color-necessary-equation}):
	\begin{align*}
		0=Y_1(0,1) &= A_{0}b_{01}(S)c_{01}(T) - C_{01}(\beta_{01}a_{0}(S)c_{01}(T) + \tau_{01}b_{01}(T)c_{01}(S)) \\
		0=Y_2(0,1) &= A_{0}a_{0}(T)c_{01}(S) - C_{01}(\beta_{10}\tau_{01}b_{01}(T)c_{01}(S) + a_{0}(S)c_{01}(T)) \\
		0=Y_3(0,1) &= A_{0}b_{01}(S)c_{10}(T) - C_{01}(\beta_{01}a_{0}(S)c_{10}(T) + b_{01}(T)c_{10}(S)) \\
		0=Y_7(0,1) &= A_{1}a_{1}(T)c_{01}(S) - C_{01}(\beta_{01}b_{10}(T)c_{01}(S) + a_{1}(S)c_{01}(T))
	\end{align*}
 
	Each of these equations expressed precisely one other R-weight in terms of $C_{01}$. Thus, given any choice of $C_{01} \ne 0$, we have uniquely determined the values of the remaining R-weights, assuming that the first three equations are consistent with each other. This requires that:
	\begin{align}
		A_0
		&= C_{01}(\beta_{01}a_{0}(S)c_{01}(T) + \tau_{01}b_{01}(T)c_{01}(S))\ /\ (b_{01}(S)c_{01}(T)) \label{a0-first-eqn} \\
		&= C_{01}(\beta_{10}\tau_{01}b_{01}(T)c_{01}(S) + a_{0}(S)c_{01}(T))\ /\ (a_{0}(T)c_{01}(S)) \label{a0-second-eqn} \\
		&= C_{01}(\beta_{01}a_{0}(S)c_{10}(T) + b_{01}(T)c_{10}(S))\ /\ (b_{01}(S)c_{10}(T)) \label{a0-third-eqn}
	\end{align}
 
	It remains to check that these three expressions are equal if and only if $\Delta_{01}(S) = \Delta_{01}(T)$. The expressions (\ref{a0-first-eqn}) and (\ref{a0-third-eqn}) are equal by the definition of $\tau_{01}$, so we need only consider the equality of (\ref{a0-second-eqn}) and (\ref{a0-third-eqn}). Using the definitions (\ref{tau-beta-def}), this equality becomes: \begin{align*}&\left(\frac{a_0(T)b_{10}(S) - a_0(S)b_{10}(T)}{c_{01}(S)c_{10}(T)} b_{01}(T)c_{01}(S) + a_{0}(S)c_{01}(T)\right) \cdot b_{01}(S)c_{10}(T) \\&= \left(\frac{a_1(T)b_{01}(S) - a_1(S)b_{01}(T)}{c_{01}(S)c_{10}(T)} a_{0}(S)c_{10}(T) + b_{01}(T)c_{10}(S)\right) \cdot a_{0}(T)c_{01}(S),\end{align*} which after simplification is precisely the condition $\Delta_{01}(S)=\Delta_{01}(T)$.
	\end{proof}

    \begin{cor}\cite[Theorem~1]{BBF-Schur-polynomials} \label{deltaTranspose}
        The solution set $V_{\{0,1\}}$ for the 2-color Yang-Baxter equation is nontrivial if and only if $\Delta_{01}(S) = \Delta_{01}(T)$ and $\Delta_{10}(S)=\Delta_{10}(T)$. When these conditions hold, $V_{\{0,1\}} = V_{01} = V_{10}$ is one-dimensional.
    \end{cor}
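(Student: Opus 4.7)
The plan is to deduce this corollary directly from Proposition \ref{deltaProp} together with the observation that $V_{\{0,1\}} = V_{01} \cap V_{10}$, where $V_{10}$ is obtained from $V_{01}$ by interchanging the roles of the two colors. Since the only Yang-Baxter polynomials with labels in $\{0,1\}$ come from $P_{01} \cup P_{10}$, this decomposition is immediate.

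For the forward direction, suppose $V_{\{0,1\}}$ contains a nonzero element. Then its projection to $V_{01}$ and to $V_{10}$ must each be nonzero. Applying Proposition \ref{deltaProp} to each index ordering then yields both $\Delta_{01}(S) = \Delta_{01}(T)$ and $\Delta_{10}(S) = \Delta_{10}(T)$.

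For the converse, assume both Baxter conditions hold. By Proposition \ref{deltaProp} and its analogue with $0$ and $1$ swapped, each of $V_{01}$ and $V_{10}$ is a one-dimensional line in $K^6$: the first is parametrized by $C_{01}$ with $C_{10} = \tau_{01} C_{01}$, $B_{01} = \beta_{01}C_{01}$, $B_{10} = \beta_{10}\tau_{01}C_{01}$, and $A_0, A_1$ determined by the formulas (\ref{a0-first-eqn})--(\ref{a0-third-eqn}) and by $Y_7(0,1)$; the second is parametrized symmetrically in terms of $C_{10}$. I would then verify these two parametrizations describe the same line. The relation $\tau_{01}\tau_{10} = 1$, immediate from the definition (\ref{tau-beta-def}), shows the $C_{01}$--$C_{10}$ relations agree. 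A short calculation with the definitions of $\beta_{ij}$ and $\tau_{ij}$ shows that the $B_{01}, B_{10}$ formulas from the two parametrizations also match (e.g., $\beta_{01} = \beta_{01}\tau_{10}\cdot \tau_{01}$). Finally, the two expressions for $A_0$ (one from $V_{01}$, one from the $Y_7(1,0)$ equation in $V_{10}$) coincide under the assumed Baxter identities, and similarly for $A_1$; this is the most calculation-heavy step but reduces algebraically to the same $\Delta$ identities already in force.

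Thus $V_{01}$ and $V_{10}$ are the same one-dimensional subspace of $K^6$, so $V_{\{0,1\}} = V_{01} = V_{10}$ is itself one-dimensional and nonzero. The main obstacle, as usual in these compatibility arguments, is the bookkeeping to show that the two independent parametrizations agree on all six Boltzmann weights; but once the $\tau$-symmetry $\tau_{01}\tau_{10}=1$ and both Baxter conditions are invoked, each identity falls out of the same manipulations used in the proof of Proposition \ref{deltaProp}.
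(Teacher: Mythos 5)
Your proposal is correct and follows essentially the same route as the paper: decompose $V_{\{0,1\}} = V_{01}\cap V_{10}$, apply Proposition \ref{deltaProp} to both orderings for necessity, and for sufficiency check that the two one-dimensional parametrizations agree on all six weights, with the key step being that the $A_0$ expression coming from $Y_7(1,0)=0$ matches (\ref{a0-second-eqn}) after substituting $C_{10}=\tau_{01}C_{01}$. No substantive differences to note.
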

    
    \begin{proof}
        As in Example \ref{poly-set-example}, $Q_{\{0,1\}} = P_{01}\cup P_{10}$, so $V_{\{0,1\}}=V_{01}\cap V_{10}$. By using Proposition \ref{deltaProp} twice, once where we swap the roles of 0 and 1, we see that these equations can only have a nonzero solution if $\Delta_{01}(S) = \Delta_{01}(T)$ and $\Delta_{10}(S) = \Delta_{10}(T)$. Fixing $C_{10}$, the solution given by each use of Proposition \ref{deltaProp} is unique, so to check that the intersection $V_{01}\cap V_{10}$ is nonzero we must show that these two solutions are identical. 

        It is straightforward to check, using the definitions of $\tau_{01}$ and $\tau_{10}$, that the values of $B_{01}$, $B_{10}$, and $C_{10}$ are consistent between the solutions in $V_{01}$ and $V_{10}$. Checking $A_0$ and $A_1$ are slightly more complicated. For $A_0$, we have three equivalent expressions (\ref{a0-first-eqn}, \ref{a0-second-eqn}, \ref{a0-third-eqn}) arising from $V_{01}$, and from $V_{10}$, we have a new expression for $A_0$ arising from the equation $Y_7(1,0)=0$: \begin{align} A_0 = (C_{10}\beta_{10}b_{01}(T)c_{10}(S) + C_{10}a_{0}(S)c_{10}(T))\ /\ (a_0(T)c_{10}(S)). \label{a0-fourth-equation}\end{align} Substituting the equation $C_{10} = \tau_{01}C_{01}$, we have 
        \begin{align*} A_0 &= C_{01}\tau_{01}\cdot(\beta_{10}b_{01}(T)c_{10}(S) + a_{0}(S)c_{10}(T))\ /\ (a_0(T)c_{10}(S)) \\&= C_{01}(\beta_{10}\tau_{01}b_{01}(T)c_{01}(S) + a_{0}(S)c_{01}(T))\ /\ (a_{0}(T)c_{01}(S)),\end{align*} and the right side matches the right side of (\ref{a0-second-eqn}).
        
        Therefore, $A_0$ has the same value in both solutions, and by a similar argument, $A_1$ does too. Thus, when $\Delta_{01}(S) = \Delta_{01}(T)$ and $\Delta_{10}(S) = \Delta_{10}(T)$, $V_{\{0,1\}}$ is nontrivial. In particular, since both $V_{01}$ and $V_{10}$ are one-dimensional, we have $V_{\{0,1\}} = V_{01} = V_{10}$.
    \end{proof}
    
    This observations in the previous proofs give an immediate simplification for the full $n$-color lattice model. 

    \begin{cor} \label{reduceCor}
        For general $n$, if for all distinct labels $i,j\in [0,n)$, $\Delta_{ij}(S)=\Delta_{ij}(T)$ and $Y_3(i,j)=0$, then also $Y_1(i,j)=Y_2(i,j)=Y_7(i,j)=0$ for all $i,j$.
    \end{cor}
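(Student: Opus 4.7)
The plan is to reduce the claim to the expressions for $A_i$ and $A_j$ worked out in the proof of Proposition~\ref{deltaProp}. Fix distinct labels $i,j$. From $Y_3(i,j) = 0$, solve for $A_i$ to obtain the analogue of equation (\ref{a0-third-eqn}); from $Y_3(j,i) = 0$, analogously extract $A_j$. In the proof of Proposition~\ref{deltaProp}, each of $Y_1(i,j), Y_2(i,j), Y_3(i,j), Y_7(i,j)$ was shown to produce an expression for $A_i$ or $A_j$ in terms of $C_{ij}$, and the expressions for $A_i$ (from $Y_1, Y_2, Y_3$) agree precisely when $\Delta_{ij}(S) = \Delta_{ij}(T)$. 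Under the hypotheses here, both $\Delta_{ij}(S) = \Delta_{ij}(T)$ and $\Delta_{ji}(S) = \Delta_{ji}(T)$ hold, so all the relevant expressions agree, and $Y_1(i,j), Y_2(i,j), Y_7(i,j)$ must vanish at the same weights.

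Carrying this out explicitly, the plan is to substitute the expression for $A_i$ obtained from $Y_3(i,j) = 0$ into $Y_1(i,j)$ and $Y_2(i,j)$, then expand using the definitions (\ref{tau-beta-def}) of $\tau_{ij}$ and $\beta_{ij}$. The substitution should collapse $Y_1(i,j)$ to the scalar multiple $(b_{ij}(T)/c_{ji}(T)) \cdot Y_4(i,j)$ of $Y_4(i,j) = C_{ij} c_{ij}(T) c_{ji}(S) - C_{ji} c_{ij}(S) c_{ji}(T)$, and it should reveal $Y_2(i,j)$ as a combination of $Y_5(i,j)$ with a term proportional to $\Delta_{ij}(S) - \Delta_{ij}(T)$. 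The $\Delta$-difference vanishes by hypothesis. For $Y_7(i,j)$, the same procedure applied with $i$ and $j$ swapped — using $Y_3(j,i) = 0$ to extract $A_j$ and $\Delta_{ji}(S) = \Delta_{ji}(T)$ for the consistency step — yields the analogous reduction.

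The main obstacle is that each of these reductions implicitly uses the mass-action parametrization $Y_4(i,j) = Y_5(i,j) = Y_6(i,j) = 0$ from Proposition~\ref{deltaProp}, which is not explicitly named in the hypothesis. In the ambient context of the main theorem the parametrization from Proposition~\ref{deltaProp} is understood to be in force, so the vanishing of $Y_4, Y_5, Y_6$ is available for use; making this setup explicit is the cleanest route, and once it is in place, the substitutions above finish the proof. Without invoking this context one only obtains the weaker conclusion that, modulo $Y_3(i,j)$, each of $Y_1, Y_2, Y_7$ lies in the ideal generated by $Y_4(i,j), Y_5(i,j), Y_6(i,j)$ and $\Delta_{ij}(S) - \Delta_{ij}(T)$, which is essentially the same content rephrased as a polynomial identity.
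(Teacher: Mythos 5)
Your proposal is correct and follows essentially the same route as the paper: both arguments compare the expressions for $A_i$ extracted from $Y_1(i,j)$, $Y_2(i,j)$, $Y_3(i,j)$, and $Y_7(j,i)$ in the proofs of Proposition~\ref{deltaProp} and Corollary~\ref{deltaTranspose}, and observe that they all coincide once $\Delta_{ij}(S)=\Delta_{ij}(T)$, so the vanishing of $Y_3$ forces the vanishing of the other three. Your remark that the reduction silently uses $Y_4(i,j)=Y_5(i,j)=Y_6(i,j)=0$ (to express $B_{ij}$, $B_{ji}$, $C_{ji}$ in terms of $C_{ij}$) is accurate --- the paper's own proof carries the same implicit hypothesis, since the displayed expressions for $A_i$ were derived under that parametrization --- and your explicit identity $Y_1(i,j)\equiv \bigl(b_{ij}(T)/c_{ji}(T)\bigr)\,Y_4(i,j) \pmod{Y_3(i,j)}$ checks out.
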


    \begin{proof}
        As we have shown in the proofs of Proposition \ref{deltaProp} and Corollary \ref{deltaTranspose}, for a given $i\in [0,n)$ the conditions $\Delta_{ij}(S) = \Delta_{ji}(T)$ for any $j\in [0,n)$ ensure that the expressions on the right sides of (\ref{a0-first-eqn}, \ref{a0-second-eqn}, \ref{a0-third-eqn}, \ref{a0-fourth-equation}), replacing $0$ with $i$ and $1$ with $j$, are all equivalent. These four expressions are the values of $A_i$ when $Y_1(i,j) = 0$, $Y_2(i,j) = 0$, $Y_3(i,j) = 0$, and $Y_7(j,i) = 0$, respectively, so the fact that $A_i$ must equal either all of them or none of them means either all four of $Y_1(i,j), Y_2(i,j), Y_3(i,j)$, and $Y_7(j,i)$ are zero, or all are nonzero.
    \end{proof}

    Moving forward, Corollary \ref{reduceCor} allows us to ignore $Y_1$, $Y_2$, and $Y_7$, since $Y_3=0$ ensures they are too.

\subsection{Solving the 3-label case}

    Broadening our view to the case $n=3$, there are three different 2-color subsystems, with solution sets $V_{\{0,1\}}$, $V_{\{0,2\}}$, and $V_{\{1,2\}}$ computed by Corollary \ref{deltaTranspose}. Immediately, we have \[V_{\{0,1,2\}} = V_{\{0,1\}}\cap V_{\{0,2\}}\cap V_{\{1,2\}}\cap V(P_{\{1,2,3\}})\] (see Remark \ref{notation-remark}). So that the three $V_{\{i,j\}}$ are nontrivial, we assume henceforth that $\Delta_{ij}(S) = \Delta_{ij}(T)$ for all $i,j\in [0,3)$.
    
    In this subsection, we will focus on necessary conditions for a solution to the 3-color Yang-Baxter equation, putting off work on sufficiency until the general $n$-color result. Based on our work in the previous subsection, we introduce another quantity:
    \begin{align}\alpha_{ij} := \frac{\beta_{ij}a_{i}(S)c_{ji}(T) + b_{ij}(T)c_{ji}(S)}{b_{ij}(S)c_{ji}(T)} \label{alpha-def}\end{align}

    Note that $Y_3(i,j)=0$ is equivalent to the equation $A_i = \alpha_{ij}C_{ij}$. Using Corollary \ref{deltaTranspose} allows for a simple parameterization of the 2-color solution $V_{\{i,j\}}$ in the parameter $C_{ij}$:
    \begin{equation} \begin{split} V_{\{i,j\}} = &\{A_i, A_j, B_{ij}, B_{ji}, C_{ij}, C_{ji} \\&\;|\; A_{i}=\alpha_{ij}C_{ij}, A_{j}=\alpha_{ji}\tau_{ij}C_{ij}, B_{ij}=\beta_{ij}C_{ij}, B_{ji}=\beta_{ji}\tau_{ij}C_{ij}, C_{ji}=\tau_{ij}C_{ij}\}. \end{split} \label{two-color-parametrization}\end{equation} Note that we also have the analogous parametrization in terms of $C_{ji}$. 

    Therefore, we may relate a pair of two-color solutions by comparing their parametrizations in terms of the same $C_{ij}$. For this, we define the quantity \begin{align}\gamma_{ij} := \frac{b_{ij}(T)c_{ji}(S)}{b_{ij}(S)c_{ji}(T)} = \alpha_{ij}-\frac{a_{i}(S)}{b_{ij}(S)}\beta_{ij} \label{gamma-def}\end{align}

    \begin{lem} \label{gammaLemma}
        In any solution to the 3-color Yang-Baxter equation, \begin{align}\gamma_{ij}C_{ij}=\gamma_{ik}C_{ik} \hspace{20pt} \text{for all} \hspace{20pt} i,j,k\in [0,3). \label{c-weights-transitive-equation}\end{align}
    \end{lem}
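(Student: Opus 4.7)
The plan is to rewrite $\gamma_{ij}C_{ij}$ in a form that immediately reveals its $j$-dependence as being controlled by the single ratio $B_{ij}/b_{ij}(S)$, and then to invoke $Y_8=0$, which is exactly the 3-label polynomial that forces this ratio to be independent of $j$.

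First I would dispose of the trivial cases: if any two of $i,j,k$ coincide, both sides of (\ref{c-weights-transitive-equation}) either vanish by the convention on repeated colors or are manifestly equal, so I may assume $i,j,k$ are pairwise distinct. Under the standing assumption that $\Delta_{ij}(S)=\Delta_{ij}(T)$ for all pairs, Corollary \ref{deltaTranspose} and the parametrization (\ref{two-color-parametrization}) hold for each $2$-color subsystem $V_{\{i,j\}}$, giving in particular $A_i=\alpha_{ij}C_{ij}$ and $B_{ij}=\beta_{ij}C_{ij}$.

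Next I would feed these into the identity $\gamma_{ij}=\alpha_{ij}-\dfrac{a_i(S)}{b_{ij}(S)}\beta_{ij}$ from (\ref{gamma-def}). Multiplying through by $C_{ij}$ and substituting the two relations from the previous paragraph gives the key reformulation
\[
\gamma_{ij}C_{ij} \;=\; A_i \;-\; \frac{a_i(S)}{b_{ij}(S)}\,B_{ij}.
\]
The same derivation with $j$ replaced by $k$ yields $\gamma_{ik}C_{ik}=A_i-\dfrac{a_i(S)}{b_{ik}(S)}B_{ik}$. Thus the desired equality (\ref{c-weights-transitive-equation}) reduces to
\[
\frac{B_{ij}}{b_{ij}(S)}\;=\;\frac{B_{ik}}{b_{ik}(S)}.
\]

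Finally I would invoke the three-label Yang-Baxter polynomial $Y_8(i,j,k)$ from Corollary \ref{enumeration}: since $Y_8(i,j,k)=B_{ij}b_{ik}(S)c_{jk}(T)-B_{ik}b_{ij}(S)c_{jk}(T)$ must vanish in any solution, and $c_{jk}(T)\neq 0$ by our standing nonzeroness hypothesis on $S$- and $T$-weights, dividing through by $c_{jk}(T)b_{ij}(S)b_{ik}(S)$ delivers exactly the required ratio equality, completing the proof. There is no real obstacle here: the only subtlety is making sure that $Y_8(i,j,k)$ is indeed available for all pairwise distinct triples $(i,j,k)$, which is immediate since we are working in the $3$-label case and the polynomials $Y_m$ are defined for every such ordered triple.
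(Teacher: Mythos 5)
Your argument is correct and follows essentially the same route as the paper's proof: both use the two-color parametrization $A_i=\alpha_{ij}C_{ij}$, $B_{ij}=\beta_{ij}C_{ij}$ together with the identity $\gamma_{ij}=\alpha_{ij}-\tfrac{a_i(S)}{b_{ij}(S)}\beta_{ij}$ to reduce the claim to $\tfrac{B_{ij}}{b_{ij}(S)}=\tfrac{B_{ik}}{b_{ik}(S)}$, which is then supplied by the vanishing of $Y_8(i,j,k)$ after cancelling the nonzero factor $c_{jk}(T)$. Your rewriting $\gamma_{ij}C_{ij}=A_i-\tfrac{a_i(S)}{b_{ij}(S)}B_{ij}$ is just a mild repackaging of the paper's chain of substitutions, so no substantive difference.
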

    
    \begin{proof}
        By (\ref{two-color-parametrization}), we have $A_{i}=\alpha_{ij}C_{ij}$ and $A_{i}=\alpha_{ik}C_{ik}$, so $\alpha_{ij}C_{ij}=\alpha_{ik}C_{ik}$. Substituting (\ref{gamma-def}) gives \[ \gamma_{ij}C_{ij} + \frac{a_{i}(S)}{b_{ij}(S)}\beta_{ij}C_{ij} = \gamma_{ik}C_{ik} + \frac{a_{i}(S)}{b_{ik}(S)}\beta_{ik}C_{ik}\] and noting again the parametrization (\ref{two-color-parametrization}), \begin{align}\gamma_{ij}C_{ij} + \frac{a_{i}(S)}{b_{ij}(S)}B_{ij} = \gamma_{ik}C_{ik} + \frac{a_{i}(S)}{b_{ik}(S)}B_{ik}. \label{three-color-lemma-rearrange-equation}\end{align}

        The vanishing of $Y_8(i,j,k)$, defined in (\ref{Yang-Baxter-polynomials}), gives a relation between $B_{ij}$ and $B_{ik}$: \[B_{ij}b_{ik}(S)c_{jk}(T) = B_{ik}b_{ij}(S)c_{jk}(T).\] Using this fact along with (\ref{three-color-lemma-rearrange-equation}) gives \[\gamma_{ij}C_{ij}-\gamma_{ik}C_{ik} = a_{i}(S)\left(\frac{1}{b_{ik}(S)}B_{ik} - \frac{1}{b_{ij}(S)}B_{ij} \right) = 0,\] and the lemma follows.
    \end{proof}

    At this point, we have established the following necessary relations between the R-weights $C_{ij}$ in any solution:
    \begin{enumerate}
        \item $\gamma_{ij}C_{ij}=\gamma_{ik}C_{ik}$, for all $i,j,k\in [0,3)$, by (\ref{c-weights-transitive-equation})
        \item $C_{ji} = \tau_{ij} C_{ij}$, for $i < j$, by (\ref{two-color-parametrization}).
    \end{enumerate}
    The compatibility of these relations can be realized as the commutativity of a diagram in the following sense. Consider a graph consisting of \( \left\{ C_{ij} \vert i,j \in [1,n] \right\} \), and draw an arrow \( C_{ij} \xrightarrow{x} C_{i' j'} \) if there is a relation \( C_{i'j'} = x C_{ij} \) for some quantity $x$. Compatibility is realized as the commutativity of cycles in the associated graph.
    
    If the commutativity of such a diagram is ensured, we can produce a parametrization of the $C_{ij}$ in terms of one distinguished $C_{i_0 j_0}$ with the following procedure:
    
    \begin{itemize}
        \item Find a spanning tree of the undirected graph.
        \item for any $C_{ij}$, there is a unique undirected path between $C_{ij}$ and $C_{i_0 j_0}$ in the spanning tree. Each step in the graph gives a ratio between the corresponding Boltzmann weights, and so one can use these ratios to parameterize $C_{ij}$ in terms of $C_{i_0 j_0}$.
    \end{itemize}
    
    By considering the spanning tree in Figure \ref{spanning-tree-figure}, we can now establish the general solution to the 3-label case, $V_{\{012\}}$. For convenience, we set $\tau_{ii}=\gamma_{ii}=1$ for any $i\in [0,3)$.

    As an example, from the unique path from \(C_{01} \to C_{21}\) given by \(C_{01} \to C_{02} \to C_{20} \to C_{21} \), we have the relationship \( C_{21} = \frac{C_{21}}{C_{20}} \frac{C_{20}}{C_{02}} \frac{C_{02}}{C_{01}} C_{01} = \frac{\gamma_{20}}{\gamma_{21}} \tau_{02} \frac{\gamma_{01}}{\gamma_{02}} C_{01} = \frac{\tau_{02} \gamma_{01} \gamma_{20}}{\gamma_{21} \gamma_{02}} C_{01} \).

    \begin{figure}[h]
    \begin{center}
    \begin{tikzcd}
	{C_{20}} & {C_{02}} && {C_{12}} & {C_{21}} \\
	&& {C_{01}} \\
	&& {C_{10}}
	\arrow[color={rgb,255:red,92;green,92;blue,214}, from=2-3, to=3-3, "\tau_{01}", swap]
	\arrow[color={rgb,255:red,92;green,92;blue,214}, from=1-4, to=1-5, "\tau_{12}", swap]
	\arrow[from=2-3, to=1-2, "\frac{\gamma_{01}}{\gamma_{02}}"]
	\arrow[from=3-3, to=1-4, dotted, "\frac{\gamma_{10}}{\gamma_{12}}", swap]
	\arrow[color={rgb,255:red,92;green,92;blue,214}, from=1-2, to=1-1, "\tau_{02}"]
	\arrow[bend left = 20, from=1-1, to=1-5, "\frac{\gamma_{20}}{\gamma_{21}}"]
    \end{tikzcd}
    \end{center}
    \caption{A directed graph representing the relationships between the $C_{ij}$ weights in solutions to the 3-color Yang-Baxter equation. We remove the dashed edge to create a spanning tree, and the resulting figure allows us to obtain a parametrization of $V_{\{0,1,2\}}$ in terms of $C_{01}$.}
    \label{spanning-tree-figure}
    \end{figure}

    \begin{lem} \label{3labelSol}
        Any nonzero solution to the 3-color Yang-Baxter equation, if one exists, is unique up to scalar multiple. This solution can be parametrized by $C_{01}$, as follows:
        \[V_{\{0,1,2\}} \subset \left\{ A_i,B_{ij},C_{ij}, i,j\in [0,3) \;\left|\; A_{i} = \alpha_{ij}C_{ij},\ B_{ij} = \beta_{ij}C_{ij},\ C_{ij} = \frac{\tau_{0i}\gamma_{01}\gamma_{i0}}{\gamma_{ij}\gamma_{0i}}\cdot C_{01} \right.\right\}\]
    \end{lem}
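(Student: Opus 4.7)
The plan is to combine the two-color parametrization from Corollary \ref{deltaTranspose} with the ``gamma relation'' of Lemma \ref{gammaLemma} to pin down every $R$-weight in terms of the single parameter $C_{01}$. Since any element of $V_{\{0,1,2\}}$ must lie in each of $V_{\{0,1\}}$, $V_{\{0,2\}}$, and $V_{\{1,2\}}$, applying Corollary \ref{deltaTranspose} to each pair immediately yields $A_i = \alpha_{ij}C_{ij}$, $B_{ij} = \beta_{ij}C_{ij}$, and $C_{ji} = \tau_{ij}C_{ij}$ for every distinct $i,j\in[0,3)$. This reduces the problem to expressing the six $C$-weights in terms of $C_{01}$.

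To carry out that reduction, I would read off the directed graph in Figure \ref{spanning-tree-figure}. The $\tau$-labeled edges encode the two-color relations $C_{ji} = \tau_{ij}C_{ij}$, and the $\gamma$-ratio edges encode the consequences $C_{ik} = (\gamma_{ij}/\gamma_{ik})C_{ij}$ of Lemma \ref{gammaLemma}. Removing the dashed edge $C_{10}\to C_{12}$ leaves a spanning tree with every vertex connected to $C_{01}$ by a unique directed path, and multiplying edge weights along this path gives a formula for each $C_{ij}$ in terms of $C_{01}$. A short case check over the six targets, following the same bookkeeping as the example $C_{01}\to C_{02}\to C_{20}\to C_{21}$ preceding the figure and using the conventions $\tau_{ii}=\gamma_{ii}=1$, confirms the expression $C_{ij} = \tfrac{\tau_{0i}\gamma_{01}\gamma_{i0}}{\gamma_{ij}\gamma_{0i}}\,C_{01}$. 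Uniqueness up to scalar multiple is then immediate: fixing $C_{01}$ forces every other $C$-weight through this formula and then every $A$- and $B$-weight through the two-color parametrization.

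The subtle point — and the main obstacle in spirit — is the use of ``$\subset$'' in the lemma's conclusion: I am \emph{not} asserting that a nonzero solution exists, only that any solution must lie in the displayed one-parameter family. In particular, nothing in the argument above checks that the removed dashed edge in the spanning tree is consistent with the other relations, nor that the polynomials $Y_9(i,j,k)$, $Y_{10}(i,j,k)$, $Y_{11}(i,j,k)$, $Y_{12}(i,j,k)$ are automatically satisfied once the $A$, $B$, $C$ weights take the form above. Those compatibility conditions — which govern \emph{when} a nonzero solution actually exists — will emerge as the extra algebraic constraints on the $S$- and $T$-weights in Theorem \ref{nlabelCond}, and need not be addressed in this lemma.
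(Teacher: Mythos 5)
Your proposal is correct and follows essentially the same route as the paper: the two-color parametrization gives $A_i=\alpha_{ij}C_{ij}$, $B_{ij}=\beta_{ij}C_{ij}$, $C_{ji}=\tau_{ij}C_{ij}$, and composing the spanning-tree path $C_{01}\to C_{0i}\to C_{i0}\to C_{ij}$ with the $\gamma$-relations of Lemma \ref{gammaLemma} yields the stated formula and uniqueness up to scalar, with existence correctly deferred. The only cosmetic difference is that the paper additionally spells out the consistency check $\alpha_{ij}C_{ij}=\alpha_{ik}C_{ik}$ (via $Y_6$, $Y_8$, and Lemma \ref{gammaLemma}) to justify that the displayed one-parameter family is well defined, a point that is automatic for the containment direction you argue.
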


    \begin{proof}
        The expressions for $A_{i}$ and $B_{ij}$ arise from the parametrization (\ref{two-color-parametrization}) of each of the 2-label subcases. To show that these potential solutions are well-defined, we must make sure that every way to express a given vertex weight as a multiple of $C_{01}$ is equivalent. This is trivial for the weights $B_{ij}$ and $C_{ij}$, but for $A_i$ to be well-defined we must have
        \[
            \alpha_{ij} C_{ij} = \alpha_{ik} C_{ik} \hspace{20pt} \text{for all } i,j,k\in[0.3).
        \]
        
        By (\ref{gamma-def}) and the vanishing of $Y_6(i,j)$ and $Y_8(i,j,k)$ from~(\ref{Yang-Baxter-polynomials}), we have $\alpha_{ij}=\gamma_{ij}+\frac{a_i(S)}{b_{ij}(S)}\beta_{ij}$, $B_{ij}=\beta_{ij}C_{ij}$, and $\frac{1}{b_{ij}(S)}B_{ij} = \frac{1}{b_{ik}(S)}B_{ik}$. Combining the latter two reveals that $\frac{1}{b_{ij}(S)}\beta_{ij}C_{ij} = \frac{1}{b_{ik}(S)}\beta_{ik}C_{ik}$. Indeed, by Lemma \ref{gammaLemma}, we find that: $$\alpha_{ij}C_{ij} = \gamma_{ij}C_{ij} + \frac{a_i(S)}{b_{ij}(S)}\beta_{ij}C_{ij} = \gamma_{ik}C_{ik} + \frac{a_i(S)}{b_{ik}(S)}\beta_{ik}C_{ik} = \alpha_{ik}C_{ik}$$
        Finally, (\ref{two-color-parametrization}) and (\ref{c-weights-transitive-equation}) require that in any solution $C_{ji} = \tau_{ij}C_{ij}$ and $\gamma_{ij}C_{ij} = \gamma_{ik}C_{ik}$. This implies that $C_{0i} = \frac{\gamma_{01}}{\gamma_{0i}}\cdot C_{01}$, so $C_{i0} = \frac{\tau_{0i}\gamma_{01}}{\gamma_{0i}}\cdot C_{01}$, and therefore $C_{ij} = \frac{\tau_{0i}\gamma_{01}\gamma_{i0}}{\gamma_{ij}\gamma_{0i}}\cdot C_{01}$, as desired. Since all weights are determined uniquely given a fixed $C_{01}$, such a solution is unique up to scalar multiple.
    \end{proof}

    The existence of an explicit parametrization allows us to simply substitute the $R$-weights in Lemma \ref{3labelSol} into the Yang-Baxter polynomials (\ref{Yang-Baxter-polynomials}), yielding polynomials in the $S$ and $T$ weights which must vanish. By simplifying the resulting expressions, we obtain the following conditions for solvability.
    
    \begin{prop} \label{3labelCond}
        The 3-label Yang-Baxter equation has a nontrivial solution if and only if, for all $i,j,k\in [0,n)$, the following equations hold:
        $$
        \Delta_{ij}(S)=\Delta_{ij}(T)\ \hspace{3em}
        \frac{\beta_{ij}}{\gamma_{ij}b_{ij}(S)} = \frac{\beta_{ik}}{\gamma_{ik}b_{ik}(S)} \hspace{3em}
        \frac{b_{ik}(S)}{b_{ik}(T)}=\frac{b_{jk}(S)}{b_{jk}(T)}
        $$\vspace{-1.5em}
        \begin{align*}
            \gamma_{ik}c_{jk}(S)b_{ij}(T) + \beta_{ij}\gamma_{ik}c_{ik}(S)c_{ji}(T) &= \gamma_{ij}b_{ij}(S)c_{jk}(T) \\
	    	\gamma_{jk}b_{jk}(S)c_{ik}(T) &= \tau_{ij}\tau_{jk}\gamma_{ji}c_{ik}(S)b_{jk}(T) + \tau_{ij}\beta_{jk}\gamma_{ji}c_{ij}(S)c_{jk}(T) \\
	    	\gamma_{jk}c_{jk}(S)c_{ij}(T) + \beta_{ij}\gamma_{jk}c_{ik}(S)b_{ji}(T) &= \tau_{ij}\gamma_{ji}c_{ij}(S)c_{jk}(T) + \tau_{ij}\tau_{jk}\beta_{kj}\gamma_{ji}c_{ik}(S)b_{jk}(T)
        \end{align*}
        When these conditions hold, the solution is unique up to scalar multiple, and is given by Lemma \ref{3labelSol}.
    \end{prop}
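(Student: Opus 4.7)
The strategy is to leverage Lemma \ref{3labelSol}, which already pins down any nonzero solution uniquely as an explicit function of the single parameter $C_{01}$. The proof thus reduces to substituting the prescribed $R$-weights $A_i = \alpha_{ij}C_{ij}$, $B_{ij} = \beta_{ij}C_{ij}$, $C_{ij} = (\tau_{0i}\gamma_{01}\gamma_{i0})/(\gamma_{ij}\gamma_{0i})\cdot C_{01}$ into each of the $12$ Yang-Baxter polynomials in~(\ref{Yang-Baxter-polynomials}), and identifying the equivalent condition on $S$- and $T$-weights that makes the substituted polynomial vanish. Doing this for every permutation of labels $i,j,k\in [0,3)$ and discarding redundancies should recover exactly the listed system. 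The nonzeroness hypothesis on $S,T$-weights lets us freely clear denominators and cancel factors.

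First I would dispatch the two-label polynomials. By the proof of Proposition \ref{deltaProp}, $Y_4, Y_5, Y_6$ are precisely what defined the 2-label parameters $C_{ji} = \tau_{ij}C_{ij}$, $B_{ij} = \beta_{ij}C_{ij}$, $B_{ji} = \beta_{ji}\tau_{ij}C_{ij}$, and thus vanish automatically on the prescribed weights. The vanishing of $Y_3(i,j)$ under this parametrization amounts to $A_i = \alpha_{ij}C_{ij}$, which is built in. Corollary \ref{reduceCor} then yields the vanishing of $Y_1, Y_2, Y_7$ as soon as $\Delta_{ij}(S) = \Delta_{ij}(T)$ holds; moreover that identity is also forced, since the 2-label subsystems must each independently admit solutions of the prescribed form.

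For the genuinely 3-label polynomials I would proceed one at a time. Substituting $B_{ij} = \beta_{ij}C_{ij}$ into $Y_8(i,j,k)$ and using $\gamma_{ij}C_{ij} = \gamma_{ik}C_{ik}$ from Lemma \ref{gammaLemma}, dividing by the nonzero $c_{jk}(T)C_{ij}$ immediately yields $\beta_{ij}/(\gamma_{ij}b_{ij}(S)) = \beta_{ik}/(\gamma_{ik}b_{ik}(S))$. Likewise $Y_9(i,j,k)$ is a multiple of $C_{ij}$ and reduces at once to the $b$-ratio identity $b_{ik}(S)/b_{ik}(T) = b_{jk}(S)/b_{jk}(T)$. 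Each of $Y_{10}, Y_{11}, Y_{12}$ is $K$-linear in three different $R$-weights; after substitution and division by $C_{01}$, clearing the $\gamma$-denominators arising from the $C$-parametrization produces, essentially term-by-term, one of the three remaining equations in the proposition.

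The main obstacle I expect is not the algebra itself but ensuring that every label-permutation is accounted for without generating spurious extra conditions. Because Lemma \ref{3labelSol} distinguishes $C_{01}$ as a basepoint, a polynomial such as $Y_{10}$ applied with different orderings of $i,j,k$ could, \emph{a priori}, yield different-looking vanishing conditions, and one must verify that all of these are equivalent modulo the already-established $\Delta$- and ratio-identities. Once this cross-consistency is verified, the converse direction is immediate: if all listed conditions hold, the prescribed $R$-weights satisfy each of the $12$ Yang-Baxter polynomials, and Lemma \ref{3labelSol} simultaneously guarantees uniqueness up to scalar.
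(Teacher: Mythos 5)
Your overall strategy is the paper's: necessity is obtained by substituting the Lemma \ref{3labelSol} parametrization into the twelve Yang-Baxter polynomials and reading off the resulting constraints on the $S$- and $T$-weights (the paper works out $Y_{10}$ as its representative example), and sufficiency is a direct check that the paper itself postpones to the proof of Theorem \ref{nlabelCond}. Your reductions of $Y_8$ and $Y_9$ to the second and third displayed conditions are exactly right, as is the use of Corollary \ref{reduceCor} to dispose of $Y_1, Y_2, Y_7$.

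The gap is in your claim that $Y_4(i,j)$ and $Y_5(i,j)$ ``vanish automatically on the prescribed weights.'' That is true in the necessity direction (any nonzero solution lies in $V_{\{i,j\}}$, which forces $C_{ji}=\tau_{ij}C_{ij}$ and $B_{ji}=\beta_{ji}\tau_{ij}C_{ij}$), but it fails for sufficiency. The explicit parametrization $C_{ij}=\frac{\tau_{0i}\gamma_{01}\gamma_{i0}}{\gamma_{ij}\gamma_{0i}}\,C_{01}$ singles out the basepoint $0$, and for $i,j\neq 0$ the identity $C_{ji}=\tau_{ij}C_{ij}$ becomes the nontrivial constraint $\frac{\tau_{0j}\gamma_{j0}}{\gamma_{ji}\gamma_{0j}}=\tau_{ij}\,\frac{\tau_{0i}\gamma_{i0}}{\gamma_{ij}\gamma_{0i}}$ on the $S,T$-weights, which is not built into the formulas; it must be deduced from the listed conditions (the paper derives it in the proof of Theorem \ref{nlabelCond} from $\frac{b_{ik}(S)}{b_{ik}(T)}=\frac{b_{jk}(S)}{b_{jk}(T)}$, after rewriting $C_{ij}$ via Corollary \ref{rectExp}). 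So the cross-consistency obstacle you correctly flag is real, but it already bites at the two-label polynomials, not only at $Y_{10},Y_{11},Y_{12}$ under relabeling; your ``the converse direction is immediate'' conclusion rests on having verified that each polynomial's vanishing is \emph{equivalent} to a listed condition, and you have not done this for $Y_4$ and $Y_5$.
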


    \begin{proof}
        The necessity of these conditions follows from the parametrization in Lemma \ref{3labelSol}. This calculation is lengthy, but straightforward.

        As an example, we prove that the condition \((\ast)\) is necessary. The other equations follow a similar approach. In any solution in $V_{\{0,1,2\}}$, the vanishing of the Yang-Baxter polynomial \(Y_{10}(i,j,k)\) is the equation \( C_{ij} c_{jk}(S) b_{ij}(T) + B_{ij} c_{ik}(S) c_{ji}(T) - C_{ik} b_{ij}(S) c_{jk}(T) = 0 \). Therefore, by the parametrization in Lemma \ref{3labelSol},
        \[
        \frac{\tau_{0i} \gamma_{01} \gamma_{i_0}}{\gamma_{ij} \gamma_{0i}} C_{01} c_{jk}(S) b_{ij}(T) + \beta_{ij} \frac{\tau_{0i} \gamma_{01} \gamma_{i_0}}{\gamma_{ij} \gamma_{0i}} C_{01} c_{ik}(S) c_{ji}(T) - \frac{\tau_{0i} \gamma_{01} \gamma_{i_0}}{\gamma_{ik} \gamma_{0i}} C_{01} b_{ij}(S) c_{jk}(T) = 0.
        \]
        Because the $S$ and $T$-weights are nonzero and since in a nonzero solution $C_{01}\ne 0$, we can cancel \( \frac{\gamma_{0i} \gamma_{01} \gamma_{i_0}}{\gamma_{0i}} C_{01} \) from every term, which leaves us with \( (*) \) as a necessary condition.
        
        Sufficiency can also be done as a direct check, but we postpone the proof to the next section, as it is lengthy and the $n$-color case is no different than the 3-color case.
    \end{proof}
    
\section{Solving the $n$-label Yang-Baxter equation} \label{n-label-section}

In this section, we extend the 3-color case to the $n$-color case. Since every Yang-Baxter polynomial involves at most three colors, the set of $n$-color polynomials is just the union of the polynomials for each of the 3-color subsystems. That is, \[Q = Q_{[0,n)} = \bigcup_{I\subset [0,n), |I|=3} Q_I,\] and so  \[V := V_{[0,n)} = \bigcap_{I\subset [0,n), |I|=3} V_I.\]

This leads us to consider the compatibility of different 3-color parametrization in Lemma \ref{3labelSol}. Given a fixed $C_{ij}$, in order for $V_{\{i,j,k\}}\cap V_{\{i,j,\ell\}}$ to be nonzero, the $R$-weights $A_i, A_j, B_{ij}, B_{ji}$, and $C_{ji}$ must be equal in both subsystems.  

For the $n$-color Yang-Baxter equation, as we did in the 3-color case, we can construct a directed graph where the nodes of the graph represent the Boltzmann weights $C_{ij}$ and the directed edges denote relations between them arising from (\ref{two-color-parametrization}) and (\ref{c-weights-transitive-equation}). The label on each directed edge gives us a factor to multiply by, and the directionality of the edge tells us how that factor gives a relationship between the relevant Boltzmann weights.

As in the $n=3$ case, we will look only at a spanning tree of this graph, which we use to parametrize proposed solutions to the Yang-Baxter equation. Instead of proving that all the relations in the original graph hold, we will then prove that the Yang-Baxter equation holds directly.

We draw the graph using two types of edges:
\begin{description}
\item Between vertices $C_{ij}$ and $C_{ji}$, we have $C_{ij}\xrightarrow{\tau_{ji}}C_{ji}$.
\item Between vertices $C_{ij}$ and $C_{ik}$, we have $C_{ij}\xrightarrow{\frac{\gamma_{ij}}{\gamma_{ik}}}C_{ik}$.
\end{description}
where the weight describes the factor by which the tail is multiplied to obtain the head.

To choose the edges of the spanning (out)-tree, we first root the tree at $C_{01}$. Then, to find a (unique) path to $C_{ij}$, we consider the cases $i=0$ and $i\ne0$ and proceed as in Lemma \ref{3labelSol}. The spanning tree generated by this is shown in Figure \ref{large-spanning-tree-figure}.

\begin{figure}[h]
\begin{center}
\begin{tikzcd}
	\cdots \\
	{C_{03}} \\
	{C_{02}} & {C_{30}} & {C_{31}} & {C_{32}} & {C_{34}} & \cdots \\
	{C_{01}} & {C_{20}} & {C_{21}} & {C_{23}} & {C_{24}} & \cdots \\
	& {C_{10}} & {C_{12}} & {C_{13}} & {C_{14}} & \cdots
	\arrow[from=5-2, to=5-3]
	\arrow[bend left = 9, from=5-2, to=5-4]
	\arrow[bend left = 14, from=5-2, to=5-5]
	\arrow[from=4-1, to=5-2]
	\arrow[from=4-1, to=3-1]
	\arrow[bend left = 20, from=4-1, to=2-1]
	\arrow[from=3-1, to=4-2]
	\arrow[from=4-2, to=4-3]
	\arrow[bend left = 9, from=4-2, to=4-4]
	\arrow[bend left = 14, from=4-2, to=4-5]
	\arrow[from=2-1, to=3-2]
	\arrow[from=3-2, to=3-3]
    \arrow[from=4-1, to=1-1, bend left = 24]
    \arrow[from=2-1, to=1-1, dotted, no head]
    \arrow[from=3-1, to=1-1, bend right = 18, dotted, no head]
    \arrow[from=3-1, to=2-1, dotted, no head]
    \arrow[from=3-3, to=3-5, bend right = 14, dotted, no head]
    \arrow[from=4-3, to=4-5, bend right = 14, dotted, no head]
    \arrow[from=5-3, to=5-5, bend right = 14, dotted, no head]
    \arrow[from=4-3, to=4-4, dotted, no head]
    \arrow[from=4-4, to=4-5, dotted, no head]
	\arrow[bend left = 9, from=3-2, to=3-4]
	\arrow[bend left = 14, from=3-2, to=3-5]
	\arrow[dotted, no head, from=3-3, to=3-4]
	\arrow[dotted, no head, from=3-4, to=3-5]
	\arrow[dotted, no head, from=5-3, to=5-4]
	\arrow[dotted, no head, from=5-4, to=5-5]
	\arrow[bend left = 18, from=3-2, to=3-6]
	\arrow[bend left = 18, from=4-2, to=4-6]
	\arrow[bend left = 18, from=5-2, to=5-6]
\end{tikzcd}
\end{center}
\caption{A spanning out-tree of the directed graph representing the relationships between the $C_{ij}$ weights in solutions to the $n$-color Yang-Baxter equation. The dashed edges represent some of the edges of the graph that don't appear in the spanning tree., and the resulting figure allows us to obtain a parametrization of $V_{[0,n)}$ in terms of $C_{01}$.}
\label{large-spanning-tree-figure}
\end{figure}

\begin{eg}
As an example, we use Figure \ref{large-spanning-tree-figure} to compute $C_{31}$ in terms of $C_{01}$. Using the edges of the spanning tree, the unique path from $C_{01}$ to $C_{31}$ is $C_{01}\to C_{03}\to C_{30}\to C_{31}$, and this gives the relationship
    \[C_{31} = \frac{C_{31}}{C_{30}} \frac{C_{30}}{C_{03}} \frac{C_{03}}{C_{01}} C_{01}
    = \frac{\gamma_{30}}{\gamma_{31}} \tau_{03} \frac{\gamma_{01}}{\gamma_{03}} C_{01},\] so $C_{31} = \frac{\gamma_{30} \tau_{03}}{\gamma_{31} \gamma_{03}} \gamma_{01} C_{01}$.
\end{eg}

This approach was our tool for obtaining a parametrization, and it shows conceptually how 3-color subsystems fit together in the $n$-color Yang-Baxter equation. However, analyzing the graph is not strictly necessary to prove the Yang-Baxter equation holds. Instead of supplementing this approach with more details, we'll give an explicit, uniform parametrization directly.

\subsection{Solving the $n$-label case\label{sec:n-label}}

    Similarly to the 3-color case of the previous section, we can use the spanning tree in Figure \ref{large-spanning-tree-figure}, to obtain a parametrization of the (potential) solutions to the $n$-color Yang-Baxter equation in terms of the $S$ and $T$ weights and $C_{01}$. However, there is nothing special about the weight $C_{01}$, so by choosing a particularly nice weight for $C_{01}$, we instead give a particular solution that treats all colors equally and is unique up to scalar. Each $R$-vertex weight depends only on $S$ and $T$-vertex weights in its own labels and some arbitrary additional label uniform across all weights.
    
    \begin{thm} \label{nlabelSol}
        For any tuple of distinct labels $i$, $j$, $k$, any solution to the $n$-label Yang-Baxter equation can be written as a scalar multiple of the $d$-tuple with components:
        \begin{equation}
            A_{i}=\alpha_{ij}\frac{\gamma_{ik}\tau_{ki}}{\gamma_{ij}\gamma_{ki}}, \hspace{2em}
            B_{ij}=\beta_{ij}\frac{\gamma_{ik}\tau_{ki}}{\gamma_{ij}\gamma_{ki}}, \hspace{2em}
            C_{ij}=\frac{\gamma_{ik}\tau_{ki}}{\gamma_{ij}\gamma_{ki}}. \label{n-color-parametrization}
        \end{equation}
    \end{thm}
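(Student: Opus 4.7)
My plan is to leverage the 3-color analysis from Section~\ref{3-label-section}. Since every Yang-Baxter polynomial enumerated in Proposition~\ref{genEnum} involves at most three distinct colors, we have $V_{[0,n)} = \bigcap_{I\subset[0,n),\,|I|=3} V_I$, so any $n$-color solution restricts on each 3-element $I$ to a solution of the 3-color Yang-Baxter equation. Thus, for every pair of distinct colors $i,j$, applying Lemma~\ref{3labelSol} to any 3-color subsystem containing $\{i,j\}$ gives $A_i = \alpha_{ij} C_{ij}$, $B_{ij} = \beta_{ij} C_{ij}$, and $C_{ji} = \tau_{ij} C_{ij}$. Moreover, by Lemma~\ref{gammaLemma} applied inside each 3-color subsystem, the quantity $D_i := \gamma_{ij} C_{ij}$ is independent of the choice of $j$.

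Next, I would derive a recursion among the $D_i$. The identity $C_{ji} = \tau_{ij} C_{ij}$ gives $D_j = \gamma_{ji}\tau_{ij} C_{ij} = (\gamma_{ji}\tau_{ij}/\gamma_{ij})\,D_i$. Fixing an anchor color $k$ and iterating (using $\tau_{ki}\tau_{ik}=1$) produces $D_i = (\gamma_{ik}\tau_{ki}/\gamma_{ki})\,D_k$. Substituting into $C_{ij} = D_i/\gamma_{ij}$, $A_i = \alpha_{ij} C_{ij}$, and $B_{ij} = \beta_{ij} C_{ij}$, and normalizing the overall scalar so that $D_k = 1$, reproduces the formulas (\ref{n-color-parametrization}). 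Since $D_k$ is the only free parameter in this construction, uniqueness up to scalar follows at once.

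The main obstacle is establishing that the proposed $d$-tuple is genuinely well-defined as a collection of single Boltzmann weights: the formula for $A_i$ involves the helper index $j$ through $\alpha_{ij}$ and $\gamma_{ij}$, so independence of $j$ must be verified (and similarly the formulas for $B_{ij}, C_{ij}$ should not depend on the auxiliary anchor $k$). The $k$-independence of $C_{ij}$ is equivalent to the statement that $\gamma_{ik}\tau_{ki}/\gamma_{ki}$ is constant in $k$ for fixed $i$, which is precisely the content of the recursion above combined with the intrinsic definition of $D_i$. For the $A_i$ component I would follow the argument in the proof of Lemma~\ref{3labelSol}: decompose $\alpha_{ij} = \gamma_{ij} + (a_i(S)/b_{ij}(S))\beta_{ij}$, invoke $Y_8(i,j,k)=0$ to obtain $B_{ij}/b_{ij}(S) = B_{ik}/b_{ik}(S)$, and combine with the $j$-independence of $D_i$ to conclude $\alpha_{ij} C_{ij} = \alpha_{ik} C_{ik}$. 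Conceptually this is the commutativity of cycles in the directed graph of Figure~\ref{large-spanning-tree-figure}, i.e., the fact that different paths from the anchor $C_{k\ell}$ to any other $C_{ij}$ produce the same ratio. This cycle commutativity is the crux of assembling the various 3-color parametrizations consistently into a single global parametrization of $V_{[0,n)}$.
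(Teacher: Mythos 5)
Your proposal is correct and takes essentially the same route as the paper: both rest on the 3-color parametrization of Lemma~\ref{3labelSol} (equivalently, Lemma~\ref{gammaLemma} together with the 2-color relations $A_i=\alpha_{ij}C_{ij}$, $B_{ij}=\beta_{ij}C_{ij}$, $C_{ji}=\tau_{ij}C_{ij}$ from (\ref{two-color-parametrization})), and then normalize the single free parameter to obtain the anchor-symmetric formula (\ref{n-color-parametrization}). Your repackaging via the intrinsic quantity $D_i=\gamma_{ij}C_{ij}$ is only a cosmetic variant of the paper's step of writing everything in terms of $C_{01}$ and then re-anchoring, and your well-definedness check for $A_i$ mirrors the $\alpha_{ij}C_{ij}=\alpha_{ik}C_{ik}$ verification already carried out in the proof of Lemma~\ref{3labelSol}.
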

    
    \begin{proof}
        By the parametrization of the 3-color case (Lemma \ref{3labelSol}), we must have $C_{ij} = \frac{\gamma_{01}\gamma_{i0}\tau_{0i}}{\gamma_{0i}\gamma_{ij}}\cdot C_{01}$ for all $i,j\in [0,n)$. Replacing 0 and 1 by $k$ and $l$ gives $C_{ij} = \frac{\gamma_{kl}\gamma_{ik}\tau_{ki}}{\gamma_{ki}\gamma_{ij}}\cdot C_{kl}$. Instead replacing $i$ and $j$ in the original expression with $k$ and $l$ gives $C_{kl} = \frac{\gamma_{01}\gamma_{k0}\tau_{0k}}{\gamma_{0k}\gamma_{kl}}\cdot C_{01}$. We therefore have two expressions for $C_{ij}/C_{kl}$, and by setting these equal we obtain:\[ \frac{\gamma_{kl}\gamma_{ik}\tau_{ki}}{\gamma_{ki}\gamma_{ij}} = \left(\frac{\gamma_{01}\gamma_{i0}\tau_{0i}}{\gamma_{0i}\gamma_{ij}}\right)\cdot \left(\frac{\gamma_{01}\gamma_{k0}\tau_{0k}}{\gamma_{0k}\gamma_{kl}}\right)^{-1}= \frac{\gamma_{kl}}{\gamma_{ij}} \frac{\gamma_{i0}\tau_{0i}\gamma_{0k}}{\gamma_{k0}\tau_{0k}\gamma_{0i}},\] so $\frac{\gamma_{ik}\tau_{ki}}{\gamma_{ki}} = \frac{\gamma_{i0}\tau_{0i}\gamma_{0k}}{\gamma_{k0}\tau_{0k}\gamma_{0i}}$ and $\frac{\gamma_{i0}\tau_{0i}}{\gamma_{0i}} = \frac{\gamma_{ik}\tau_{ki}}{\gamma_{ki}} \frac{\gamma_{k0}\tau_{0k}}{\gamma_{0k}}$. This means that \begin{align} C_{ij} = \frac{\gamma_{01}}{\gamma_{ij}} \frac{\gamma_{ik}\tau_{ki}}{\gamma_{ki}} \frac{\gamma_{k0}\tau_{0k}}{\gamma_{0k}}\cdot C_{01} \label{Cij-param-equationn}\end{align} for any $k$. Set $C_{01}=\frac{\gamma_{0k}\tau_{k0}}{\gamma_{01}\gamma_{k0}}$; then by (\ref{Cij-param-equationn}),  $C_{ij}=\frac{\gamma_{ik}\tau_{ki}}{\gamma_{ij}\gamma_{ki}}$. Restricting to the 2-color subsystems, (\ref{two-color-parametrization}) shows that \[A_{i} = \alpha_{ij}C_{ij} = \alpha_{ij}\frac{\gamma_{ik}\tau_{ki}}{\gamma_{ij}\gamma_{ki}} \hspace{20pt} \text{and} \hspace{20pt} B_{ij} = \beta_{ij}C_{ij} = \beta_{ij}\frac{\gamma_{ik}\tau_{ki}}{\gamma_{ij}\gamma_{ki}},\] as desired. Furthermore, this solution is unique up to the choice of $C_{01}$, and since the system of equations is homogeneous, each solution is a scalar multiple of this one.
    \end{proof}
    
    For ease of use, we now restate Theorem \ref{nlabelSol} directly in terms of the $S$ and $T$ Boltzmann weights.

    \begin{cor} \label{rectExp}
        For any tuple of distinct labels $i$, $j$, $k$, any solution to the $n$-label Yang-Baxter equation can be written as a scalar multiple of the $d$-tuple with components:
        \begin{center}
            $A_{i}=\displaystyle \left(1 + \frac{a_i(S)a_j(T)b_{ij}(S) - a_i(S)a_j(S)b_{ij}(T)}{b_{ij}(T)c_{ij}(S)c_{ji}(S)}\right)\cdot \frac{b_{ik}(T)b_{ki}(S)}{b_{ik}(S)b_{ki}(T)}$
            
            $B_{ij}=\displaystyle \left(\frac{a_j(T)b_{ij}(S)-a_j(S)b_{ij}(T)}{c_{ij}(S)c_{ji}(S)}\right)\cdot \frac{b_{ij}(S)b_{ik}(T)b_{ki}(S)}{b_{ij}(T)b_{ik}(S)b_{ki}(T)}$ \hspace{2em}
            $C_{ij}=\displaystyle \frac{b_{ij}(S)b_{ik}(T)b_{ki}(S)c_{ji}(T)}{b_{ij}(T)b_{ik}(S)b_{ki}(T)c_{ji}(S)}$
        \end{center}
    \end{cor}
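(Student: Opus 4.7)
The corollary is a direct translation of Theorem \ref{nlabelSol} into the language of $S$- and $T$-weights, so the proof is essentially a bookkeeping exercise: substitute the definitions of $\tau_{ij}$, $\gamma_{ij}$, $\beta_{ij}$, and $\alpha_{ij}$ from~(\ref{tau-beta-def}), (\ref{gamma-def}), (\ref{alpha-def}) into the formulas~(\ref{n-color-parametrization}) and simplify. I would organize the computation in the order $C_{ij} \to B_{ij} \to A_i$, since each later formula can be bootstrapped from the earlier ones.

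First I would compute $C_{ij} = \frac{\gamma_{ik}\tau_{ki}}{\gamma_{ij}\gamma_{ki}}$. Writing out each factor explicitly gives a product of six ratios in the $b$'s and $c$'s. The key cancellations are that $c_{ki}(S)$ and $c_{ki}(T)$ appear both in $\gamma_{ik}$ and in $\tau_{ki}/\gamma_{ki}$ and drop out, and the $c_{ik}$'s from $\tau_{ki}$ cancel against those from $\gamma_{ki}$. What survives is exactly $\frac{b_{ij}(S)\,b_{ik}(T)\,b_{ki}(S)\,c_{ji}(T)}{b_{ij}(T)\,b_{ik}(S)\,b_{ki}(T)\,c_{ji}(S)}$, as claimed.

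Next, $B_{ij} = \beta_{ij}\,C_{ij}$ is immediate from the $C_{ij}$ computation: the factor $c_{ji}(T)$ in $C_{ij}$ cancels with the $c_{ji}(T)$ in the denominator of $\beta_{ij} = \frac{a_j(T)b_{ij}(S)-a_j(S)b_{ij}(T)}{c_{ij}(S)c_{ji}(T)}$, leaving the $c_{ij}(S)c_{ji}(S)$ denominator stated in the corollary. For $A_i$, I would use the identity $\alpha_{ij} = \gamma_{ij} + \frac{a_i(S)}{b_{ij}(S)}\beta_{ij}$ (which is essentially the definition~(\ref{alpha-def})) so that
\[
A_i \;=\; \gamma_{ij} C_{ij} \;+\; \frac{a_i(S)}{b_{ij}(S)}\, B_{ij}.
\]
The first term $\gamma_{ij}C_{ij}$ is especially clean: multiplying the expression for $C_{ij}$ by $\gamma_{ij} = \frac{b_{ij}(T)c_{ji}(S)}{b_{ij}(S)c_{ji}(T)}$ produces a sweeping cancellation that leaves only $\frac{b_{ik}(T)b_{ki}(S)}{b_{ik}(S)b_{ki}(T)}$. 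Then I would factor this out of both terms, which turns the second term into $\frac{a_i(S)\bigl(a_j(T)b_{ij}(S) - a_j(S)b_{ij}(T)\bigr)}{b_{ij}(T)c_{ij}(S)c_{ji}(S)}$ times $\frac{b_{ik}(T)b_{ki}(S)}{b_{ik}(S)b_{ki}(T)}$, reproducing the stated expression for $A_i$.

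There is no real obstacle here; the only thing to watch is that the cancellations line up correctly, and that the auxiliary label $k$ appears only through the common factor $\frac{b_{ik}(T)b_{ki}(S)}{b_{ik}(S)b_{ki}(T)}$. This last point is precisely what reflects the content of Theorem~\ref{nlabelSol}: the choice of $k$ is free, and changing it merely rescales the whole solution, consistent with uniqueness up to scalar.
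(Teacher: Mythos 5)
Your proposal is correct and matches the paper's proof, which likewise just substitutes the definitions of $\tau_{ij}$, $\beta_{ij}$, $\gamma_{ij}$, and $\alpha_{ij}$ into the parametrization of Theorem~\ref{nlabelSol} and simplifies. Your intermediate cancellations (in particular $\gamma_{ik}\tau_{ki}/\gamma_{ki} = \frac{b_{ik}(T)b_{ki}(S)}{b_{ik}(S)b_{ki}(T)}$ and the decomposition $A_i = \gamma_{ij}C_{ij} + \frac{a_i(S)}{b_{ij}(S)}B_{ij}$) all check out and simply make explicit the computation the paper leaves to the reader.
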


    \begin{proof}
        We repeat the definitions of the quantities $\alpha_{ij}$, $\beta_{ij}$, $\tau_{ij}$, and $\gamma_{ij}$:
        \begin{align*}
            \alpha_{ij} := \frac{\beta_{ij}a_{i}(S)c_{ji}(T) + b_{ij}(T)c_{ji}(S)}{b_{ij}(S)c_{ji}(T)} &\hspace{2em} \beta_{ij} := \frac{a_j(T)b_{ij}(S)-a_j(S)b_{ij}(T)}{c_{ij}(S)c_{ji}(T)} \\
            \tau_{ij} := \frac{c_{ij}(T)c_{ji}(S)}{c_{ij}(S)c_{ji}(T)} &\hspace{2em} \gamma_{ij} := \frac{b_{ij}(T)c_{ji}(S)}{b_{ij}(S)c_{ji}(T)}
        \end{align*}
        Substituting into the solution from Theorem \ref{nlabelSol} produces the desired expressions.
    \end{proof}

    At this point, the parametrization (\ref{n-color-parametrization}) is necessary but not sufficient, in that we don't actually know that (\ref{n-color-parametrization}) gives a bona fide solution. However, having the parametrization in hand will be enough for us to precisely establish the necessary and sufficient conditions for solvability. 
    
    \begin{thm} \label{nlabelCond}
        The $n$-label Yang-Baxter equation has a nontrivial solution if and only if, for each distinct tuple of labels $i,j,k$,
        \begin{equation} \label{n-color-conditions}
        \begin{aligned}
        \Delta_{ij}(S)=\Delta_{ij}(T), \ \hspace{3em}
        &\frac{\beta_{ij}}{\gamma_{ij}b_{ij}(S)} = \frac{\beta_{ik}}{\gamma_{ik}b_{ik}(S)}, \hspace{3em}
        \frac{b_{ik}(S)}{b_{ik}(T)}=\frac{b_{jk}(S)}{b_{jk}(T)},
            \\\gamma_{ik}c_{jk}(S)b_{ij}(T) + \beta_{ij}\gamma_{ik}c_{ik}(S)c_{ji}(T) &= \gamma_{ij}b_{ij}(S)c_{jk}(T), \\
	    	\gamma_{jk}b_{jk}(S)c_{ik}(T) &= \tau_{ij}\tau_{jk}\gamma_{ji}c_{ik}(S)b_{jk}(T) + \tau_{ij}\beta_{jk}\gamma_{ji}c_{ij}(S)c_{jk}(T), \\
	    	\gamma_{jk}c_{jk}(S)c_{ij}(T) + \beta_{ij}\gamma_{jk}c_{ik}(S)b_{ji}(T) &= \tau_{ij}\gamma_{ji}c_{ij}(S)c_{jk}(T) + \tau_{ij}\tau_{jk}\beta_{kj}\gamma_{ji}c_{ik}(S)b_{jk}(T).
        \end{aligned}
        \end{equation}
    \end{thm}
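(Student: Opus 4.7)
Plan: I would prove the theorem by reducing to the 3-color case in both directions, using the decomposition $Q = \bigcup_{|I|=3} Q_I$ (since every Yang-Baxter polynomial involves at most three colors, as enumerated in Corollary \ref{enumeration}), which gives $V = \bigcap_{|I|=3} V_I$. Thus a nontrivial solution of the full $n$-color Yang-Baxter equation is equivalent to a compatible family of nontrivial solutions of all 3-color subsystems $V_I$.

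\emph{Necessity.} Suppose $V$ contains a nonzero tuple. For any triple of distinct colors $i,j,k$, restricting this tuple to the $R$-weights supported on $\{i,j,k\}$ produces an element of $V_{\{i,j,k\}}$; this element is nonzero because the weights $A_i, B_{ij}, C_{ij}$ etc.\ appearing in the restriction are not all forced to vanish (indeed, at least one $C_{ij}$ must be nonzero by Lemma \ref{3labelSol} applied to any 3-subsystem containing the colors involved in the nonzero $R$-weights). Proposition \ref{3labelCond} then gives the six conditions in (\ref{n-color-conditions}) for the triple $i,j,k$. Ranging over all distinct triples yields the full necessity.

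\emph{Sufficiency.} Assume (\ref{n-color-conditions}) holds for every triple of distinct labels. First I would verify that the formulas (\ref{n-color-parametrization}) are well-defined, i.e.\ independent of the auxiliary label $k$ used on the right-hand side. The condition $\frac{b_{ik}(S)}{b_{ik}(T)} = \frac{b_{jk}(S)}{b_{jk}(T)}$ from (\ref{n-color-conditions}) — together with the analogous relation obtained by permuting roles — implies that the ratio $\gamma_{ik}\tau_{ki}/\gamma_{ki}$ is symmetric enough to be $k$-independent, and the second condition then transfers this independence to $\alpha_{ij}$ and $\beta_{ij}$ as in the argument of Lemma \ref{gammaLemma}. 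Next I would check that this well-defined tuple solves every Yang-Baxter polynomial. Since each $Y_m(i,j)$ or $Y_m(i,j,k)$ lives entirely within a single 3-color subsystem, it suffices to show that, for each fixed triple $\{i,j,k\}$, the restriction of (\ref{n-color-parametrization}) agrees (up to a common nonzero scalar) with the 3-color parametrization of Lemma \ref{3labelSol}. A direct comparison — choosing the base weight $C_{01}$ in Lemma \ref{3labelSol} to be $\gamma_{0k}\tau_{k0}/(\gamma_{01}\gamma_{k0})$ and then rebasing to the triple $\{i,j,k\}$ — shows this is the case, and hence (\ref{n-color-parametrization}) solves each 3-color subsystem.

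The genuine work, and main obstacle, is the deferred sufficiency check within a single 3-color subsystem: one must show that the parametrization of Lemma \ref{3labelSol} satisfies each of the twelve Yang-Baxter polynomials $Y_1(i,j),\dots,Y_{12}(i,j,k)$ under the six hypotheses in (\ref{n-color-conditions}). The 2-color polynomials $Y_1,\dots,Y_7$ all reduce cleanly using $\Delta_{ij}(S)=\Delta_{ij}(T)$ and the definitions of $\alpha_{ij},\beta_{ij},\tau_{ij},\gamma_{ij}$; indeed $Y_4,Y_5,Y_6$ vanish from the very definitions, $Y_3$ vanishes by definition of $\alpha_{ij}$, and Corollary \ref{reduceCor} eliminates $Y_1,Y_2,Y_7$. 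Among the three-color polynomials, $Y_8$ and $Y_9$ reduce to the third condition of (\ref{n-color-conditions}) after substituting the expressions for $B_{ij}$ and $C_{ij}$, while $Y_{10}, Y_{11}, Y_{12}$ correspond directly to the last three displayed equations of (\ref{n-color-conditions}) once one substitutes the parametrization and clears the common factor of $C_{01}\gamma_{01}\gamma_{i0}\tau_{0i}/\gamma_{0i}$. I would organize the verification as a short table listing each polynomial, the substitution, the cancellation, and the matching hypothesis, so that the lengthy but mechanical computation becomes transparent.
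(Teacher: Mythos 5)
Your proposal is correct and follows essentially the same route as the paper: necessity by restricting to 3-color subsystems and invoking Proposition \ref{3labelCond}, and sufficiency by substituting the explicit parametrization (\ref{n-color-parametrization}) into the reduced list of Yang--Baxter polynomials ($Y_3$--$Y_6$, $Y_8$--$Y_{12}$, after Corollary \ref{reduceCor} disposes of $Y_1$, $Y_2$, $Y_7$) and matching each to one of the hypotheses in (\ref{n-color-conditions}). The only caveat is that $Y_4$ and $Y_5$ do not vanish ``from the very definitions'' under the uniform parametrization --- verifying $C_{ji}=\tau_{ij}C_{ij}$ there genuinely uses the condition $b_{ik}(S)/b_{ik}(T)=b_{jk}(S)/b_{jk}(T)$, exactly as in the paper's computation --- but your well-definedness/rebasing step absorbs this, so it is an imprecision of phrasing rather than a gap.
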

    
    Perhaps surprisingly, these conditions are exactly the same conditions that appeared in the 3-color case. Nominally, they ensure that all 3-color subsystems have a nonzero solution i.e. that $V_{\{i,j,k\}}$ has a nonzero element for all $i,j,k\in [0,n)$. But a priori it is not obvious that the solutions coincide, making the intersections $V_{\{i,j,k\}}\cap V_{\{i',j',k'\}}$ nonzero.

    \begin{proof}
        By Proposition \ref{3labelCond}, these are all necessary, since each condition is necessary in at least one three-label subsystem. We prove sufficiency by using the parametrization (\ref{n-color-parametrization}) in Theorem \ref{nlabelSol} and showing the all the Yang-Baxter polynomials vanish. By Corollary \ref{reduceCor}, we need only consider $Y_3(i,j)$ through $Y_6(i,j)$ and $Y_8(i,j,k)$ through $Y_{12}(i,j,k)$. As in Proposition \ref{deltaProp}, the use of $\alpha$, $\beta$, and $\tau$ permits some simplification. We rewrite the remaining necessary and sufficient equations from (\ref{Yang-Baxter-polynomials}):
        \begin{align*}
            0=Y_3(i,j) = & b_{ij}(S)c_{ji}(T)\cdot(\alpha_{ij}C_{ij} - A_{i})\\
            0=Y_4(i,j) = & c_{ij}(S)c_{ji}(T)\cdot(\tau_{ij}C_{ij} - C_{ji}) \\
            0=Y_5(i,j) = & c_{ij}(S)c_{ji}(T)\cdot(\beta_{ji}\tau_{ij}C_{ij} - B_{ji}) \\
            0=Y_6(i,j) = & c_{ij}(S)c_{ji}(T)\cdot(B_{ij} - \beta_{ij}C_{ij}) \\
            0=Y_8(i,j,k) = & B_{ij} b_{ik}(S) c_{jk}(T) - B_{ik} b_{ij}(S) c_{jk}(T) \\
            0=Y_9(i,j,k) = & C_{ij} b_{jk}(S) b_{ik}(T) - C_{ij} b_{ik}(S) b_{jk}(T) \\
            0=Y_{10}(i,j,k) = & C_{ij} c_{jk}(S) b_{ij}(T) + B_{ij} c_{ik}(S) c_{ji}(T) - C_{ik} b_{ij}(S) c_{jk}(T) \\
            0=Y_{11}(i,j,k) = & C_{ij} b_{jk}(S) c_{ik}(T) - C_{kj} c_{ik}(S) b_{jk}(T) - B_{jk} c_{ij}(S) c_{jk}(T) \\
            0=Y_{12}(i,j,k) = & C_{ij} c_{jk}(S) c_{ij}(T) + B_{ij} c_{ik}(S) b_{ji}(T) - C_{jk} c_{ij}(S) c_{jk}(T) - B_{kj} c_{ik}(S) b_{jk}(T)
        \end{align*}

        It is evident that the solution (\ref{n-color-parametrization}) causes $Y_3(i,j)$ and $Y_6(i,j)$ to vanish. To see that $Y_4(i,j)=0$, observe that $C_{ij} = \frac{b_{ij}(S)c_{ji}(T)}{b_{ij}(T)c_{ji}(S)} \frac{b_{ik}(T)b_{ki}(S)}{b_{ik}(S)b_{ki}(T)}$ by Corollary \ref{rectExp}. We calculate:
        $$\tau_{ij}C_{ij} = \frac{c_{ij}(T)c_{ji}(S)}{c_{ij}(S)c_{ji}(T)} \frac{b_{ij}(S)c_{ji}(T)}{b_{ij}(T)c_{ji}(S)} \frac{b_{ik}(T)b_{ki}(S)}{b_{ik}(S)b_{ki}(T)} = \frac{c_{ij}(T)}{c_{ij}(S)} \frac{b_{ij}(S)}{b_{ij}(T)} \frac{b_{ik}(T)b_{ji}(S)}{b_{ik}(S)b_{ji}(T)} = \frac{1}{\gamma_{ji}} \frac{b_{ij}(S)b_{ik}(T)}{b_{ij}(T)b_{ik}(S)}$$
        Transposing $i$ and $j$ in the expression for $C_{ij}$ from Corollary \ref{rectExp},
        $$C_{ji} = \frac{1}{\gamma_{ji}} \frac{b_{jk}(T)b_{kj}(S)}{b_{jk}(S)b_{kj}(T)} = \frac{1}{\gamma_{ji}} \frac{b_{ik}(T)b_{ij}(S)}{b_{ik}(S)b_{ij}(T)} = \tau_{ij}C_{ij}$$
        where we assume that $\frac{b_{ij}(S)}{b_{ij}(T)}=\frac{b_{jk}(S)}{b_{jk}(T)}$ and $\frac{b_{ik}(S)}{b_{ik}(T)}=\frac{b_{kj}(S)}{b_{kj}(T)}$. As a result, $C_{ji} = \tau_{ij}C_{ij}$ and $Y_4(i,j)$ vanishes. Further, $B_{ji}=\beta_{ji}C_{ij}=\beta_{ji}\tau_{ij}C_{ij}$ and $Y_5(i,j)=0$ as well.
    
        We must now verify the remaining equations. Recall that by Lemma \ref{gammaLemma}, $\gamma_{ij}C_{ij}=\gamma_{ik}C_{ik}$ for all $i,j,k$ in any solution. From the parametrization (\ref{n-color-parametrization}), $\gamma_{ij}C_{ij}=\gamma_{ik}\frac{\tau_{ki}}{\gamma_{ki}}=\gamma_{ik}C_{ik}$, so this equality does indeed hold. Moreover, by the definition of $B_{ij}$ in (\ref{n-color-parametrization}), the identity $B_{ij}=\beta_{ij}C_{ij}$ is true as well. We make extensive use of these in what follows. 

        First, we prove that $Y_8(i,j,k)=0$. We have that $\beta_{ij}\gamma_{ik}b_{ik}(S) = \beta_{ik}\gamma_{ij}b_{ij}(S)$. Given that $\gamma_{ij}C_{ij}=\gamma_{ik}C_{ik}$, we multiply the left side by $\gamma_{ij}C_{ij}$ and the right side by $\gamma_{ik}C_{ik}$:
        $$ \beta_{ij}\gamma_{ik} b_{ik}(S)\gamma_{ij}C_{ij} = \beta_{ik}\gamma_{ij}b_{ij}(S)\gamma_{ik}C_{ik} \text{ so } B_{ij}\gamma_{ik}\gamma_{ij}b_{ik}(S) = B_{ik}\gamma_{ij}\gamma_{ik}b_{ij}(S) $$
        Cancelling terms, we get $B_{ij}b_{ik}(S)=B_{ik}b_{ij}(S)$. From this, $Y_8(i,j,k)=0$.

        Second, we prove that $Y_9(i,j,k)=0$. We have that $\frac{b_{ik}(S)}{b_{ik}(T)}=\frac{b_{jk}(S)}{b_{jk}(T)}$. Multiplying by $C_{ij}$:
        $$ \frac{b_{ik}(S)}{b_{ik}(T)}C_{ij}=\frac{b_{jk}(S)}{b_{jk}(T)}C_{ij} \text{ so } b_{ik}(S)b_{jk}(T)C_{ij} = b_{ik}(T)b_{jk}(S)C_{ij}$$
        From this, $Y_{9}(i,j,k)=0$.

        Third, we prove that $Y_{10}(i,j,k)=0$. We have that $\gamma_{ik}c_{jk}(S)b_{ij}(T) + \beta_{ij}\gamma_{ik}c_{ik}(S)c_{ji}(T) = \gamma_{ij}b_{ij}(S)c_{jk}(T)$. Multiplying by $C_{ij}$:
        \begin{align*}
            C_{ij}\gamma_{ik}c_{jk}(S)b_{ij}(T) + C_{ij}\beta_{ij}\gamma_{ik}c_{ik}(S)c_{ji}(T) &= C_{ij}\gamma_{ij}b_{ij}(S)c_{jk}(T)\\
            C_{ij}\gamma_{ik}c_{jk}(S)b_{ij}(T) + B_{ij}\gamma_{ik}c_{ik}(S)c_{ji}(T) &= C_{ik}\gamma_{ik}b_{ij}(S)c_{jk}(T)\\
            C_{ij}c_{jk}(S)b_{ij}(T) + B_{ij}c_{ik}(S)c_{ji}(T) &= C_{ik}b_{ij}(S)c_{jk}(T)
        \end{align*}
        Again, this equality directly demonstrates that $Y_{10}(i,j,k)=0$.
        
        Persisting in a similar way, fourth, we prove that $Y_{11}(i,jk)=0$. We have $\gamma_{jk}b_{jk}(S)c_{ik}(T) = \tau_{ij}\tau_{jk}\gamma_{ji}c_{ik}(S)b_{jk}(T) + \tau_{ij}\beta_{jk}\gamma_{ji}c_{ij}(S)c_{jk}(T)$. Multiplying by $C_{jk}$:
        \begin{align*}
            C_{jk}\gamma_{jk}b_{jk}(S)c_{ik}(T) &= C_{jk}\tau_{ij}\tau_{jk}\gamma_{ji}c_{ik}(S)b_{jk}(T) + C_{jk}\tau_{ij}\beta_{jk}\gamma_{ji}c_{ij}(S)c_{jk}(T) \\
            C_{ji}\gamma_{ji}b_{jk}(S)c_{ik}(T) &= C_{kj}\tau_{ij}\gamma_{ji}c_{ik}(S)b_{jk}(T) + B_{jk}\tau_{ij}\gamma_{ji}c_{ij}(S)c_{jk}(T) \\
            C_{ij}\tau_{ij}b_{jk}(S)c_{ik}(T) &= C_{kj}\tau_{ij}c_{ik}(S)b_{jk}(T) + B_{jk}\tau_{ij}c_{ij}(S)c_{jk}(T) \\
            C_{ij}b_{jk}(S)c_{ik}(T) &= C_{kj}c_{ik}(S)b_{jk}(T) + B_{jk}c_{ij}(S)c_{jk}(T)
        \end{align*}
        We discover that this equality implies $Y_{11}(i,j,k)=0$, as before.

        Finally, we prove that $Y_{12}(i,j,k)=0$. We have that $\gamma_{jk}c_{jk}(S)c_{ij}(T) + \beta_{ij}\gamma_{jk}c_{ik}(S)b_{ji}(T) = \tau_{ij}\gamma_{ji}c_{ij}(S)c_{jk}(T) + \tau_{ij}\tau_{jk}\beta_{kj}\gamma_{ji}c_{ik}(S)b_{jk}(T)$. Multiplying by both $C_{ij}$ and $C_{kj}$:
        \begin{align*}
            &C_{ij}C_{kj}\gamma_{jk}c_{jk}(S)c_{ij}(T) + C_{ij}C_{kj}\beta_{ij}\gamma_{jk}c_{ik}(S)b_{ji}(T) \\
            &\hspace{1em}= C_{ij}C_{kj}\tau_{ij}\gamma_{ji}c_{ij}(S)c_{jk}(T) + C_{ij}C_{kj}\tau_{ij}\tau_{jk}\beta_{kj}\gamma_{ji}c_{ik}(S)b_{jk}(T) \\
            &C_{ij}C_{kj}\gamma_{jk}c_{jk}(S)c_{ij}(T) + B_{ij}C_{kj}\gamma_{jk}c_{ik}(S)b_{ji}(T) \\
            &\hspace{1em}= C_{ji}C_{kj}\gamma_{ji}c_{ij}(S)c_{jk}(T) + C_{ji}B_{kj}\tau_{jk}\gamma_{ji}c_{ik}(S)b_{jk}(T) \\
            &C_{ij}C_{kj}\gamma_{jk}c_{jk}(S)c_{ij}(T) + B_{ij}C_{kj}\gamma_{jk}c_{ik}(S)b_{ji}(T) \\
            &\hspace{1em}= C_{jk}C_{kj}\gamma_{jk}c_{ij}(S)c_{jk}(T) + C_{kj}B_{kj}\gamma_{jk}c_{ik}(S)b_{jk}(T) \\
            &C_{ij}c_{jk}(S)c_{ij}(T) + B_{ij}c_{ik}(S)b_{ji}(T) = C_{jk}c_{ij}(S)c_{jk}(T) + B_{kj}c_{ik}(S)b_{jk}(T)
        \end{align*}
        By this last equality, we have illustrated that $Y_{12}(i,j,k)=0$.
        
        Given our conditions, all of the requisite equations are satisfied. Hence, the $n$-label Yang-Baxter equation has a nontrivial solution if and only if they all hold.
    \end{proof}

\subsection{The case $\Delta_{ij}=0$}

One of the most important special cases of the six-vertex model is the \emph{free fermion point}, where Baxter's $\Delta=0$. This subsection concerns an $n$-color analogue of that condition.

For any $i,j\in [0,n)$, define \[\Delta_{ij}(R) := \frac{A_iA_j + B_{ij}B_{ji} - C_{ij}C_{ji}}{A_iB_{ij}}.\] This definition is analogous to that of $\Delta_{ij}(S)$ and $\Delta_{ij}(T)$, but since some $R$-weights may equal 0, even in a nonzero solution (see next subsection), $\Delta_{ij}(R)$ may not always be defined. Therefore, we also define \[\Delta_{ij}^{num}(R) := A_iA_j + B_{ij}B_{ji} - C_{ij}C_{ji},\] noting that $\Delta_{ij}^{num}(R) = \Delta_{ji}^{num}(R)$.

Let $S$ and $T$ be sets of $n$-color weights (which may or may not satisfy (\ref{n-color-conditions})), and let $R$ be a nonzero scalar multiple of (\ref{n-color-parametrization}). Recall that we say $R\in V(Q)$ whenever $R,S$, and $T$ satisfy the Yang-Baxter equation.

\begin{prop} \label{ff-prop}
Fix $i,j\in [0,n)$. If $\Delta_{ij}(S) = \Delta_{ij}(T) = 0$, then $\Delta_{ij}^{num}(R) = \Delta_{ji}^{num}(R) = 0$.
\end{prop}

\begin{proof}
This follows from \cite[Theorem~3]{BBF-Schur-polynomials}, noting that the six $R$-weights in (\ref{n-color-parametrization}) which have subscripts in $\{i,j\}$ are a common scalar multiple of the $R$-weights in \cite[Theorem~2]{BBF-Schur-polynomials}.
\end{proof}

It is noteworthy that Proposition \ref{ff-prop} holds whether or not $R\in V(Q)$, and for each pair of labels $i,j$ independent of other pairs.

In the six-vertex model, the free fermion point simultaneously describes:

\begin{itemize}
    \item The center of the ``disordered regime'', where the particle interactions can be considered ``maximally entropic'' \cite[pp.~151-152]{Baxter-book}.
    \item The set of six-vertex models which can be solved via determinants as in the Lindstr\"{o}m–Gessel–Viennot Lemma \cite[Proposition~2.3]{Naprienko-ff}.
    \item A class of lattice models which are pairwise solvable, and which under a natural composition law form a subgroup isomorphic to $GL_2\times GL_1$ of the Yang-Baxter groupoid \cite{KBI, BBF-Schur-polynomials, Naprienko-groupoid}.
    \item The set of six-vertex models which have a discrete-time Hamiltonian expressible as an exponential operator in a Heisenberg algebra \cite[Theorem~4.1]{Hardt-Hamiltonians}.
\end{itemize}

Analogues of each of these appear to be open questions in the generality of the $n$-color ice model, and we refrain from any conjectures. However, Proposition \ref{ff-prop} tells us that if $\Delta_{ij}(S) = \Delta_{ij}(T)$ for some $i,j\in [0,n)$ that the above properties hold when we restrict to states of the lattice model involving only $i$ and $j$. This opens up the possibility of building up $n$-color results from results on subsystems, much as we have done in this paper for solvability.

\subsection{When are all R-vertex weights nonzero?}

    It is noteworthy that, assuming the $S$ and $T$-vertex weights are nonzero, in any nontrivial solution $C_{ij}\ne 0$ for all $i$, $j$. Nonetheless, there are cases where some $R$-vertex weights can be 0. In this section, we explore some such cases. To begin, we analyze some of the conditions in Theorem \ref{nlabelCond}.

    \begin{lem} \label{condLem}
        The final three conditions in Theorem \ref{nlabelCond} are equivalent to:
        \begin{center}
            $\beta_{ij} = \left(\displaystyle\frac{\gamma_{ij}}{\gamma_{ik}} - \gamma_{kj}\right) \displaystyle\displaystyle\frac{b_{ij}(S)c_{jk}(T)}{c_{ik}(S)c_{ji}(T)}$ \hspace{2em}
            $\beta_{ij} = \left(\tau_{ik}\displaystyle\frac{\gamma_{ij}}{\gamma_{ik}} - \gamma_{kj}\displaystyle\frac{\tau_{ij}}{\tau_{kj}}\right) \frac{b_{ij}(S)c_{kj}(T)}{c_{ki}(S)c_{ij}(T)}$ \vspace{0.5em}
            
            $\beta_{ij}b_{ji}(T)\displaystyle\frac{\tau_{ji}}{\gamma_{ji}} - \beta_{kj}b_{jk}(T)\frac{\tau_{jk}}{\gamma_{jk}}  =  \left(\frac{1}{\gamma_{jk}} - \frac{\gamma_{kj}}{\gamma_{ij}\gamma_{ji}}\right) \frac{c_{ij}(S)c_{jk}(T)}{c_{ik}(S)}$
        \end{center}
    \end{lem}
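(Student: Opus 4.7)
The plan is to prove each of the three target equations is equivalent to the corresponding condition among the last three of Theorem~\ref{nlabelCond} by direct algebraic rearrangement. Since every $S$- and $T$-weight lies in $K^\times$, all the quantities $\gamma_{ij}$, $\tau_{ij}$ and every denominator that appears are nonzero, so the manipulations reduce to clearing common factors, substituting the definitions (\ref{tau-beta-def}), (\ref{alpha-def}), (\ref{gamma-def}), and invoking the auxiliary condition $b_{ik}(S)/b_{ik}(T) = b_{jk}(S)/b_{jk}(T)$ from Theorem~\ref{nlabelCond} (which we may freely use, since we are rewriting conditions inside a jointly imposed system). The two tools I will lean on repeatedly are that $b$-ratio hypothesis, which lets $b_{ij}(T)/b_{ij}(S)$ be replaced by $b_{kj}(T)/b_{kj}(S)$ when $j$ is fixed, and the immediate identity $\tau_{ij}\tau_{ji}=1$ coming from the definition of $\tau$.

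For the first target, I isolate $\beta_{ij}$ in the original condition $\gamma_{ik}c_{jk}(S)b_{ij}(T) + \beta_{ij}\gamma_{ik}c_{ik}(S)c_{ji}(T) = \gamma_{ij}b_{ij}(S)c_{jk}(T)$, then factor out $b_{ij}(S)c_{jk}(T)/(c_{ik}(S)c_{ji}(T))$. The leftover factor $c_{jk}(S)b_{ij}(T)/(b_{ij}(S)c_{jk}(T))$ coincides with $\gamma_{kj}$ after applying the $b$-ratio hypothesis, which yields the first displayed equation. For the second target, I apply the analogous moves to the second original condition, additionally dividing through by $\tau_{ij}\gamma_{ji}$ and using $\tau_{ij}\tau_{ji}=1$ to convert the coefficients of the two surviving terms into $\tau_{ik}\gamma_{ij}/\gamma_{ik}$ and $\gamma_{kj}\tau_{ij}/\tau_{kj}$ respectively. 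For the third target, both $\beta_{ij}$ and $\beta_{kj}$ appear; here I move both $\beta$-terms of the third condition to the left, divide through by $\gamma_{jk}\gamma_{ji}c_{ik}(S)/\tau_{ji}$, and use $\tau_{ij}\tau_{ji}=1$ to collapse the $\beta_{kj}$ coefficient to $\tau_{jk}/\gamma_{jk}$. The resulting right-hand side has two purely $c$-type terms, and matching the second of them to $\gamma_{kj}c_{ij}(S)c_{jk}(T)/(\gamma_{ij}\gamma_{ji}c_{ik}(S))$ requires the identity $\tau_{ji}\gamma_{ij}c_{jk}(S)c_{ij}(T) = \gamma_{kj}c_{ij}(S)c_{jk}(T)$, which after expanding definitions once again reduces exactly to the $b$-ratio hypothesis.

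The obstacle is purely bookkeeping rather than conceptual: one must track indices carefully, remember which factors cancel under $\tau_{ij}\tau_{ji}=1$, and apply the $b$-ratio hypothesis at exactly the moment when a mixed $c$--$b$ cross-ratio needs to collapse into a single $\gamma$. Once the role of each auxiliary identity is laid out, each of the three equivalences is a short chain of substitutions, and the proof is complete by verifying them one at a time.
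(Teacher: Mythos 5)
Your proposal is correct and follows essentially the same route as the paper's proof: isolate the $\beta$-terms in each of the three conditions, divide by the (nonzero) coefficients, and use the hypothesis $b_{ij}(T)/b_{ij}(S)=b_{kj}(T)/b_{kj}(S)$ together with $\tau_{ij}\tau_{ji}=1$ to collapse the remaining cross-ratios into $\gamma_{kj}$ and $\gamma_{kj}/\tau_{kj}$. The identity you single out in the third case, $\tau_{ji}\gamma_{ij}c_{jk}(S)c_{ij}(T)=\gamma_{kj}c_{ij}(S)c_{jk}(T)$, is exactly the reduction the paper performs (there written by inserting $b_{ij}(S)b_{ij}(T)$ into numerator and denominator), so no gap remains.
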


    \begin{proof}
        Throughout this proof, we consistently refer to the identity $\frac{b_{ij}(T)}{b_{ij}(S)} = \frac{b_{kj}(T)}{b_{kj}(S)}$, one of the conditions in Theorem \ref{nlabelCond}. It is essential to expressing these conditions using $\gamma$ and $\tau$.
    
        The first condition from Theorem \ref{nlabelCond} that we simplify is $\gamma_{ik}c_{jk}(S)b_{ij}(T) + \beta_{ij}\gamma_{ik}c_{ik}(S)c_{ji}(T) = \gamma_{ij}b_{ij}(S)c_{jk}(T)$. We solve for $\beta_{ij}$ assuming it holds:
        \begin{align*}
            \beta_{ij}\gamma_{ik}c_{ik}(S)c_{ji}(T) &= \gamma_{ij}b_{ij}(S)c_{jk}(T) - \gamma_{ik}c_{jk}(S)b_{ij}(T) \\&= \left(\gamma_{ij} - \gamma_{ik}\frac{c_{jk}(S)b_{ij}(T)}{b_{ij}(S)c_{jk}(T)}\right)b_{ij}(S)c_{jk}(T) \\
            &= \left(\gamma_{ij} - \gamma_{ik}\frac{c_{jk}(S)b_{kj}(T)}{b_{kj}(S)c_{jk}(T)}\right)b_{ij}(S)c_{jk}(T) \\&= (\gamma_{ij} - \gamma_{ik}\gamma_{kj})b_{ij}(S)c_{jk}(T),
        \end{align*}
        assuming $\frac{b_{ij}(T)}{b_{ij}(S)} = \frac{b_{kj}(T)}{b_{kj}(S)}$. Dividing by $\gamma_{ik}c_{ik}(S)c_{ji}(T)$,
        $$\beta_{ij}=\left(\frac{\gamma_{ij}}{\gamma_{ik}} - \gamma_{kj}\right) \frac{b_{ij}(S)c_{jk}(T)}{c_{ik}(S)c_{ji}(T)}$$
        
        The second condition is $\gamma_{ij}b_{ij}(S)c_{kj}(T) = \tau_{ki}\tau_{ij}\gamma_{ik}c_{kj}(S)b_{ij}(T) + \tau_{ki}\beta_{ij}\gamma_{ik}c_{ki}(S)c_{ij}(T)$. Rearranging and factoring,
        \begin{align*}
            \beta_{ij}\gamma_{ik}c_{ki}(S)c_{ij}(T) &= \tau_{ik}\gamma_{ij}b_{ij}(S)c_{kj}(T) - \tau_{ij}\gamma_{ik}c_{kj}(S)b_{ij}(T) \\ 
            = \left(\tau_{ik}\gamma_{ij} - \tau_{ij}\gamma_{ik}\frac{b_{ij}(T)c_{kj}(S)}{b_{ij}(S)c_{kj}(T)}\right)b_{ij}(S)c_{kj}(T)
            &= \left(\tau_{ik}\gamma_{ij} - \tau_{ij}\gamma_{ik}\frac{b_{kj}(T)c_{kj}(S)}{b_{kj}(S)c_{kj}(T)}\right)b_{ij}(S)c_{kj}(T) \\
            &= (\tau_{ik}\gamma_{ij} - \tau_{ij}\gamma_{ik}\gamma_{kj}/\tau_{kj})b_{ij}(S)c_{kj}(T)
        \end{align*}
        where, as before, we assume $\frac{b_{ij}(T)}{b_{ij}(S)} = \frac{b_{kj}(T)}{b_{kj}(S)}$. Dividing by $\gamma_{ik}c_{ki}(S)c_{ij}(T)$, $$\beta_{ij} = \left(\tau_{ik}\frac{\gamma_{ij}}{\gamma_{ik}} - \gamma_{kj}\frac{\tau_{ij}}{\tau_{kj}}\right) \frac{b_{ij}(S)c_{kj}(T)}{c_{ki}(S)c_{ij}(T)}$$
        
        Last, we handle the condition $\gamma_{jk}c_{jk}(S)c_{ij}(T) + \beta_{ij}\gamma_{jk}c_{ik}(S)b_{ji}(T) = \tau_{ij}\gamma_{ji}c_{ij}(S)c_{jk}(T) + \tau_{ij}\tau_{jk}\beta_{kj}\gamma_{ji}c_{ik}(S)b_{jk}(T)$. Rearranging, we see that
        \begin{align*}
            &\beta_{ij}\gamma_{jk}c_{ik}(S)b_{ji}(T) - \tau_{ij}\tau_{jk}\beta_{kj}\gamma_{ji}c_{ik}(S)b_{jk}(T) = \tau_{ij}\gamma_{ji}c_{ij}(S)c_{jk}(T) - \gamma_{jk}c_{jk}(S)c_{ij}(T) \\
            &= \left(\tau_{ij}\gamma_{ji} - \gamma_{jk}\frac{c_{jk}(S)c_{ij}(T)}{c_{ij}(S)c_{jk}(T)}\right)c_{ij}(S)c_{jk}(T) \\
            &= \left(\tau_{ij}\gamma_{ji} - \gamma_{jk}\frac{c_{jk}(S)c_{ij}(T)b_{ij}(S)b_{ij}(T)}{c_{ij}(S)c_{jk}(T)b_{ij}(S)b_{ij}(T)}\right)c_{ij}(S)c_{jk}(T) \\
            &= \left(\tau_{ij}\gamma_{ji} - \gamma_{jk}\tau_{ij}/\gamma_{ij}\frac{c_{jk}(S)b_{kj}(T)}{c_{jk}(T)b_{kj}(S)}\right)c_{ij}(S)c_{jk}(T) = (\tau_{ij}\gamma_{ji} - \gamma_{jk}\gamma_{kj}\tau_{ij}/\gamma_{ij})c_{ij}(S)c_{jk}(T)
        \end{align*}
        Again, we used that $\frac{b_{ij}(T)}{b_{ij}(S)} = \frac{b_{kj}(T)}{b_{kj}(S)}$ to realize the presence of the factor $\gamma_{kj}$. Dividing by $\gamma_{ji}\gamma_{jk}\tau_{ij}c_{ik}(S)$ reveals
        $$\beta_{ij}b_{ji}(T)\frac{\tau_{ji}}{\gamma_{ji}} - \beta_{kj}b_{jk}(T)\frac{\tau_{jk}}{\gamma_{jk}} = \left(\frac{1}{\gamma_{jk}} - \frac{\gamma_{kj}}{\gamma_{ij}\gamma_{ji}}\right)\frac{c_{ij}(S)c_{jk}(T)}{c_{ik}(S)}$$

        Consequently, we have reformulated all three conditions as desired.
    \end{proof}

    This reformulation permits a precise characterization of the case where $B_{ij}=0$ in multiple ways. Recollect that $B_{ij}=0$ if and only if $\beta_{ij}=0$ so that we may work with $\beta_{ij}$ instead.

    It turns out that either all of the $\beta_{ij}$ are zero, or they are all nonzero:

    \begin{lem} \label{betaLem}
        $\beta_{ij}=0$ if and only if $\beta_{kl}=0$ for any labels $i$, $j$, $k$, $l$ with $i\ne j$ and $k\ne l$.
    \end{lem}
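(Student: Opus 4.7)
The plan is to split the biconditional into two one-step versions---fixing the first subscript and varying the second, and vice versa---and then chain them through an intermediate label. Throughout, I will use that every $S$-- and $T$--weight is nonzero, so the denominators and the auxiliary quantities $\gamma_{ij}$, $\tau_{ij}$ never vanish; I also tacitly work with $n \ge 3$, so that a third label is always available.

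For the first one-step version (fixed first subscript), the identity
\[
\frac{\beta_{ij}}{\gamma_{ij}\,b_{ij}(S)} \;=\; \frac{\beta_{ik}}{\gamma_{ik}\,b_{ik}(S)}
\]
from Theorem~\ref{nlabelCond} immediately gives $\beta_{ij}=0 \Leftrightarrow \beta_{ik}=0$, after dividing by the nonzero coefficients.

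For the second one-step version (fixed second subscript), I return to the definition
\[
\beta_{ij} \;=\; \frac{a_j(T)\,b_{ij}(S) \,-\, a_j(S)\,b_{ij}(T)}{c_{ij}(S)\,c_{ji}(T)}.
\]
Clearing the nonzero denominator, $\beta_{ij}=0$ is equivalent to $a_j(T)/a_j(S) = b_{ij}(T)/b_{ij}(S)$. The solvability condition $b_{ik}(S)/b_{ik}(T) = b_{jk}(S)/b_{jk}(T)$ from Theorem~\ref{nlabelCond} says exactly that the ratio $b_{\bullet\,j}(T)/b_{\bullet\,j}(S)$ is independent of the first subscript. Hence the condition $a_j(T)/a_j(S) = b_{ij}(T)/b_{ij}(S)$ really depends on $j$ alone, giving $\beta_{ij}=0 \Leftrightarrow \beta_{kj}=0$ for any $k \notin \{i,j\}$.

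The lemma then follows by a short chain: for arbitrary $(i,j)$ and $(k,l)$ with $i\ne j$ and $k\ne l$, pass $\beta_{ij} \leftrightarrow \beta_{im} \leftrightarrow \beta_{km} \leftrightarrow \beta_{kl}$ through a convenient intermediate label $m$, swapping the order of the changes of first and second subscript to avoid subscript collisions as needed. There is no real obstacle here---the content of the lemma is captured by the second observation, that the vanishing of $\beta_{ij}$ depends only on the second index---and the rest is bookkeeping.
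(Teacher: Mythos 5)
Your proposal is correct and follows essentially the same route as the paper: both arguments use the condition $\beta_{ij}/(\gamma_{ij}b_{ij}(S)) = \beta_{ik}/(\gamma_{ik}b_{ik}(S))$ to change the second subscript, and the characterization $\beta_{ij}=0 \Leftrightarrow a_j(T)/a_j(S)=b_{ij}(T)/b_{ij}(S)$ together with the ratio condition on the $b$-weights to change the first subscript, then chain through an intermediate label. Your explicit remark about ordering the two moves to avoid subscript collisions is a small point of care that the paper's proof glosses over, but the substance is identical.
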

    \begin{proof}
        By Corollary \ref{rectExp}, $\beta_{ij}=0$ if and only if $\frac{a_j(T)}{a_j(S)} = \frac{b_{ij}(T)}{b_{ij}(S)}$. By Theorem \ref{nlabelCond}, $\frac{b_{ij}(T)}{b_{ij}(S)} = \frac{b_{kj}(T)}{b_{kj}(S)}$, so $\frac{a_j(T)}{a_j(S)} = \frac{b_{kj}(T)}{b_{kj}(S)}$ and $\beta_{kj}=0$. Further, by Theorem \ref{nlabelCond}, $\beta_{kj}\gamma_{kl} b_{kl}(S) = \beta_{kl}\gamma_{kj}b_{kj}(S)$ so $\beta_{kl}=0$. By symmetry, $\beta_{ij}=0$ if and only if $\beta_{kl}=0$.
    \end{proof}

    \begin{prop} \label{gamDecompProp}
        For any tuple of distinct labels $i$, $j$, $k$, the following are equivalent:
        \begin{enumerate}[(1)]
            \item $\beta_{ij}=0$
            \item $\gamma_{ij}=\gamma_{ik}\gamma_{kj}$
            \item $\displaystyle\frac{\gamma_{ij}}{\gamma_{ik}\gamma_{kj}} = \frac{\tau_{ij}}{\tau_{ik}\tau_{kj}}$
        \end{enumerate}
    \end{prop}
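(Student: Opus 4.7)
The plan is to deduce all three equivalences directly from the reformulations of the Yang-Baxter conditions given in Lemma \ref{condLem}, observing that the trailing factors involving only $S$- and $T$-weights never vanish under our standing hypothesis that all $S$- and $T$-weights lie in $K^\times$, and that $\gamma_{\bullet\bullet}$ and $\tau_{\bullet\bullet}$ are always nonzero (being ratios of such weights).

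For (1) $\Leftrightarrow$ (2), I would invoke the first expression from Lemma \ref{condLem},
\[
\beta_{ij} \;=\; \left(\frac{\gamma_{ij}}{\gamma_{ik}} - \gamma_{kj}\right)\frac{b_{ij}(S)c_{jk}(T)}{c_{ik}(S)c_{ji}(T)}.
\]
Since the factor $\frac{b_{ij}(S)c_{jk}(T)}{c_{ik}(S)c_{ji}(T)}$ is nonzero and $\gamma_{ik}\neq 0$, the vanishing of $\beta_{ij}$ is equivalent to $\gamma_{ij}/\gamma_{ik} = \gamma_{kj}$, i.e., $\gamma_{ij} = \gamma_{ik}\gamma_{kj}$.

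For (1) $\Leftrightarrow$ (3), I would use the second expression from Lemma \ref{condLem},
\[
\beta_{ij} \;=\; \left(\tau_{ik}\frac{\gamma_{ij}}{\gamma_{ik}} - \gamma_{kj}\frac{\tau_{ij}}{\tau_{kj}}\right)\frac{b_{ij}(S)c_{kj}(T)}{c_{ki}(S)c_{ij}(T)}.
\]
By the same nonvanishing considerations, $\beta_{ij} = 0$ iff $\tau_{ik}\,\gamma_{ij}\,\tau_{kj} = \gamma_{kj}\,\tau_{ij}\,\gamma_{ik}$, which (dividing both sides by the product $\gamma_{ik}\gamma_{kj}\tau_{ij}\tau_{kj}$, all nonzero) is precisely $\gamma_{ij}/(\gamma_{ik}\gamma_{kj}) = \tau_{ij}/(\tau_{ik}\tau_{kj})$.

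There is essentially no obstacle here: the real content is hidden in Lemma \ref{condLem}, whose proof does the work of converting the three intertwined conditions of Theorem \ref{nlabelCond} into expressions in which $\beta_{ij}$ is isolated on one side. Once those reformulations are in hand, Proposition \ref{gamDecompProp} is a one-line consequence of reading off when the right-hand side factors vanish. The only thing to keep track of is that we genuinely have $\gamma_{ik},\gamma_{kj},\tau_{ij},\tau_{kj}\neq 0$ and that the ambient product of $b$- and $c$-weights is nonzero, both of which follow immediately from the standing nondegeneracy assumption on $S$ and $T$.
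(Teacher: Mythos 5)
Your proposal is correct and follows exactly the paper's own argument: the paper likewise reads off (1)~$\Leftrightarrow$~(2) from the first reformulation of $\beta_{ij}$ in Lemma~\ref{condLem} and (1)~$\Leftrightarrow$~(3) from the second, using that the trailing factors of $S$- and $T$-weights are nonzero. Your added remarks on the nonvanishing of $\gamma_{\bullet\bullet}$ and $\tau_{\bullet\bullet}$ only make explicit what the paper leaves implicit.
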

    \begin{proof}
        By Lemma \ref{condLem}, $\beta_{ij} = \left(\frac{\gamma_{ij}}{\gamma_{ik}} - \gamma_{kj}\right) \frac{b_{ij}(S)c_{jk}(T)}{c_{ik}(S)c_{ji}(T)}$. So $\beta_{ij}=0$ if and only if $\gamma_{ij}=\gamma_{ik}\gamma_{kj}$.

        By Lemma \ref{condLem}, $\beta_{ij} = \left(\tau_{ik}\frac{\gamma_{ij}}{\gamma_{ik}} - \gamma_{kj}\frac{\tau_{ij}}{\tau_{kj}}\right) \frac{b_{ij}(S)c_{kj}(T)}{c_{ki}(S)c_{ij}(T)}$. So $\beta_{ij}=0$ if and only if $\tau_{ik}\frac{\gamma_{ij}}{\gamma_{ik}} - \gamma_{kj}\frac{\tau_{ij}}{\tau_{kj}} = 0$ or, equivalently, $\frac{\gamma_{ij}}{\gamma_{ik}\gamma_{kj}} = \frac{\tau_{ij}}{\tau_{ik}\tau_{kj}}$.
    \end{proof}

    \begin{cor}
        If $\beta_{ij}=0$, then
        \begin{enumerate} [(1)]
            \item $\tau_{ij}=\tau_{ik}\tau_{kj}$.
            \item $\gamma_{ij}\gamma_{ji}=1$.
        \end{enumerate}
    \end{cor}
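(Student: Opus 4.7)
The plan is to derive both parts directly from Proposition \ref{gamDecompProp}, Lemma \ref{betaLem}, and, for part (2), the third reformulated identity of Lemma \ref{condLem}. The enabling observation is Lemma \ref{betaLem}: once a single $\beta_{ij}$ vanishes, every $\beta_{xy}$ for distinct $x,y$ vanishes, so the preceding equivalences may be applied freely to any triple of distinct labels.

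Part (1) is essentially a one-line deduction. Proposition \ref{gamDecompProp} provides a three-way equivalence among $\beta_{ij}=0$, the multiplicativity $\gamma_{ij}=\gamma_{ik}\gamma_{kj}$, and the ratio identity $\gamma_{ij}/(\gamma_{ik}\gamma_{kj}) = \tau_{ij}/(\tau_{ik}\tau_{kj})$. Under the hypothesis $\beta_{ij}=0$, the left-hand ratio equals $1$, so the ratio identity forces $\tau_{ij}/(\tau_{ik}\tau_{kj}) = 1$, which is precisely $\tau_{ij} = \tau_{ik}\tau_{kj}$.

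For part (2), first use Lemma \ref{betaLem} to extend the vanishing to every $\beta_{xy}$; in particular $\beta_{ij}=\beta_{kj}=0$, so the left-hand side of Lemma \ref{condLem}'s third identity collapses to zero. Since all $S,T$ Boltzmann weights (and hence all $c$-weights and all $\gamma$'s) are nonzero, the parenthesized factor on the right must vanish, yielding $\gamma_{ij}\gamma_{ji} = \gamma_{jk}\gamma_{kj}$. Simultaneously, applying Proposition \ref{gamDecompProp}(2) to the triples $(i,j,k)$ and $(j,i,k)$ produces $\gamma_{ij} = \gamma_{ik}\gamma_{kj}$ and $\gamma_{ji} = \gamma_{jk}\gamma_{ki}$, whose product rearranges to $\gamma_{ij}\gamma_{ji} = (\gamma_{ik}\gamma_{ki})(\gamma_{jk}\gamma_{kj})$. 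Setting $c := \gamma_{ij}\gamma_{ji}$, the identity $\gamma_{ij}\gamma_{ji} = \gamma_{jk}\gamma_{kj}$ re-applied to the triple $(k,i,j)$ also gives $\gamma_{ik}\gamma_{ki} = c$, so the product identity becomes $c = c^2$; since $c\neq 0$, we conclude $c=1$.

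The only real obstacle here is bookkeeping: one must track carefully which triple each cited result is being applied to, and verify that Lemma \ref{betaLem} legitimizes every such application. No nontrivial computation remains once the definitions are unwound.
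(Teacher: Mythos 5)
Your proposal is correct and follows essentially the same route as the paper: part (1) reads off $\tau_{ij}=\tau_{ik}\tau_{kj}$ from the two equivalent conditions in Proposition \ref{gamDecompProp}, and part (2) combines Lemma \ref{betaLem}, the third identity of Lemma \ref{condLem} (giving $\gamma_{ij}\gamma_{ji}=\gamma_{jk}\gamma_{kj}$), and the multiplicativity $\gamma_{ij}=\gamma_{ik}\gamma_{kj}$ exactly as the paper does. The only difference is cosmetic: you package the final step as $c=c^2$ with $c=\gamma_{ij}\gamma_{ji}\neq 0$, whereas the paper cancels $\gamma_{kj}\gamma_{jk}$ directly to get $\gamma_{ik}\gamma_{ki}=1$ and then permutes labels.
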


    \begin{proof}
        By Proposition \ref{gamDecompProp}, if $\beta_{ij}=0$, $\gamma_{ij}=\gamma_{ik}\gamma_{kj}$ and $\frac{\gamma_{ij}}{\gamma_{ik}\gamma_{kj}} = \frac{\tau_{ij}}{\tau_{ik}\tau_{kj}}$ so $\tau_{ij}=\tau_{ik}\tau_{kj}$.

        By Lemma \ref{betaLem}, $\beta_{ij}=0$ if and only if $\beta_{ij}=0$ and $\beta_{kj}=0$. Then by Lemma \ref{condLem}, $\beta_{ij}b_{ji}(T)\frac{\tau_{ji}}{\gamma_{ji}} - \beta_{kj}b_{jk}(T)\frac{\tau_{jk}}{\gamma_{jk}} = \left(\frac{1}{\gamma_{jk}} - \frac{\gamma_{kj}}{\gamma_{ij}\gamma_{ji}}\right)\frac{c_{ij}(S)c_{jk}(T)}{c_{ik}(S)}$ so $\beta_{ij}=0$ only if $\gamma_{ij}\gamma_{ji}=\gamma_{kj}\gamma_{jk}$. By Proposition \ref{gamDecompProp}, $\beta_{ij}=0$ is equivalent to $\gamma_{ij}=\gamma_{ik}\gamma_{kj}$, so we are able to write $\gamma_{ik}\gamma_{kj}\gamma_{jk}\gamma_{ki}=\gamma_{kj}\gamma_{jk}$ and $\gamma_{ik}\gamma_{ki}=1$. We are free to permute this by Lemma \ref{betaLem}, so $\gamma_{ij}\gamma_{ji}=1$.
    \end{proof}

    The extra conditions that arise from the 3-label case are what allow this characterization of when $B_{ij}=0$. Because $A_{i}$ appears in none of the Yang-Baxter polynomials $Y_8(i,j,k)$ -- $Y_{12}(i,j,k)$, it has no further constraints than appear in the 2-label case. In particular, from Corollary \ref{rectExp}, $A_{i}=0$ if and only if $\frac{a_{i}(S)}{b_{ij}(T)}(a_{j}(T)b_{ij}(S) - a_{j}(S)b_{ij}(T)) = c_{ij}(T)c_{ji}(S)$.
    
We conclude the current section by giving some simple examples where the $B_{ij}$ are all zero.

\begin{eg}
    Suppose that corresponding $S$ and $T$-vertex weights are equal i.e. $a_i(S)=a_i(T), b_{ij}(S) = b_{ij}(T), c_{ij}(S)=c_{ij}(T)$ for any distinct pair of labels $i\ne j$. Then the $R$-vertex weights are, up to scaling, given by
    $$\{A_{i}=1, B_{ij}=0, C_{ij}=1\ |\ i,j\in[0,n)\}$$
    An inspection of Corollary \ref{rectExp} gives these $R$-weights. Furthermore, one can see that $\beta_{ij}=0$ and $\alpha_{ij}=\gamma_{ij}=\tau_{ij}=1$, which satisfy the solvability conditions in Theorem \ref{nlabelCond}.
\end{eg}

\begin{eg}
    Let $a,b,c\in K^\times$, and consider parameters $z_i(x)\in K^\times$, $i\in [0,n)$, $x\in\{S,T\}$, satisfying $\frac{z_i(S)}{z_i(T)}=\frac{z_j(S)}{z_j(T)}$. Given weights
    $$\{a_{i}(x)=az_i(x), b_{ij}(x)=bz_i(x), c_{ij}(x)=cz_i(x)\ |\ i,j\in[0,n), x\in\{S,T\}\},$$
    the corresponding $R$-vertex weights are
    $$\{A_{i}=1, B_{ij}=0, C_{ij}=1\ |\ i,j\in[0,n)\}.$$

    To verify solvability, note that $\frac{z_j(S)z_i(T)}{z_i(S)z_j(T)}=1$. Then, $\Delta_{ij}(x) = \frac{a_{i}(x)a_j(x) + b_{ij}(x)b_{ji}(x) - c_{ij}(x)c_{ji}(x)}{a_{i}(x)b_{ij}(x)} = \frac{z_j(x)}{z_i(x)}\frac{a^2+b^2-c^2}{ab}$ does not depend on $S$ or $T$, and furthermore we compute $\beta_{ij} = \frac{ab}{c^2}(1 - \frac{z_j(S)z_i(T)}{z_i(S)z_j(T)}) = 0$, $\tau_{ij} = \gamma_{ij} = \frac{z_i(T)z_j(S)}{z_i(S)z_j(T)} = 1$. Plugging these quantities into (\ref{n-color-conditions}), we see that the conditions hold.
\end{eg}

    The $S$ and $T$ weights from the second example have a straightforward combinatorial interpretation. Setting $z_i(x)=1$ for all $i$ and $x$ gives identical $S$ and $T$ weights, so puts us back in the case of the first example. For general $z_i(x)$, one can start with identical $S$ and $T$ weights and then scale each rectangular vertex weight by a parameter, $z_i(x)$, depending on the color on the left edge of the vertex. It can be seen by analyzing the dynamics of the colored paths that is the same on any state of the lattice model with fixed boundary conditions, so factors out of the partition function. Therefore, in this very simple case, we see that such a scaling does not modify the $R$-weights. It is worth exploring to what extent this is a general phenomenon.

\section{Relationship to quantum group solutions} \label{sec:qgroups-solutions}

In this section, we show two symmetries of the solvability criterion in Theorem \ref{nlabelCond}. One of these involves modifications to the $b$-weights, and the other involves modifications to the $c$-weights. The former transformation is related to the phenomenon of Drinfeld-Reshetikhin twisting.

\subsection{Standard $R$-matrices of $U_q(\widehat{\mathfrak{gl}}_{n})$ and $U_q(\widehat{\mathfrak{gl}}_{m|n})$}

Let $\mathcal{H}$ be a Hopf algebra, with comultiplication $\Delta: \mathcal{H}\to \mathcal{H}\otimes\mathcal{H}$. $\mathcal{H}$ is called \emph{quasitriangular} if there exists an invertible element $R\in \mathcal{H}\otimes \mathcal{H}$ that satisfies \[(\Delta\otimes \text{id})R = R_{13}R_{23}, \hspace{20pt} (\text{id}\otimes\Delta)R = R_{13}R_{12}, \hspace{20pt} \tau\Delta(x) = R\Delta(x)R^{-1},\] where $R_{ij}$ refers to the embedding of $R$ in the $i,j$ tensor factors of $\mathcal{H}\otimes \mathcal{H}\otimes \mathcal{H}$ and $\tau:\mathcal{H}\otimes\mathcal{H}\to \mathcal{H}\otimes\mathcal{H}$ is the map $x\otimes y\mapsto y\otimes x$. $R$ is often called a \emph{universal $R$-matrix} for $\mathcal{H}$, and satisfies the Yang-Baxter equation \begin{align} R_{12}R_{13}R_{23} = R_{23}R_{13}R_{12}. \label{universal-YBE}\end{align} See, for instance, \cite{Kassel-quantum-groups} for more details on these objects.

This version of the Yang-Baxter equation is related to ours in the following sense. Given vector spaces $U,V,W$, define the operator $R\in\End(U\otimes V)$ by the formula \[R(u\otimes v) = \sum_{u',v'} R(u,v,u',v') u'\otimes v',\] where every vector appearing here is a basis vector. Here, $R(u,v,u',v')$ denotes the Boltzmann weight of the vertex

\begin{center}
\begin{tikzpicture}[scale=0.7]
\draw (0,0) to [out = 0, in = 180] (2,2);
\draw (0,2) to [out = 0, in = 180] (2,0);
\draw[fill=white] (0,0) circle (.35);
\draw[fill=white] (0,2) circle (.35);
\draw[fill=white] (2,2) circle (.35);
\draw[fill=white] (2,0) circle (.35);
\node at (0,0) {$u$};
\node at (0,2) {$v$};
\node at (2,2) {$u'$};
\node at (2,0) {$v'$};
\end{tikzpicture}.
\end{center}

Similarly, define $S\in\End(U\otimes W), T\in\End(V\otimes W)$ by \[S(u\otimes w) = \sum_{u',w'} S(u,w,u',w') u'\otimes w' \hspace{20pt} T(v\otimes w) = \sum_{v',w'} T(v,w,v',w') v'\otimes w',\] where  $S(u,w,u',w')$ and $T(v,w,v',w')$ refer to the appropriate Boltzmann weights.

In our setting, all of $U,V,W\cong\mathbb{C}^n$. The edge labels $1,\ldots,n$ for all our vertices are therefore in bijection with the standard bases of $U,V$, and $W$. With this viewpoint, the Yang-Baxter equation (\ref{YBE-diagram}) becomes equivalent to the equation $RST = TSR$ in $U\otimes V\otimes W$. Specializing $R,S$, and $T$ to $R_{12}, R_{13}$, and $R_{23}$ in (\ref{universal-YBE}) gives that version of the Yang-Baxter equation.

Now consider the case where $\mathcal{H}$ is the quantum affine superalgebra $U_q(\widehat{\mathfrak{gl}}_{m|n})$. The standard $R$-matrix\footnote{By this we mean the universal $R$-matrix of $U_q(\widehat{\mathfrak{gl}}_{m|n})$, applied to the module $V(z)\otimes V(1)$, where $V$ is the standard evaluation representation of $U_q(\widehat{\mathfrak{gl}}_{m|n})$.} is \begin{align*} R := R(z) = &\sum_{i\le m} (zq-q^{-1}) e_{ii}\otimes e_{ii} + \sum_{i>m} (q-zq^{-1}) e_{ii}\otimes e_{ii} + \sum_{i\ne j} (-1)^{\delta_{i\le m}\delta_{j\le m}} (z-1) e_{ii}\otimes e_{jj} \\&+ (q-q^{-1})\sum_{i>j} e_{ij}\otimes e_{ji} + z(q-q^{-1})\sum_{i<j} e_{ij}\otimes e_{ji},\end{align*} where all sums require $i,j\in [1,m+n]$. (Here, we are considering the ``ungraded'' Yang-Baxter equation \cite{Kojima-superalgebra}). The coefficients of this $R$-matrix give the Boltzmann weights in Figure \ref{std-R-matrix-superalgebra}.

\begin{figure}[h]
\centering
\scalebox{1.05}{$
\begin{array}{c@{\hspace{10pt}}c@{\hspace{10pt}}c}
\toprule
\vx{a_i} & \vx{b_{ij}} & \vx{c_{ij}} \\
\midrule
\begin{tikzpicture}
\coordinate (a) at (-.75, 0);
\coordinate (b) at (0, .75);
\coordinate (c) at (.75, 0);
\coordinate (d) at (0, -.75);
\coordinate (aa) at (-.75,.5);
\coordinate (cc) at (.75,.5);
\draw (a)--(0,0);
\draw (b)--(0,0);
\draw (c)--(0,0);
\draw (d)--(0,0);
\draw[fill=white] (a) circle (.25);
\draw[fill=white] (b) circle (.25);
\draw[fill=white] (c) circle (.25);
\draw[fill=white] (d) circle (.25);
\node at (0,1) { };
\node at (a) {$i$};
\node at (b) {$i$};
\node at (c) {$i$};
\node at (d) {$i$};
\end{tikzpicture}
%%%%%%%
&
\begin{tikzpicture}
\coordinate (a) at (-.75, 0);
\coordinate (b) at (0, .75);
\coordinate (c) at (.75, 0);
\coordinate (d) at (0, -.75);
\coordinate (aa) at (-.75,.5);
\coordinate (cc) at (.75,.5);
\draw (a)--(0,0);
\draw (b)--(0,0);
\draw (c)--(0,0);
\draw (d)--(0,0);
\draw[fill=white] (a) circle (.25);
\draw[fill=white] (b) circle (.25);
\draw[fill=white] (c) circle (.25);
\draw[fill=white] (d) circle (.25);
\node at (0,1) { };
\node at (a) {$i$};
\node at (b) {$j$};
\node at (c) {$i$};
\node at (d) {$j$};
\end{tikzpicture}
%%%%%%%
&
\begin{tikzpicture}
\coordinate (a) at (-.75, 0);
\coordinate (b) at (0, .75);
\coordinate (c) at (.75, 0);
\coordinate (d) at (0, -.75);
\coordinate (aa) at (-.75,.5);
\coordinate (cc) at (.75,.5);
\draw (a)--(0,0);
\draw (b)--(0,0);
\draw (c)--(0,0);
\draw (d)--(0,0);
\draw[fill=white] (a) circle (.25);
\draw[fill=white] (b) circle (.25);
\draw[fill=white] (c) circle (.25);
\draw[fill=white] (d) circle (.25);
\node at (0,1) { };
\node at (a) {$i$};
\node at (b) {$j$};
\node at (c) {$j$};
\node at (d) {$i$};
\end{tikzpicture}
%%%%%%%
\\
\midrule
\begin{array} {c@{\hspace{10pt}}c} zq-q^{-1}, & \text{if } i\le m \\ q-zq^{-1}, & \text{if } i>m \end{array} & \begin{array} {c@{\hspace{10pt}}c} 1-z, & \text{if } i,j\le m \\ z-1, & \text{otherwise} \end{array} & \begin{array} {c@{\hspace{10pt}}c} (q-q^{-1}), & \text{if } i>j \\ z(q-q^{-1}), & \text{if } i<j \end{array} \\
   \bottomrule
\end{array}$}
\caption{The admissible vertices and Boltzmann weights for the standard $U_q(\widehat{\mathfrak{gl}}_{m|n})$ $R$-matrix. Here, $i,j\in [1,m+n]$, and $z$ depends on the choice of $S$ or $T$.}
    \label{std-R-matrix-superalgebra}
\end{figure}

Note that these weights satisfy our conditions in Theorem \ref{nlabelCond}.

Setting $m=0$, we obtain the case of the quantum affine algebra $U_q(\widehat{\mathfrak{gl}}_n)$. The standard $R$-matrix is \begin{align*}R := R(z) = &\sum_i (q-zq^{-1}) e_{ii}\otimes e_{ii} + \sum_{i\ne j} (z-1) e_{ii}\otimes e_{jj} \\&+ (q-q^{-1})\sum_{i>j} e_{ij}\otimes e_{ji} + z(q-q^{-1})\sum_{i<j} e_{ij}\otimes e_{ji},\end{align*} and the corresponding Boltzmann weights are given in Figure \ref{std-R-matrix}.

\begin{figure}[h]
\centering
\scalebox{1.05}{$
\begin{array}{c@{\hspace{10pt}}c@{\hspace{10pt}}c}
\toprule
\vx{a_i} & \vx{b_{ij}} & \vx{c_{ij}} \\
\midrule
\begin{tikzpicture}
\coordinate (a) at (-.75, 0);
\coordinate (b) at (0, .75);
\coordinate (c) at (.75, 0);
\coordinate (d) at (0, -.75);
\coordinate (aa) at (-.75,.5);
\coordinate (cc) at (.75,.5);
\draw (a)--(0,0);
\draw (b)--(0,0);
\draw (c)--(0,0);
\draw (d)--(0,0);
\draw[fill=white] (a) circle (.25);
\draw[fill=white] (b) circle (.25);
\draw[fill=white] (c) circle (.25);
\draw[fill=white] (d) circle (.25);
\node at (0,1) { };
\node at (a) {$i$};
\node at (b) {$i$};
\node at (c) {$i$};
\node at (d) {$i$};
\end{tikzpicture}
%%%%%%%
&
\begin{tikzpicture}
\coordinate (a) at (-.75, 0);
\coordinate (b) at (0, .75);
\coordinate (c) at (.75, 0);
\coordinate (d) at (0, -.75);
\coordinate (aa) at (-.75,.5);
\coordinate (cc) at (.75,.5);
\draw (a)--(0,0);
\draw (b)--(0,0);
\draw (c)--(0,0);
\draw (d)--(0,0);
\draw[fill=white] (a) circle (.25);
\draw[fill=white] (b) circle (.25);
\draw[fill=white] (c) circle (.25);
\draw[fill=white] (d) circle (.25);
\node at (0,1) { };
\node at (a) {$i$};
\node at (b) {$j$};
\node at (c) {$i$};
\node at (d) {$j$};
\end{tikzpicture}
%%%%%%%
&
\begin{tikzpicture}
\coordinate (a) at (-.75, 0);
\coordinate (b) at (0, .75);
\coordinate (c) at (.75, 0);
\coordinate (d) at (0, -.75);
\coordinate (aa) at (-.75,.5);
\coordinate (cc) at (.75,.5);
\draw (a)--(0,0);
\draw (b)--(0,0);
\draw (c)--(0,0);
\draw (d)--(0,0);
\draw[fill=white] (a) circle (.25);
\draw[fill=white] (b) circle (.25);
\draw[fill=white] (c) circle (.25);
\draw[fill=white] (d) circle (.25);
\node at (0,1) { };
\node at (a) {$i$};
\node at (b) {$j$};
\node at (c) {$j$};
\node at (d) {$i$};
\end{tikzpicture}
%%%%%%%
\\
\midrule
q-zq^{-1} & z-1 & \begin{array} {c@{\hspace{10pt}}c} q-q^{-1}, & \text{if } i>j \\ z(q-q^{-1}), & \text{if } i<j \end{array} \\
   \bottomrule
\end{array}$}
\caption{The admissible vertices and Boltzmann weights for the standard $U_q(\mathfrak{gl}_n)$ $R$-matrix. Here, $i,j\in [1,n]$, and $z$ depends on the choice of $S$ or $T$.}
    \label{std-R-matrix}
\end{figure}

\subsection{Drinfeld-Reshetikhin Twisting}

Drinfeld-Reshetikhin twisting is a method to produce another quasitriangular Hopf algebra that is isomorphic to $\mathcal{H}$ as an algebra, but has a different comultiplication and $R$-matrix, resulting in new solutions of the Yang-Baxter equation. This flexibility is important in lattice model many contexts. For instance, a Drinfeld-Reshetikhin twist can allow us to work with \emph{stochastic} Boltzmann weights, which are important in integrable probability \cite{BorodinWheeler-bosonic}, or match a desired special function.

Reshetikhin \cite{Reshetikhin-twist} showed that if $F\in \mathcal{H}\otimes\mathcal{H}$ satisfies the conditions \begin{align} (\Delta\otimes \text{id})F = F_{13}F_{23}, \hspace{10pt} (\text{id}\otimes\Delta)F = F_{13}F_{12}, \hspace{10pt} F_{12}F_{13}F_{23} = F_{23}F_{13}F_{12}, \hspace{10pt} FF_{21} = 1, \label{twist-conditions}\end{align} then there exists a unique quasitriangular Hopf algebra $R^{(F)}$ with the same multiplication as $\mathcal{H}$, comultiplication given by $\Delta^{(F)}(a) := F\Delta(a)F^{-1}$, and universal $R$-matrix given by $R^{(F)} = F_{21}RF^{-1}$.

Let $U_q := U_q(\mathfrak{gl}_n)$. Reshetikhin showed that for for any set of nonzero complex numbers $\{f_{ij} |1\le i<j\le n\}$ the element \[F = \exp\left(\sum_{i<j} (H_i\otimes H_j - H_j\otimes H_i)f_{ij}\right) \in U_q\otimes U_q\] satisfies (\ref{twist-conditions}). Here, the $H_i$ are Cartan-like generators of $U_q$ \cite[\S~2]{Reshetikhin-twist}.

The standard $R$-matrix $R^{(F)}$ of the resulting twist $U_q^{(F)}$ is given by the following formula \cite[\S~4]{BBBF-Hecke-modules}: \begin{align*}R^{(F)}(z) = &\sum_i (q-zq^{-1}) e_{ii}\otimes e_{ii} + \sum_{i<j} \rho_{ij}(z-1) e_{ii}\otimes e_{jj} \\&+ \sum_{i>j} \rho_{ji}^{-1}(z-1)e_{ii}\otimes e_{jj} + (q-q^{-1})\sum_{i>j} e_{ij}\otimes e_{ji} + z(q-q^{-1})\sum_{i<j} e_{ij}\otimes e_{ji}.\end{align*} where $\rho_{ij} = (-1)^{\delta_{i\le m} +\delta_{j\le m}} \exp(-2f_{ij} + 2f_{i,j-1} + 2f_{i-1,j} - 2f_{i-1,j-1})$, and $f_{ii}$ is taken to be 0. When $i>j$, define $\rho_{ij} = \rho_{ji}^{-1}$. Note that the only coefficients changed by this twist are those corresponding to vectors of the form $e_{ii}\otimes e_{jj}$. The resulting Boltzmann weights are given in Figure \ref{twisted-std-R-matrix}.

\begin{figure}[h]
\centering
\scalebox{1.05}{$
\begin{array}{c@{\hspace{10pt}}c@{\hspace{10pt}}c}
\toprule
\vx{a_i} & \vx{b_{ij}} & \vx{c_{ij}} \\
\midrule
\begin{tikzpicture}
\coordinate (a) at (-.75, 0);
\coordinate (b) at (0, .75);
\coordinate (c) at (.75, 0);
\coordinate (d) at (0, -.75);
\coordinate (aa) at (-.75,.5);
\coordinate (cc) at (.75,.5);
\draw (a)--(0,0);
\draw (b)--(0,0);
\draw (c)--(0,0);
\draw (d)--(0,0);
\draw[fill=white] (a) circle (.25);
\draw[fill=white] (b) circle (.25);
\draw[fill=white] (c) circle (.25);
\draw[fill=white] (d) circle (.25);
\node at (0,1) { };
\node at (a) {$i$};
\node at (b) {$i$};
\node at (c) {$i$};
\node at (d) {$i$};
\end{tikzpicture}
%%%%%%%
&
\begin{tikzpicture}
\coordinate (a) at (-.75, 0);
\coordinate (b) at (0, .75);
\coordinate (c) at (.75, 0);
\coordinate (d) at (0, -.75);
\coordinate (aa) at (-.75,.5);
\coordinate (cc) at (.75,.5);
\draw (a)--(0,0);
\draw (b)--(0,0);
\draw (c)--(0,0);
\draw (d)--(0,0);
\draw[fill=white] (a) circle (.25);
\draw[fill=white] (b) circle (.25);
\draw[fill=white] (c) circle (.25);
\draw[fill=white] (d) circle (.25);
\node at (0,1) { };
\node at (a) {$i$};
\node at (b) {$j$};
\node at (c) {$i$};
\node at (d) {$j$};
\end{tikzpicture}
%%%%%%%
&
\begin{tikzpicture}
\coordinate (a) at (-.75, 0);
\coordinate (b) at (0, .75);
\coordinate (c) at (.75, 0);
\coordinate (d) at (0, -.75);
\coordinate (aa) at (-.75,.5);
\coordinate (cc) at (.75,.5);
\draw (a)--(0,0);
\draw (b)--(0,0);
\draw (c)--(0,0);
\draw (d)--(0,0);
\draw[fill=white] (a) circle (.25);
\draw[fill=white] (b) circle (.25);
\draw[fill=white] (c) circle (.25);
\draw[fill=white] (d) circle (.25);
\node at (0,1) { };
\node at (a) {$i$};
\node at (b) {$j$};
\node at (c) {$j$};
\node at (d) {$i$};
\end{tikzpicture}
%%%%%%%
\\
\midrule
q-zq^{-1} & \rho_{ij}(z-1) & \begin{array} {c@{\hspace{10pt}}c} q-q^{-1}, & \text{if } i>j \\ z(q-q^{-1}), & \text{if } i<j \end{array} \\
   \bottomrule
\end{array}$}
\caption{The admissible vertices and Boltzmann weights for the Drinfeld-Reshetikhin twist of the standard $U_q(\widehat{\mathfrak{gl}}_n)$ $R$-matrix.}
    \label{twisted-std-R-matrix}
\end{figure}

The upshot of this particular implementation of the Drinfeld-Reshetikhin twist is for the standard $R$-matrix for $U_q$, solvability is preserved when the weights $b_{ij}$ are replaced by $\rho_{ij}b_{ij}$, for any scalars $\rho_{ij}\in\mathbb{C}^\times$ such that $\rho_{ij}\rho_{ji}=1$.

\subsection{Generalized Twists}

Our main result of this section is that this fact is far more general than the case of the standard $R$-matrix for $U_q$. In fact, we can do a similar transformation to \emph{any} set of solvable $n$-color weights. It is notable that our result is purely combinatorial; a solution of the Yang-Baxter equation does not need to have an associated quantum group representation in order for twisting to work. In keeping with the theme of this paper, this allows us to construct and compare large classes of solutions to the Yang-Baxter equation without needing to use deep results in representation theory.

Tantalizingly, this result suggests that a sophisticated algebraic structure could potentially underlie our solutions to the $n$-color Yang-Baxter equation. If this turns out to be the case, it would be interesting to explore this object.

Fix $S$ and $T$ Boltzmann weights \[\{a_i(x), b_{ij}(x), c_{ij}(x) | i,j\in [0,n), x\in\{S,T\}\},\] satisfying the conditions in Theorem \ref{nlabelCond}. By that result, this defines a solvable lattice model with R-weights given by (\ref{n-color-parametrization}). Fix a set of parameters $D = \{\rho_{ij} \in \mathbb C^{\times}, i,j\in [0,n)\}$ such that \( \rho_{ij} \rho_{ji} = 1 \) for all $i$ and $j$. The twisted weights are then given by: \[\{a_i^{(D)}(x) := a_i(x), b_{ij}^{(D)}(x) := \rho_{ij}b_{ij}(x), c_{ij}^{(D)}(x) := c_{ij}(x)\}.\]

\begin{cor} \label{twist-corollary}
The twisted weights also satisfy the conditions in Theorem \ref{nlabelCond}.
\end{cor}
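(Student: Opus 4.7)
The plan is to translate the twist into explicit transformation rules for the auxiliary quantities $\tau_{ij}, \gamma_{ij}, \beta_{ij}, \alpha_{ij}$, and $\Delta_{ij}$ that appear in Theorem \ref{nlabelCond}, and then verify each of the six conditions in (\ref{n-color-conditions}) directly. Since the twist only rescales the $b$-weights by $\rho_{ij}$ and leaves $a_i$ and $c_{ij}$ untouched, and since the defining formulas (\ref{tau-beta-def}), (\ref{alpha-def}), (\ref{gamma-def}) are explicit ratios, the effect of the twist on each auxiliary quantity can be read off immediately.

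First I would establish the transformation table. From the definitions one checks that $\tau_{ij}^{(D)} = \tau_{ij}$ and $\gamma_{ij}^{(D)} = \gamma_{ij}$, since the $\rho_{ij}$ in the numerator and denominator of $\gamma_{ij}$ cancels (and $\tau_{ij}$ involves no $b$-weights at all). For $\beta_{ij}$, factoring $\rho_{ij}$ out of the numerator gives $\beta_{ij}^{(D)} = \rho_{ij}\beta_{ij}$. Substituting into (\ref{alpha-def}) and canceling the common $\rho_{ij}$ from numerator and denominator yields $\alpha_{ij}^{(D)} = \alpha_{ij}$. Finally, using $\rho_{ij}\rho_{ji} = 1$, the numerator of $\Delta_{ij}$ is unchanged while the denominator acquires a factor $\rho_{ij}$, so $\Delta_{ij}^{(D)}(x) = \rho_{ij}^{-1}\Delta_{ij}(x)$; crucially $\rho_{ij}$ does not depend on $x\in\{S,T\}$, so $\Delta_{ij}^{(D)}(S) = \Delta_{ij}^{(D)}(T)$ holds iff $\Delta_{ij}(S) = \Delta_{ij}(T)$.

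Next I would verify each remaining condition in (\ref{n-color-conditions}) by inspection. The condition $\beta_{ij}/(\gamma_{ij}b_{ij}(S)) = \beta_{ik}/(\gamma_{ik}b_{ik}(S))$ is unchanged, since both sides acquire the same numerator and denominator scalings $\rho_{ij}/\rho_{ij}$ on the left and $\rho_{ik}/\rho_{ik}$ on the right. The ratio condition $b_{ik}(S)/b_{ik}(T) = b_{jk}(S)/b_{jk}(T)$ is preserved because $\rho$ is independent of $S$ versus $T$. For the three more elaborate polynomial identities, I would group terms by their $\rho$-content: the first identity (involving $Y_{10}$) becomes a $\rho_{ij}$-multiple of the original and the second ($Y_{11}$) a $\rho_{jk}$-multiple, each term-by-term. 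The only identity requiring care is the third ($Y_{12}$), where the product $\beta_{ij}\cdot b_{ji}(T)$ contributes $\rho_{ij}\rho_{ji} = 1$ on the left-hand side, and similarly $\beta_{kj}\cdot b_{jk}(T)$ contributes $\rho_{kj}\rho_{jk} = 1$ on the right, so both mixed terms are invariant while the remaining two terms are untouched.

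The main obstacle, and really the only place where the hypothesis $\rho_{ij}\rho_{ji} = 1$ is used in a non-trivial way, is precisely the $Y_{12}$-derived identity above, where $\beta_{ij}$ and $b_{ji}$ occur with swapped index order. Everywhere else, the twist acts by a uniform overall scalar on each identity, and the verification reduces to bookkeeping. In particular, the entire argument is combinatorial and makes no reference to any quantum group structure, as promised by the corollary.
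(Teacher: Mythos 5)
Your proposal is correct and follows essentially the same route as the paper: compute how the twist acts on $\tau_{ij},\gamma_{ij},\beta_{ij},\alpha_{ij},\Delta_{ij}$ and then check each condition of Theorem \ref{nlabelCond}, with the $Y_{12}$-derived identity being exactly the example the paper singles out for the use of $\rho_{ij}\rho_{ji}=1$. The only quibble is your remark that this is the \emph{only} nontrivial use of that hypothesis --- as your own computation shows, it is also needed to keep the numerator $a_ia_j+b_{ij}b_{ji}-c_{ij}c_{ji}$ of $\Delta_{ij}$ intact, so that $\Delta_{ij}^{(D)}(x)$ is a scalar multiple of $\Delta_{ij}(x)$ at all.
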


\begin{proof}
For any quantity $z$ involving the $S$ and $T$ Boltzmann weights, let $z^{(D)}$ denote $z$ with the $S$ and $T$ weights replaced by their twisted analogues. With this notation, one can check that: \begin{align} \beta_{ij}^{(D)} = \rho_{ij}\beta_{ij}, \hspace{10pt} \tau_{ij}^{(D)} = \tau_{ij}, \hspace{10pt} \alpha_{ij}^{(D)} = \alpha_{ij}, \hspace{10pt} \gamma_{ij}^{(D)} = \gamma_{ij}, \hspace{10pt} \Delta_{ij}^{(D)}(x) = \rho_{ji}\Delta_{ij}(x).\end{align} 

One can then use these expressions to show that the conditions in Theorem (\ref{nlabelCond}) are satisfied by the twisted weights. This is a straightforward check. For instance, after twisting, the condition \[\gamma_{jk}c_{jk}(S)c_{ij}(T) + \beta_{ij}\gamma_{jk}c_{ik}(S)b_{ji}(T) = \tau_{ij}\gamma_{ji}c_{ij}(S)c_{jk}(T) + \tau_{ij}\tau_{jk}\beta_{kj}\gamma_{ji}c_{ik}(S)b_{jk}(T)\] becomes \[\gamma_{jk}c_{jk}(S)c_{ij}(T) + \rho_{ij}\beta_{ij}\gamma_{jk}c_{ik}(S)\rho_{ji}b_{ji}(T) = \tau_{ij}\gamma_{ji}c_{ij}(S)c_{jk}(T) + \tau_{ij}\tau_{jk}\rho_{kj}\beta_{kj}\gamma_{ji}c_{ik}(S)\rho_{jk}b_{jk}(T),\] and the fact that $\rho_{ij}\rho_{ji}=1$, returns us to the original condition.
\end{proof}

There is also a similar transformation involving the $c_{ij}$ which also preserves solvability. Again, assume that the $S$ and $T$ weights satisfy the conditions in Theorem \ref{nlabelCond}, and fix another set of parameters $\mathcal{Z} = \{\zeta_{ij} \in \mathbb C^{\times}, i,j\in [0,n)\}$ such that \( \zeta_{ij} \zeta_{ji} = 1 \) for all $i,j\in [0,n)$, and \(\zeta_{ij}\zeta_{jk}\zeta_{ki} = 1\) for all $i,j,k\in [0,n)$.

Define a new set of Boltzmann weights by: \[\{a_i^{(\mathcal{Z})}(x) := a_i(x), b_{ij}^{(\mathcal{Z})}(x) := b_{ij}(x), c_{ij}^{(\mathcal{Z})}(x) := \zeta_{ij}c_{ij}(x)\}.\] It is then straightforward to show:

\begin{cor}
These weights also satisfy the conditions in Theorem \ref{nlabelCond}.
\end{cor}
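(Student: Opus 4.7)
The strategy directly parallels that of Corollary \ref{twist-corollary}. First, I would examine how the auxiliary quantities $\tau_{ij}$, $\gamma_{ij}$, $\beta_{ij}$, $\alpha_{ij}$, and $\Delta_{ij}(x)$ behave under the replacement $c_{ij}(x) \mapsto \zeta_{ij}\,c_{ij}(x)$. Inspecting the definitions (\ref{tau-beta-def}), (\ref{alpha-def}), (\ref{gamma-def}), and the formula for $\Delta_{ij}$, every appearance of a $c$-weight is paired with its transpose: $\tau_{ij}$, $\gamma_{ij}$, and $\beta_{ij}$ each have a factor of $c_{ij}$ in either the numerator or denominator together with a factor of $c_{ji}$ in the opposite position, while $\Delta_{ij}$ contains the product $c_{ij}(x)c_{ji}(x)$. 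Since $\zeta_{ij}\zeta_{ji} = 1$, every such pair contributes trivially, giving
\begin{equation*}
\tau_{ij}^{(\mathcal{Z})} = \tau_{ij}, \quad \gamma_{ij}^{(\mathcal{Z})} = \gamma_{ij}, \quad \beta_{ij}^{(\mathcal{Z})} = \beta_{ij}, \quad \alpha_{ij}^{(\mathcal{Z})} = \alpha_{ij}, \quad \Delta_{ij}^{(\mathcal{Z})}(x) = \Delta_{ij}(x).
\end{equation*}
Since the first three conditions of (\ref{n-color-conditions}) involve only these auxiliary quantities and the untouched $b$-weights, they transfer unchanged.

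It remains to verify the last three conditions in (\ref{n-color-conditions}), each of which features the $c$-weights explicitly. For each, I would substitute $c_{ab}(x) \mapsto \zeta_{ab}c_{ab}(x)$ in every term and track the overall $\zeta$-factors. For example, in
\begin{equation*}
\gamma_{ik}c_{jk}(S)b_{ij}(T) + \beta_{ij}\gamma_{ik}c_{ik}(S)c_{ji}(T) = \gamma_{ij}b_{ij}(S)c_{jk}(T),
\end{equation*}
the first and right-hand side terms each acquire a factor of $\zeta_{jk}$, while the middle term acquires $\zeta_{ik}\zeta_{ji}$. After dividing through by $\zeta_{jk}$, the middle term retains the coefficient $\zeta_{ik}\zeta_{ji}\zeta_{kj}$, which equals $1$ by the 3-cycle identity $\zeta_{ik}\zeta_{kj}\zeta_{ji}=1$. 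An essentially identical bookkeeping handles the remaining two conditions: in the fifth, dividing out $\zeta_{ik}$ leaves a factor $\zeta_{ij}\zeta_{jk}\zeta_{ki}=1$ on one term; in the sixth, dividing out $\zeta_{ij}\zeta_{jk}$ leaves a factor $\zeta_{ik}\zeta_{kj}\zeta_{ji}=1$ on two terms.

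The main obstacle is purely combinatorial: in each condition one must identify the overall $\zeta$-factor to extract from every term, and confirm that the surviving $\zeta$-quotients collapse either by $\zeta_{ij}\zeta_{ji}=1$ or by $\zeta_{ij}\zeta_{jk}\zeta_{ki}=1$. Because the three nontrivial conditions carry precisely the cyclic index structure of these imposed identities, the reduction goes through uniformly, and the transformed weights satisfy all conditions of Theorem \ref{nlabelCond}.
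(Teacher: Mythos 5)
Your proposal is correct and follows exactly the route the paper intends: the paper omits the proof as ``straightforward,'' but the implied argument is the same computation you give, mirroring the proof of Corollary~\ref{twist-corollary} by checking that $\tau_{ij},\gamma_{ij},\beta_{ij},\alpha_{ij},\Delta_{ij}$ are all invariant under the $\zeta$-twist and that the residual $\zeta$-factors in the last three conditions of (\ref{n-color-conditions}) collapse via $\zeta_{ij}\zeta_{ji}=1$ and $\zeta_{ij}\zeta_{jk}\zeta_{ki}=1$. Your bookkeeping of which terms acquire which factors, and which of the two imposed identities kills each residue, checks out.
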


We believe this latter transformation may be related to a change of basis formula (see \cite[Change~of~Basis~section]{BBBG-metaplectic-YBE}).

\bibliographystyle{siam}
\bibliography{bibliography.bib}

\begin{thebibliography}{10}

\bibitem{AggarwalBorodinPetrovWheeler}
{\sc A.~Aggarwal, A.~Borodin, L.~Petrov, and M.~Wheeler}, {\em Free fermion six
  vertex model: Symmetric functions and random domino tilings}, 2021.
\newblock \href{https://arxiv.org/abs/2109.06718}{arXiv:2109.06718}.

\bibitem{AggarwalBorodinWheeler-fermionic}
{\sc A.~Aggarwal, A.~Borodin, and M.~Wheeler}, {\em Colored fermionic vertex
  models and symmetric functions}, Communications of the American Mathematical
  Society, 3 (2023), pp.~400--630.

\bibitem{Baxter-book}
{\sc R.~J. Baxter}, {\em Exactly solved models in statistical mechanics},
  Academic Press, Inc. [Harcourt Brace Jovanovich, Publishers], London, 1982.

\bibitem{Baxter-inversion-relation}
{\sc R.~J. Baxter}, {\em The inversion relation method for some two-dimensional
  exactly solved models in lattice statistics}, J. Statist. Phys., 28 (1982),
  pp.~1--41.

\bibitem{BorodinWheeler-bosonic}
{\sc A.~Borodin and M.~Wheeler}, {\em Coloured stochastic vertex models and
  their spectral theory}, 2018.
\newblock \href{https://arxiv.org/abs/1808.01866}{arXiv:1808.01866}.

\bibitem{BBBF-Hecke-modules}
{\sc B.~Brubaker, V.~Buciumas, D.~Bump, and S.~Friedberg}, {\em Hecke modules
  from metaplectic ice}, Selecta Math. (N.S.), 24 (2018), pp.~2523--2570.

\bibitem{BBBG-metaplectic-YBE}
{\sc B.~Brubaker, V.~Buciumas, D.~Bump, and N.~Gray}, {\em A {Y}ang-{B}axter
  equation for metaplectic ice}, Commun. Number Theory Phys., 13 (2019),
  pp.~101--148.

\bibitem{BBBG-Iwahori}
{\sc B.~Brubaker, V.~Buciumas, D.~Bump, and H.~P.~A. Gustafsson}, {\em Colored
  vertex models and {I}wahori {W}hittaker functions}, 2019.
\newblock \href{https://arxiv.org/abs/1906.04140}{arXiv:1906.04140}.

\bibitem{BBBG-metahori}
\leavevmode\vrule height 2pt depth -1.6pt width 23pt, {\em Metaplectic
  {I}wahori {W}hittaker functions and supersymmetric lattice models}, 2020.
\newblock \href{https://arxiv.org/abs/2012.15778}{arXiv:2012.15778}.

\bibitem{BBF-Schur-polynomials}
{\sc B.~Brubaker, D.~Bump, and S.~Friedberg}, {\em Schur polynomials and the
  {Y}ang-{B}axter equation}, Comm. Math. Phys., 308 (2011), pp.~281--301.

\bibitem{FrozenPipes}
{\sc B.~Brubaker, C.~Frechette, A.~Hardt, E.~Tibor, and K.~Weber}, {\em Frozen
  pipes: Lattice models for {G}rothendieck polynomials}, Algebr. Comb.
\newblock To appear.

\bibitem{BuciumasScrimshaw-grothendieck}
{\sc V.~Buciumas and T.~Scrimshaw}, {\em Double {G}rothendieck polynomials and
  colored lattice models}, Int. Math. Res. Not. IMRN,  (2022), pp.~7231--7258.

\bibitem{BumpNaprienko-bosonic}
{\sc D.~Bump and S.~Naprienko}, {\em Colored bosonic models and matrix
  coefficients}, 2022.
\newblock \href{https://arxiv.org/abs/2211.15850}{arXiv:2211.15850}.

\bibitem{ChariPressley}
{\sc V.~Chari and A.~Pressley}, {\em A guide to quantum groups}, Cambridge
  University Press, Cambridge, 1994.

\bibitem{CGKM-LLT}
{\sc S.~Corteel, A.~Gitlin, D.~Keating, and J.~Meza}, {\em A {V}ertex {M}odel
  for {LLT} {P}olynomials}, Int. Math. Res. Not. IMRN,  (2022),
  pp.~15869--15931.

\bibitem{CFYZZ-super-LLT}
{\sc M.~J. Curran, C.~Frechette, C.~Yost-Wolff, S.~W. Zhang, and V.~Zhang},
  {\em A lattice model for super {LLT} polynomials}, 2021.
\newblock \href{https://arxiv.org/abs/2110.07597}{arXiv:2110.07597}.

\bibitem{Drinfeld-hopf-quantum}
{\sc V.~G. Drinfel'd}, {\em Hopf algebras and the quantum {Y}ang-{B}axter
  equation}, Dokl. Akad. Nauk SSSR, 283 (1985), pp.~1060--1064.

\bibitem{Drinfeld-ICM}
\leavevmode\vrule height 2pt depth -1.6pt width 23pt, {\em Quantum groups}, in
  Proceedings of the {I}nternational {C}ongress of {M}athematicians, {V}ol. 1,
  2 ({B}erkeley, {C}alif., 1986), Amer. Math. Soc., Providence, RI, 1987,
  pp.~798--820.

\bibitem{Hardt-Hamiltonians}
{\sc A.~Hardt}, {\em Lattice models, {H}amiltonian operators, and symmetric
  functions}, 2021.
\newblock \href{https://arxiv.org/abs/2109.14597}{arXiv:2109.14597}.

\bibitem{Jimbo-q-difference}
{\sc M.~Jimbo}, {\em A {$q$}-difference analogue of {$U(\mathfrak{g})$} and the
  {Y}ang-{B}axter equation}, Lett. Math. Phys., 10 (1985), pp.~63--69.

\bibitem{Jimbo-Hecke-algebra}
\leavevmode\vrule height 2pt depth -1.6pt width 23pt, {\em A {$q$}-analogue of
  {$U(\mathfrak{gl}(N+1))$}, {H}ecke algebra, and the {Y}ang-{B}axter
  equation}, Lett. Math. Phys., 11 (1986), pp.~247--252.

\bibitem{Jimbo-Uqsln}
\leavevmode\vrule height 2pt depth -1.6pt width 23pt, {\em Quantum {$R$} matrix
  for the generalized {T}oda system}, Comm. Math. Phys., 102 (1986),
  pp.~537--547.

\bibitem{Kassel-quantum-groups}
{\sc C.~Kassel}, {\em Quantum groups}, vol.~155 of Graduate Texts in
  Mathematics, Springer-Verlag, New York, 1995.

\bibitem{Khachatryan}
{\sc S.~Khachatryan}, {\em On the solutions to the multi-parametric
  {Y}ang-{B}axter equations}, Nuclear Phys. B, 883 (2014), pp.~629--655.

\bibitem{Kojima-superalgebra}
{\sc T.~Kojima}, {\em Diagonalization of transfer matrix of supersymmetry
  {$U_q(\widehat{sl}(M+1|N+1))$} chain with a boundary}, J. Math. Phys., 54
  (2013), pp.~043507, 40.

\bibitem{KBI}
{\sc V.~E. Korepin, N.~M. Bogoliubov, and A.~G. Izergin}, {\em Quantum inverse
  scattering method and correlation functions}, Cambridge Monographs on
  Mathematical Physics, Cambridge University Press, Cambridge, 1993.

\bibitem{Naprienko-groupoid}
{\sc S.~Naprienko}, {\em Integrability of the six-vertex model and the
  {Y}ang-{B}axter groupoid}, 2022.
\newblock \href{https://arxiv.org/abs/2210.14883}{arXiv:2210.14883}.

\bibitem{Naprienko-ff}
\leavevmode\vrule height 2pt depth -1.6pt width 23pt, {\em Free fermionic
  {S}chur functions}, 2023.
\newblock \href{https://arxiv.org/abs/2301.12110}{arXiv:2301.12110}.

\bibitem{Perk-Au-Yang}
{\sc J.~H.~H. Perk and H.~Au-Yang}, {\em {Y}ang-{B}axter equations}, 2006.
\newblock \href{https://arxiv.org/abs/math-ph/0606053}{arXiv:math-ph/0606053}.

\bibitem{PerkSchultz-1}
{\sc J.~H.~H. Perk and C.~L. Schultz}, {\em New families of commuting transfer
  matrices in {$q$}-state vertex models}, Phys. Lett. A, 84 (1981),
  pp.~407--410.

\bibitem{PerkSchultz-2}
{\sc J.~H.~H. Perk and C.~L.~a. Schultz}, {\em Families of commuting transfer
  matrices in {$q$}-state vertex models}, in Nonlinear integrable
  systems---classical theory and quantum theory ({K}yoto, 1981), World Sci.
  Publishing, Singapore, 1983, pp.~135--152.

\bibitem{RenWangWu}
{\sc X.~Ren, S.~Wang, and K.~Wu}, {\em Solving colored {Y}ang-{B}axter equation
  by {W}u's method}, Acta Math. Sci. Ser. B (Engl. Ed.), 29 (2009),
  pp.~1267--1294.

\bibitem{Reshetikhin-twist}
{\sc N.~Reshetikhin}, {\em Multiparameter quantum groups and twisted
  quasitriangular {H}opf algebras}, Lett. Math. Phys., 20 (1990), pp.~331--335.

\bibitem{FRT-construction}
{\sc N.~Y. Reshetikhin, L.~A. Takhtadzhyan, and L.~D. Faddeev}, {\em
  Quantization of {L}ie groups and {L}ie algebras}, Algebra i Analiz, 1 (1989),
  pp.~178--206.

\bibitem{Schultz-thesis}
{\sc C.~L. SCHULTZ}, {\em Commuting Transfer Matrices of the Closed Colored
  String Model in Two Dimensions}, PhD thesis, 1981.

\bibitem{SunWangWu}
{\sc X.~D. Sun, S.~K. Wang, and K.~Wu}, {\em Classification of six-vertex-type
  solutions of the colored {Y}ang-{B}axter equation}, J. Math. Phys., 36
  (1995), pp.~6043--6063.

\bibitem{SunWangWu-eight-vertex}
{\sc X.~D. Sun, S.-K. Wang, and K.~Wu}, {\em Wu algorithm and solutions of
  {Y}ang-{B}axter equation for eight-vertex model}, Comm. Theoret. Phys., 23
  (1995), pp.~125--128.

\bibitem{Vieira}
{\sc R.~S. Vieira}, {\em Solving and classifying the solutions of the
  {Y}ang-{B}axter equation through a differential approach. {T}wo-state
  systems}, J. High Energy Phys.,  (2018), pp.~110, front matter+48.

\bibitem{VieiraLima-Santos}
{\sc R.~S. Vieira and A.~Lima-Santos}, {\em Solutions of the {Y}ang-{B}axter
  equation for {$(n + 1) (2n + 1)$}-vertex models using a differential
  approach}, J. Stat. Mech. Theory Exp.,  (2021), pp.~Paper No. 053103, 16.

\end{thebibliography}

\newpage

\section*{Appendix: Enumeration of Yang-Baxter polynomials}\label{Appendix}
Here we enumerate all nontrivial (i.e. with admissible states) Yang-Baxter diagrams, along with their corresponding equations and 
Yang-Baxter polynomials (if nonzero). We separate the diagrams into two groupings: those that involve at most two distinct labels, and therefore appear in the two-color lattice model, and those that involve three distinct labels. Color-coding is used to indicate if a Yang-Baxter equation is vacuously true for all possible $R, S, T$ weights (red), if there are further constraints but an equal number of admissible states present on both sides (yellow), or if there are an unequal number of admissible states on each side of the equation (green).

\subsection*{A.1: One- and Two-Color Cases}
        \centering
        \begin{longtable}{|c|c|l|}
		\hline
		\LARGE $X^{iii}_{iii}$
		&\begin{minipage}{0.2\textwidth}
			\centering\vspace{0.5em}\scalebox{0.75}{\begin{tikzpicture}[scale=0.725,baseline=0.7cm]
            \lybd{i}{i}{i}{i}{i}{i}{i}{i}{i}
        \end{tikzpicture}}\vspace{0.5em}
		\end{minipage}
		{\LARGE $=$}
		\begin{minipage}{0.2\textwidth}
			\centering\vspace{0.5em}\scalebox{0.75}{\begin{tikzpicture}[scale=0.725,baseline=0.7cm]
             \rybd{i}{i}{i}{i}{i}{i}{i}{i}{i}
         \end{tikzpicture}}\vspace{0.5em}
		\end{minipage}
		& \cellcolor{red!50}
		\\
		\hline
		\multicolumn{3}{|c|}{\large\rule{0pt}{1.5em}\rule[-0.5em]{0pt}{1.5em}$A_{i}a_{i}(S)a_{i}(T)=A_{i}a_{i}(S)a_{i}(T)$} 
		\\
  \hline
    \end{longtable}
%%%%%%%%
\begin{longtable}{|c|c|l|}
		\hline
		\LARGE $X^{iij}_{iij}$
		&\begin{minipage}{0.2\textwidth}
			\centering\vspace{0.5em}\scalebox{0.75}{\begin{tikzpicture}[scale=0.725,baseline=0.7cm]
 \lybd{i}{i}{j}{i}{i}{j}{j}{i}{i}
 \end{tikzpicture}}\vspace{0.5em}
		\end{minipage}
		{\LARGE $=$}
             \begin{minipage}{0.2\textwidth}
			\centering\vspace{0.5em}\scalebox{0.75}{\begin{tikzpicture}[scale=0.725,baseline=0.7cm]
   \rybd{i}{i}{j}{i}{i}{j}{j}{i}{i}
   \end{tikzpicture}}\vspace{0.5em}
		\end{minipage}
		& \cellcolor{red!50}
		\\
		\hline
		\multicolumn{3}{|c|}{\large\rule{0pt}{1.5em}\rule[-0.5em]{0pt}{1.5em}$A_{i}b_{ij}(S)b_{ij}(T)=A_{i}b_{ij}(S)b_{ij}(T)$} 
		\\
    \hline
    \end{longtable}
    %%%%%%%%%%%%%%%%%%%%
    \begin{longtable}{|c|c|l|}
		\hline
		\LARGE $X^{iij}_{iji}$
		&\begin{minipage}{0.2\textwidth}
			\centering\vspace{0.5em}\scalebox{0.75}{\begin{tikzpicture}[scale=0.725,baseline=0.7cm]
            \lybd{i}{i}{j}{i}{j}{i}{j}{i}{i}
            \end{tikzpicture}}\vspace{0.5em}
		\end{minipage}
		{\LARGE $=$}
		\begin{minipage}{0.2\textwidth}
			\centering\vspace{0.5em}\scalebox{0.75}{\begin{tikzpicture}[scale=0.725,baseline=0.7cm]
            \rybd{i}{i}{j}{i}{j}{i}{j}{j}{i}
            \end{tikzpicture}}\vspace{0.5em}
		\end{minipage}
		{\LARGE $+$}
		\begin{minipage}{0.2\textwidth}
			\centering\vspace{0.5em}\scalebox{0.75}{\begin{tikzpicture}[scale=0.725,baseline=0.7cm]
            \rybd{i}{i}{j}{i}{j}{i}{j}{j}{i}
            \end{tikzpicture}}\vspace{0.5em}
		\end{minipage}
		& \cellcolor{green!50}
		\\
		\hline
		\multicolumn{3}{|c|}{\large\rule{0pt}{1.5em}\rule[-0.5em]{0pt}{1.5em}$A_{i}b_{ij}(S)c_{ij}(T)=B_{ij}a_{i}(S)c_{ij}(T)+C_{ji}c_{ij}(S)b_{ij}(T)$} \\
   \multicolumn{3}{|c|}{\large\rule{0pt}{1.5em}\rule[-0.5em]{0pt}{1.5em}$Y_1(i,j)\coloneqq A_{i}b_{ij}(S)c_{ij}(T)- B_{ij}a_{i}(S)c_{ij}(T)-C_{ji}c_{ij}(S)b_{ij}(T)$} 
		\\
    \hline
    \end{longtable}

\newpage

%%%%%%%%%		
\begin{longtable}{|c|c|l|}
		\hline
		\LARGE $X^{iij}_{jii}$
		&\begin{minipage}{0.2\textwidth}
			\centering\vspace{0.5em}\scalebox{0.75}{\begin{tikzpicture}[scale=0.725,baseline=0.7cm]
    \lybd{i}{i}{j}{j}{i}{i}{i}{i}{i}
    \end{tikzpicture}}\vspace{0.5em}
		\end{minipage}
		{\LARGE $=$}
		\begin{minipage}{0.2\textwidth}
			\centering\vspace{0.5em}\scalebox{0.75}{\begin{tikzpicture}[scale=0.725,baseline=0.7cm]
   \rybd{i}{i}{j}{j}{i}{i}{j}{j}{i}
   \end{tikzpicture}}\vspace{0.5em}
		\end{minipage}
		{\LARGE $+$}
		\begin{minipage}{0.2\textwidth}
			\centering\vspace{0.5em}\scalebox{0.75}{\begin{tikzpicture}[scale=0.725,baseline=0.7cm]
 \rybd{i}{i}{j}{j}{i}{i}{i}{i}{j}
 \end{tikzpicture}}\vspace{0.5em}
		\end{minipage}
		& \cellcolor{green!50}
		\\
		\hline
		\multicolumn{3}{|c|}{\large\rule{0pt}{1.5em}\rule[-0.5em]{0pt}{1.5em}$A_{i}c_{ij}(S)a_{i}(T)=B_{ji}c_{ij}(S)b_{ij}(T)+C_{ij}a_0(S)c_{ij}(T)$} \\
             \multicolumn{3}{|c|}{\large\rule{0pt}{1.5em}\rule[-0.5em]{0pt}{1.5em}$Y_2(i,j) \coloneqq A_{i}c_{ij}(S)a_{i}(T)-  B_{ji}c_{ij}(S)b_{ij}(T)-C_{ij}a_0(S)c_{ij}(T)$} 
		\\
  \hline
\end{longtable}
%%%%%%%%%		
\begin{longtable}{|c|c|l|}
		\hline
		\LARGE $X^{iji}_{iij}$
		&\begin{minipage}{0.2\textwidth}
			\centering\vspace{0.5em}\scalebox{0.75}{\begin{tikzpicture}[scale=0.725,baseline=0.7cm]\lybd{i}{j}{i}{i}{i}{j}{i}{i}{j}\end{tikzpicture}}\vspace{0.5em}
		\end{minipage}
		{\LARGE $+$}
		\begin{minipage}{0.2\textwidth}
			\centering\vspace{0.5em}\scalebox{0.75}{\begin{tikzpicture}[scale=0.725,baseline=0.7cm]\lybd{i}{j}{i}{i}{i}{j}{j}{j}{i}\end{tikzpicture}}\vspace{0.5em}
		\end{minipage}
		{\LARGE $=$}
		\begin{minipage}{0.2\textwidth}
			\centering\vspace{0.5em}\scalebox{0.75}{\begin{tikzpicture}[scale=0.725,baseline=0.7cm]\rybd{i}{j}{i}{i}{i}{j}{j}{i}{i}\end{tikzpicture}}\vspace{0.5em}
		\end{minipage}
		& \cellcolor{green!50}
		\\
		\hline
		\multicolumn{3}{|c|}{\large\rule{0pt}{1.5em}\rule[-0.5em]{0pt}{1.5em}$B_{ij}a_{i}(S)c_{ji}(T)+C_{ij}c_{ji}(S)b_{ij}(T)=A_{i}b_{ij}(S)c_{ji}(T)$} 
		\\
  \multicolumn{3}{|c|}{\large\rule{0pt}{1.5em}\rule[-0.5em]{0pt}{1.5em}$Y_3(i,j) \coloneqq B_{ij}a_{i}(S)c_{ji}(T)+C_{ij}c_{ji}(S)b_{ij}(T)-A_{i}b_{ij}(S)c_{ji}(T)$}  \\
  \hline
       \end{longtable}
%%%%%%%%%
\begin{longtable}{|c|c|l|}
		\hline
		\LARGE $X^{iji}_{iji}$
		&\begin{minipage}{0.16\textwidth}
			\centering\vspace{0.5em}\scalebox{0.75}{\begin{tikzpicture}[scale=0.625,baseline=0.7cm]\lybd{i}{j}{i}{i}{j}{i}{i}{i}{j}\end{tikzpicture}}\vspace{0.5em}
		\end{minipage}
		{\Large $+$}
        \begin{minipage}{0.16\textwidth}
			\centering\vspace{0.5em}\scalebox{0.75}{\begin{tikzpicture}[scale=0.625,baseline=0.7cm]\lybd{i}{j}{i}{i}{j}{i}{j}{j}{i}\end{tikzpicture}}\vspace{0.5em}
		\end{minipage}
		{\Large $=$}
		\begin{minipage}{0.16\textwidth}
			\centering\vspace{0.5em}\scalebox{0.75}{\begin{tikzpicture}[scale=0.625,baseline=0.7cm]\rybd{i}{j}{i}{i}{j}{i}{i}{i}{j}\end{tikzpicture}}\vspace{0.5em}
		\end{minipage}
		{\Large $+$}
		\begin{minipage}{0.16\textwidth}
			\centering\vspace{0.5em}\scalebox{0.75}{\begin{tikzpicture}[scale=0.625,baseline=0.7cm]\rybd{i}{j}{i}{i}{j}{i}{j}{j}{i}\end{tikzpicture}}\vspace{0.5em}
    	\end{minipage}
		& \cellcolor{yellow!50}
		\\
		\hline
    	\multicolumn{3}{|c|}{\large\rule{0pt}{1.5em}\rule[-0.5em]{0pt}{1.5em}$B_{ij}a_{i}(S)b_{ji}(T)+C_{ij}c_{ji}(S)c_{ij}(T)=B_{ij}a_{i}(S)b_{ji}(T)+C_{ji}c_{ij}(S)c_{ji}(T)$} 
		\\
            \multicolumn{3}{|c|}{\large\rule{0pt}{1.5em}\rule[-0.5em]{0pt}{1.5em}$Y_4(i,j)\coloneqq C_{ij}c_{ji}(S)c_{ij}(T)-C_{ji}c_{ij}(S)c_{ji}(T)$} 
		\\
  \hline
  \end{longtable}
  \newpage
%%%%%%%%%
\begin{longtable}{|c|c|l|}
		\hline
		\LARGE $X^{iji}_{jii}$
		&\begin{minipage}{0.2\textwidth}
			\centering\vspace{0.5em}\scalebox{0.75}{\begin{tikzpicture}[scale=0.725,baseline=0.7cm]\lybd{i}{j}{i}{j}{i}{i}{i}{j}{i}\end{tikzpicture}}\vspace{0.5em}
		\end{minipage}
		{\LARGE $=$}
		\begin{minipage}{0.2\textwidth}
			\centering\vspace{0.5em}\scalebox{0.75}{\begin{tikzpicture}[scale=0.725,baseline=0.7cm]\rybd{i}{j}{i}{j}{i}{i}{i}{i}{j}\end{tikzpicture}}\vspace{0.5em}
		\end{minipage}
		{\LARGE $+$}
		\begin{minipage}{0.2\textwidth}
			\centering\vspace{0.5em}\scalebox{0.75}{\begin{tikzpicture}[scale=0.725,baseline=0.7cm]\rybd{i}{j}{i}{j}{i}{i}{j}{j}{i}\end{tikzpicture}}\vspace{0.5em}
		\end{minipage}
		& \cellcolor{green!50}
		\\
		\hline
    	\multicolumn{3}{|c|}{\large\rule{0pt}{1.5em}\rule[-0.5em]{0pt}{1.5em}$C_{ij}b_{ji}(S)a_{i}(T)=B_{ji}c_{ij}(S)c_{ji}(T)+C_{ij}a_{i}(S)b_{ji}(T)$} 
		\\
             \multicolumn{3}{|c|}{\large\rule{0pt}{1.5em}\rule[-0.5em]{0pt}{1.5em}$Y_5(i,j) \coloneqq C_{ij}b_{ji}(S)a_{i}(T)-B_{ji}c_{ij}(S)c_{ji}(T)-C_{ij}a_{i}(S)b_{ji}(T)$} 
		\\
 \hline
 \end{longtable}
%%%%%%%%%
\begin{longtable}{|c|c|l|}
		\hline
	\LARGE $X^{ijj}_{ijj}$ 
		&\begin{minipage}{0.2\textwidth}
			\centering\vspace{0.5em}\scalebox{0.75}{\begin{tikzpicture}[scale=0.725,baseline=0.7cm]\lybd{i}{j}{j}{i}{j}{j}{j}{i}{j}\end{tikzpicture}}\vspace{0.5em}
		\end{minipage}
		{\LARGE $=$}
		\begin{minipage}{0.2\textwidth}
			\centering\vspace{0.5em}\scalebox{0.75}{\begin{tikzpicture}[scale=0.725,baseline=0.7cm]\rybd{i}{j}{j}{i}{j}{j}{j}{i}{j}\end{tikzpicture}}\vspace{0.5em}
		\end{minipage}
		& \cellcolor{red!50}
		\\
		\hline
		\multicolumn{3}{|c|}{\large\rule{0pt}{1.5em}\rule[-0.5em]{0pt}{1.5em}$B_{ij}b_{ij}(S)a_{j}(T)=B_{ij}b_{ij}(S)a_{j}(T)$} 
		\\
 \hline
 \end{longtable}
%%%%%%%%
\begin{longtable}{|c|c|l|}
		\hline
		\LARGE $X^{ijj}_{jij}$ 
		&\begin{minipage}{0.2\textwidth}
			\centering\vspace{0.5em}\scalebox{0.75}{\begin{tikzpicture}[scale=0.725,baseline=0.7cm]\lybd{i}{j}{j}{j}{i}{j}{i}{i}{j}\end{tikzpicture}}\vspace{0.5em}
		\end{minipage}
		{\LARGE $+$}
		\begin{minipage}{0.2\textwidth}
			\centering\vspace{0.5em}\scalebox{0.75}{\begin{tikzpicture}[scale=0.725,baseline=0.7cm]\lybd{i}{j}{j}{j}{i}{j}{j}{j}{i}\end{tikzpicture}}\vspace{0.5em}
		\end{minipage}
		{\LARGE $=$}
		\begin{minipage}{0.2\textwidth}
			\centering\vspace{0.5em}\scalebox{0.75}{\begin{tikzpicture}[scale=0.725,baseline=0.7cm]\rybd{i}{j}{j}{j}{i}{j}{j}{i}{j}\end{tikzpicture}}\vspace{0.5em}
		\end{minipage}
		& \cellcolor{green!50}
		\\
		\hline
		\multicolumn{3}{|c|}{\large\rule{0pt}{1.5em}\rule[-0.5em]{0pt}{1.5em}$B_{ij}c_{ij}(S)c_{ji}(T)+C_{ij}a_{j}(S)b_{ij}(T)=C_{ij}b_{ij}(S)a_{j}(T)$} 
		\\
  \multicolumn{3}{|c|}{\large\rule{0pt}{1.5em}\rule[-0.5em]{0pt}{1.5em}$Y_6(i,j) \coloneqq B_{ij}c_{ij}(S)c_{ji}(T)+C_{ij}a_{j}(S)b_{ij}(T)-C_{ij}b_{ij}(S)a_{j}(T)$} 
		\\
 \hline
 \end{longtable}
%%%%%%%
\begin{longtable}{|c|c|l|}
		\hline
		\LARGE $X^{ijj}_{jji}$ 
		&\begin{minipage}{0.2\textwidth}
			\centering\vspace{0.5em}\scalebox{0.75}{\begin{tikzpicture}[scale=0.725,baseline=0.7cm]\lybd{i}{j}{j}{j}{j}{i}{i}{i}{j}\end{tikzpicture}}\vspace{0.5em}
		\end{minipage}
		{\LARGE $+$}
		\begin{minipage}{0.2\textwidth}
			\centering\vspace{0.5em}\scalebox{0.75}{\begin{tikzpicture}[scale=0.725,baseline=0.7cm]\lybd{i}{j}{j}{j}{j}{i}{j}{j}{i}\end{tikzpicture}}\vspace{0.5em}
		\end{minipage}
		{\LARGE $=$}
		\begin{minipage}{0.2\textwidth}
			\centering\vspace{0.5em}\scalebox{0.75}{\begin{tikzpicture}[scale=0.725,baseline=0.7cm]\rybd{i}{j}{j}{j}{j}{i}{j}{j}{j}\end{tikzpicture}}\vspace{0.5em}
		\end{minipage}
		& \cellcolor{green!50}
		\\
		\hline
		\multicolumn{3}{|c|}{\large\rule{0pt}{1.5em}\rule[-0.5em]{0pt}{1.5em}$B_{ij}c_{ij}(S)b_{ji}(T)+C_{ij}a_{j}(S)c_{ij}(T)=A_{j}c_{ij}(S)a_{j}(T)$} 
		\\
        \multicolumn{3}{|c|}{\large\rule{0pt}{1.5em}\rule[-0.5em]{0pt}{1.5em}$Y_7(i,j) \coloneqq B_{ij}c_{ij}(S)b_{ji}(T)+C_{ij}a_{j}(S)c_{ij}(T)-A_{j}c_{ij}(S)a_{j}(T)$} 
		\\
 \hline
 \end{longtable}
 \newpage
\subsection*{A.2: Three-Color Cases}
%%%%%%%%%

\begin{longtable}{|c|c|l|}
		\hline
		\LARGE $X^{ijk}_{ijk}$
		&\begin{minipage}{0.2\textwidth}
			\centering\vspace{0.5em}\scalebox{0.75}{\begin{tikzpicture}[scale=0.725,baseline=0.7cm]\lybd{i}{j}{k}{i}{j}{k}{k}{i}{j}\end{tikzpicture}}\vspace{0.5em}
		\end{minipage}
		{\LARGE $=$}
		\begin{minipage}{0.2\textwidth}
			\centering\vspace{0.5em}\scalebox{0.75}{\begin{tikzpicture}[scale=0.725,baseline=0.7cm]\rybd{i}{j}{k}{i}{j}{k}{k}{i}{j}\end{tikzpicture}}\vspace{0.5em}
		\end{minipage}
		& \cellcolor{red!50}
		\\
		\hline
		\multicolumn{3}{|c|}{\large\rule{0pt}{1.5em}\rule[-0.5em]{0pt}{1.5em}$B_{ij}b_{ik}(S)b_{jk}(T)=B_{ij}b_{ik}(S)b_{jk}(T)$}
		\\
 \hline
 \end{longtable}
%%%%%%%%%
\begin{longtable}{|c|c|l|}
		\hline
		\LARGE $X^{ijk}_{ikj}$
		&\begin{minipage}{0.2\textwidth}
			\centering\vspace{0.5em}\scalebox{0.75}{\begin{tikzpicture}[scale=0.725,baseline=0.7cm]\lybd{i}{j}{k}{i}{k}{j}{k}{i}{j}\end{tikzpicture}}\vspace{0.5em}
		\end{minipage}
		{\LARGE $=$}
		\begin{minipage}{0.2\textwidth}
			\centering\vspace{0.5em}\scalebox{0.75}{\begin{tikzpicture}[scale=0.725,baseline=0.7cm]\rybd{i}{j}{k}{i}{k}{j}{j}{i}{k}\end{tikzpicture}}\vspace{0.5em}
		\end{minipage}
		& \cellcolor{yellow!50}
		\\
		\hline
		\multicolumn{3}{|c|}{\large\rule{0pt}{1.5em}\rule[-0.5em]{0pt}{1.5em}$B_{ij}b_{ik}(S)c_{jk}(T)=B_{ik}b_{ij}(S)c_{jk}(T)$}
		\\
  \multicolumn{3}{|c|}{\large\rule{0pt}{1.5em}\rule[-0.5em]{0pt}{1.5em}$Y_8(i,j,k) \coloneqq B_{ij}b_{ik}(S)c_{jk}(T)-B_{ik}b_{ij}(S)c_{jk}(T)$}
		\\
 \hline
 \end{longtable}
%%%%%%%%%
\begin{longtable}{|c|c|l|}
		\hline
		\LARGE $X^{ijk}_{jik}$
		&\begin{minipage}{0.2\textwidth}
			\centering\vspace{0.5em}\scalebox{0.75}{\begin{tikzpicture}[scale=0.725,baseline=0.7cm]\lybd{i}{j}{k}{j}{i}{k}{k}{j}{i}\end{tikzpicture}}\vspace{0.5em}
		\end{minipage}
		{\LARGE $=$}
		\begin{minipage}{0.2\textwidth}
			\centering\vspace{0.5em}\scalebox{0.75}{\begin{tikzpicture}[scale=0.725,baseline=0.7cm]\rybd{i}{j}{k}{j}{i}{k}{k}{i}{j}\end{tikzpicture}}\vspace{0.5em}
		\end{minipage}
		& \cellcolor{yellow!50}
		\\
		\hline
		\multicolumn{3}{|c|}{\large\rule{0pt}{1.5em}\rule[-0.5em]{0pt}{1.5em}$C_{ij}b_{jk}(S)b_{ik}(T)=C_{ij}b_{ik}(S)b_{jk}(T)$} 
		\\
  		\multicolumn{3}{|c|}{\large\rule{0pt}{1.5em}\rule[-0.5em]{0pt}{1.5em}$Y_9(i,j,k) \coloneqq C_{ij}b_{jk}(S)b_{ik}(T) - C_{ij}b_{ik}(S)b_{jk}(T)$} 
		\\
 \hline
 \end{longtable}
%%%%%%%%%
\begin{longtable}{|c|c|l|}
		\hline
		\LARGE $X^{ijk}_{kij}$
		&\begin{minipage}{0.2\textwidth}
			\centering\vspace{0.5em}\scalebox{0.75}{\begin{tikzpicture}[scale=0.725,baseline=0.7cm]\lybd{i}{j}{k}{k}{i}{j}{i}{i}{j}\end{tikzpicture}}\vspace{0.5em}
		\end{minipage}
		{\LARGE $+$}
		\begin{minipage}{0.2\textwidth}
			\centering\vspace{0.5em}\scalebox{0.75}{\begin{tikzpicture}[scale=0.725,baseline=0.7cm]\lybd{i}{j}{k}{k}{i}{j}{j}{j}{i}\end{tikzpicture}}\vspace{0.5em}
		\end{minipage}
		{\LARGE $=$}
		\begin{minipage}{0.2\textwidth}
			\centering\vspace{0.5em}\scalebox{0.75}{\begin{tikzpicture}[scale=0.725,baseline=0.7cm]\rybd{i}{j}{k}{k}{i}{j}{j}{j}{k}\end{tikzpicture}}\vspace{0.5em}
		\end{minipage}
		& \cellcolor{green!50}
		\\
		\hline
		\multicolumn{3}{|c|}{\large\rule{0pt}{1.5em}\rule[-0.5em]{0pt}{1.5em}$B_{ij}c_{ik}(S)b_{ji}(T)+C_{ij}c_{jk}(S)b_{ij}(T)=C_{ik}b_{ij}(S)c_{jk}(T)$} 
		\\
  		\multicolumn{3}{|c|}{\large\rule{0pt}{1.5em}\rule[-0.5em]{0pt}{1.5em}$Y_{10}(i,j,k) \coloneqq B_{ij}c_{ik}(S)b_{ji}(T)+C_{ij}c_{jk}(S)b_{ij}(T)-C_{ik}b_{ij}(S)c_{jk}(T)$} 
		\\
 \hline
 \end{longtable}

 \newpage
%%%%%%%%%
\begin{longtable}{|c|c|l|}
		\hline
		\LARGE $X^{ijk}_{jki}$
		&\begin{minipage}{0.2\textwidth}
			\centering\vspace{0.5em}\scalebox{0.75}{\begin{tikzpicture}[scale=0.725,baseline=0.7cm]\lybd{i}{j}{k}{j}{k}{i}{k}{j}{i}\end{tikzpicture}}\vspace{0.5em}
		\end{minipage}
		{\LARGE $=$}
		\begin{minipage}{0.2\textwidth}
			\centering\vspace{0.5em}\scalebox{0.75}{\begin{tikzpicture}[scale=0.725,baseline=0.7cm]\rybd{i}{j}{k}{j}{k}{i}{j}{j}{k}\end{tikzpicture}}\vspace{0.5em}
		\end{minipage}
		{\LARGE $+$}
		\begin{minipage}{0.2\textwidth}
			\centering\vspace{0.5em}\scalebox{0.75}{\begin{tikzpicture}[scale=0.725,baseline=0.7cm]\rybd{i}{j}{k}{j}{k}{i}{k}{k}{j}\end{tikzpicture}}\vspace{0.5em}
		\end{minipage}
		& \cellcolor{green!50}
		\\
		\hline
		\multicolumn{3}{|c|}{\large\rule{0pt}{1.5em}\rule[-0.5em]{0pt}{1.5em}$C_{ij}b_{jk}(S)c_{ik}(T)=B_{jk}c_{ij}(S)c_{jk}(T) + C_{kj}c_{ik}(S)b_{jk}(T)$} 
		\\
  		\multicolumn{3}{|c|}{\large\rule{0pt}{1.5em}\rule[-0.5em]{0pt}{1.5em}$Y_{11}(i,j,k) \coloneqq C_{ij}b_{jk}(S)c_{ik}(T)- B_{jk}c_{ij}(S)c_{jk}(T) - C_{kj}c_{ik}(S)b_{jk}(T)$} 
		\\
 \hline
 \end{longtable}
%%%%%%%%%
\begin{longtable}{|c|c|l|}
		\hline
		\LARGE $X^{ijk}_{kji}$
		&\begin{minipage}{0.16\textwidth}
			\centering\vspace{0.5em}\scalebox{0.75}{\begin{tikzpicture}[scale=0.625,baseline=0.7cm]\lybd{i}{j}{k}{k}{j}{i}{i}{i}{j}\end{tikzpicture}}\vspace{0.5em}
		\end{minipage}
		{\Large $+$}
		\begin{minipage}{0.16\textwidth}
			\centering\vspace{0.5em}\scalebox{0.75}{\begin{tikzpicture}[scale=0.625,baseline=0.7cm]\lybd{i}{j}{k}{k}{j}{i}{j}{j}{i}\end{tikzpicture}}\vspace{0.5em}
		\end{minipage}
		{\Large $=$}
		\begin{minipage}{0.16\textwidth}
			\centering\vspace{0.5em}\scalebox{0.75}{\begin{tikzpicture}[scale=0.625,baseline=0.7cm]\rybd{i}{j}{k}{k}{j}{i}{j}{j}{k}\end{tikzpicture}}\vspace{0.5em}
		\end{minipage}
		{\Large $+$}
		\begin{minipage}{0.16\textwidth}
			\centering\vspace{0.5em}\scalebox{0.75}{\begin{tikzpicture}[scale=0.625,baseline=0.7cm]\rybd{i}{j}{k}{k}{j}{i}{k}{k}{j}\end{tikzpicture}}\vspace{0.5em}
		\end{minipage}
		& \cellcolor{yellow!50}
		\\
		\hline
		\multicolumn{3}{|c|}{\large\rule{0pt}{1.5em}\rule[-0.5em]{0pt}{1.5em}$B_{ij}c_{ik}(S)b_{ji}(T)+C_{ij}c_{jk}(S)c_{ij}(T)=B_{kj}c_{ik}(S)b_{jk}(T) + C_{jk}c_{ij}(S)b_{jk}(T)$} 
		\\
  		\multicolumn{3}{|c|}{\large\rule{0pt}{1.5em}\rule[-0.5em]{0pt}{1.5em}{\small $Y_{12}(i,j,k) \coloneqq B_{ij}c_{ik}(S)b_{ji}(T)+C_{ij}c_{jk}(S)c_{ij}(T)- B_{kj}c_{ik}(S)b_{jk}(T) - C_{jk}c_{ij}(S)b_{jk}(T)$}} 
		\\
		\hline
	\end{longtable}

\end{document}

% --- supplement: appendix.tex ---

\section*{Appendix A.1: General YBD equivalence classes}
Here we enumerate all non-trivial (i.e. with admissible states) three color Yang-Baxter diagrams and their corresponding equations and polynomials, for any three labels or colors $i, j, k \in [1, n)$. Color coding is given to indicate if a Yang-Baxter equation is vacuously true for all possible $R, S, T$ weights (red), if there are further constraints but an equal number of admissible states is present on both sides (yellow), or if there are an unequal number of admissible states on each side of the equation (green).

\textcolor{blue}{add note here about corresponding YBE's, YB polynomials if applicable (not vacuously zero), color labeling}

\subsection*{Degenerate (One and Two Color) Cases}
%THIS SECTION IS NOW TEMPLATE
	\centering
	\begin{longtable}{|c|c|l|}
		\hline
		\LARGE $X^{iii}_{iii}$
		&\begin{minipage}{0.2\textwidth}
			\centering\vspace{0.5em}\scalebox{0.75}{\begin{tikzpicture}[scale=0.725,baseline=0.7cm]
            \lybd{i}{i}{i}{i}{i}{i}{i}{i}{i}
        \end{tikzpicture}}\vspace{0.5em}
		\end{minipage}
		{\LARGE $=$}
		\begin{minipage}{0.2\textwidth}
			\centering\vspace{0.5em}\scalebox{0.75}{\begin{tikzpicture}[scale=0.725,baseline=0.7cm]
             \rybd{i}{i}{i}{i}{i}{i}{i}{i}{i}
         \end{tikzpicture}}\vspace{0.5em}
		\end{minipage}
		& \cellcolor{red!50}
		\\
		\hline
		\multicolumn{3}{|c|}{\large\rule{0pt}{1.5em}\rule[-0.5em]{0pt}{1.5em}$A_{i}a_{i}(S)a_{i}(T)=A_{i}a_{i}(S)a_{i}(T)$} 
		\\
  \hline
    \end{longtable}
    %END TEMPLATE
%%%%%%%%
\begin{longtable}{|c|c|l|}
		\hline
		\LARGE $X^{iij}_{iij}$
		&\begin{minipage}{0.2\textwidth}
			\centering\vspace{0.5em}\scalebox{0.75}{\begin{tikzpicture}[scale=0.725,baseline=0.7cm]
 \lybd{i}{i}{j}{i}{i}{j}{j}{i}{i}
 \end{tikzpicture}}\vspace{0.5em}
		\end{minipage}
		{\LARGE $=$}
             \begin{minipage}{0.2\textwidth}
			\centering\vspace{0.5em}\scalebox{0.75}{\begin{tikzpicture}[scale=0.725,baseline=0.7cm]
   \rybd{i}{i}{j}{i}{i}{j}{j}{i}{i}
   \end{tikzpicture}}\vspace{0.5em}
		\end{minipage}
		& \cellcolor{red!50}
		\\
		\hline
		\multicolumn{3}{|c|}{\large\rule{0pt}{1.5em}\rule[-0.5em]{0pt}{1.5em}$A_{i}b_{ij}(S)b_{ij}(T)=A_{i}b_{ij}(S)b_{ij}(T)$} 
		\\
    \hline
    \end{longtable}
    %%%%%%%%%%%%%%%%%%%%
    \begin{longtable}{|c|c|l|}
		\hline
		\LARGE $X^{iij}_{iji}$ %Currently wrong working on 
		&\begin{minipage}{0.2\textwidth}
			\centering\vspace{0.5em}\scalebox{0.75}{\begin{tikzpicture}[scale=0.725,baseline=0.7cm]
            \lybd{i}{i}{j}{i}{j}{i}{j}{i}{i}
            \end{tikzpicture}}\vspace{0.5em}
		\end{minipage}
		{\LARGE $=$}
		\begin{minipage}{0.2\textwidth}
			\centering\vspace{0.5em}\scalebox{0.75}{\begin{tikzpicture}[scale=0.725,baseline=0.7cm]
            \rybd{i}{i}{j}{i}{j}{i}{j}{j}{i}
            \end{tikzpicture}}\vspace{0.5em}
		\end{minipage}
		{\LARGE $+$}
		\begin{minipage}{0.2\textwidth}
			\centering\vspace{0.5em}\scalebox{0.75}{\begin{tikzpicture}[scale=0.725,baseline=0.7cm]
            \rybd{i}{i}{j}{i}{j}{i}{j}{j}{i}
            \end{tikzpicture}}\vspace{0.5em}
		\end{minipage}
		& \cellcolor{green!50}
		\\
		\hline
		\multicolumn{3}{|c|}{\large\rule{0pt}{1.5em}\rule[-0.5em]{0pt}{1.5em}$A_{i}b_{ij}(S)c_{ij}(T)=B_{ij}a_{i}(S)c_{ij}(T)+C_{ji}c_{ij}(S)b_{ij}(T)$} \\
   \multicolumn{3}{|c|}{\large\rule{0pt}{1.5em}\rule[-0.5em]{0pt}{1.5em}$Y_1(i,j)\coloneqq A_{i}b_{ij}(S)c_{ij}(T)- B_{ij}a_{i}(S)c_{ij}(T)-C_{ji}c_{ij}(S)b_{ij}(T)$} 
		\\
    \hline
    \end{longtable}

%%%%%%%%%		
\begin{longtable}{|c|c|l|}
		\hline
		\LARGE $X^{iij}_{jii}$
		&\begin{minipage}{0.2\textwidth}
			\centering\vspace{0.5em}\scalebox{0.75}{\begin{tikzpicture}[scale=0.725,baseline=0.7cm]
    \lybd{i}{i}{j}{j}{i}{i}{i}{i}{i}
    \end{tikzpicture}}\vspace{0.5em}
		\end{minipage}
		{\LARGE $=$}
		\begin{minipage}{0.2\textwidth}
			\centering\vspace{0.5em}\scalebox{0.75}{\begin{tikzpicture}[scale=0.725,baseline=0.7cm]
   \rybd{i}{i}{j}{j}{i}{i}{j}{j}{i}
   \end{tikzpicture}}\vspace{0.5em}
		\end{minipage}
		{\LARGE $+$}
		\begin{minipage}{0.2\textwidth}
			\centering\vspace{0.5em}\scalebox{0.75}{\begin{tikzpicture}[scale=0.725,baseline=0.7cm]
 \rybd{i}{i}{j}{j}{i}{i}{i}{i}{j}
 \end{tikzpicture}}\vspace{0.5em}
		\end{minipage}
		& \cellcolor{green!50}
		\\
		\hline
		\multicolumn{3}{|c|}{\large\rule{0pt}{1.5em}\rule[-0.5em]{0pt}{1.5em}$A_{i}c_{ij}(S)a_{i}(T)=B_{ji}c_{ij}(S)b_{ij}(T)+C_{ij}a_0(S)c_{ij}(T)$} \\
             \multicolumn{3}{|c|}{\large\rule{0pt}{1.5em}\rule[-0.5em]{0pt}{1.5em}$Y_2(i,j) \coloneqq A_{i}c_{ij}(S)a_{i}(T)-  B_{ji}c_{ij}(S)b_{ij}(T)-C_{ij}a_0(S)c_{ij}(T)$} 
		\\
  \hline
\end{longtable}
%why is there an extra hline here hmph
%%%%%%%%%		
\begin{longtable}{|c|c|l|}
		\hline
		\LARGE $X^{iji}_{iij}$
		&\begin{minipage}{0.2\textwidth}
			\centering\vspace{0.5em}\scalebox{0.75}{\begin{tikzpicture}[scale=0.725,baseline=0.7cm]\lybd{i}{j}{i}{i}{i}{j}{i}{i}{j}\end{tikzpicture}}\vspace{0.5em}
		\end{minipage}
		{\LARGE $+$}
		\begin{minipage}{0.2\textwidth}
			\centering\vspace{0.5em}\scalebox{0.75}{\begin{tikzpicture}[scale=0.725,baseline=0.7cm]\lybd{i}{j}{i}{i}{i}{j}{j}{j}{i}\end{tikzpicture}}\vspace{0.5em}
		\end{minipage}
		{\LARGE $=$}
		\begin{minipage}{0.2\textwidth}
			\centering\vspace{0.5em}\scalebox{0.75}{\begin{tikzpicture}[scale=0.725,baseline=0.7cm]\rybd{i}{j}{i}{i}{i}{j}{j}{i}{i}\end{tikzpicture}}\vspace{0.5em}
		\end{minipage}
		& \cellcolor{green!50}
		\\
		\hline
		\multicolumn{3}{|c|}{\large\rule{0pt}{1.5em}\rule[-0.5em]{0pt}{1.5em}$B_{ij}a_{i}(S)c_{ji}(T)+C_{ij}c_{ji}(S)b_{ij}(T)=A_{i}b_{ij}(S)c_{ji}(T)$} 
		\\
  \multicolumn{3}{|c|}{\large\rule{0pt}{1.5em}\rule[-0.5em]{0pt}{1.5em}$Y_3(i,j) \coloneqq B_{ij}a_{i}(S)c_{ji}(T)+C_{ij}c_{ji}(S)b_{ij}(T)-A_{i}b_{ij}(S)c_{ji}(T)$}  \\
  \hline
       \end{longtable}
%%%%%%%%%
\begin{longtable}{|c|c|l|}
		\hline
		\LARGE $X^{iji}_{iji}$
		&\begin{minipage}{0.18\textwidth}
			\centering\vspace{0.5em}\scalebox{0.75}{\begin{tikzpicture}[scale=0.725,baseline=0.7cm]\lybd{i}{j}{i}{i}{j}{i}{i}{i}{j}\end{tikzpicture}}\vspace{0.5em}
		\end{minipage}
		{\LARGE $+$}
        \begin{minipage}{0.18\textwidth}
			\centering\vspace{0.5em}\scalebox{0.75}{\begin{tikzpicture}[scale=0.725,baseline=0.7cm]\lybd{i}{j}{i}{i}{j}{i}{j}{j}{i}\end{tikzpicture}}\vspace{0.5em}
		\end{minipage}
		{\LARGE $=$}
		\begin{minipage}{0.18\textwidth}
			\centering\vspace{0.5em}\scalebox{0.75}{\begin{tikzpicture}[scale=0.725,baseline=0.7cm]\rybd{i}{j}{i}{i}{j}{i}{i}{i}{j}\end{tikzpicture}}\vspace{0.5em}
		\end{minipage}
		{\LARGE $+$}
		\begin{minipage}{0.18\textwidth}
			\centering\vspace{0.5em}\scalebox{0.75}{\begin{tikzpicture}[scale=0.725,baseline=0.7cm]\rybd{i}{j}{i}{i}{j}{i}{j}{j}{i}\end{tikzpicture}}\vspace{0.5em}
    	\end{minipage}
		& \cellcolor{yellow!50}
		\\
		\hline
    	\multicolumn{3}{|c|}{\large\rule{0pt}{1.5em}\rule[-0.5em]{0pt}{1.5em}$B_{ij}a_{i}(S)b_{ji}(T)+C_{ij}c_{ji}(S)c_{ij}(T)=B_{ij}a_{i}(S)b_{ji}(T)+C_{ji}c_{ij}(S)c_{ji}(T)$} 
		\\
            \multicolumn{3}{|c|}{\large\rule{0pt}{1.5em}\rule[-0.5em]{0pt}{1.5em}$Y_4(i,j)\coloneqq C_{ij}c_{ji}(S)c_{ij}(T)-C_{ji}c_{ij}(S)c_{ji}(T)$} 
		\\
  \hline
  \end{longtable}
%%%%%%%%%
\begin{longtable}{|c|c|l|}
		\hline
		\LARGE $X^{iji}_{jii}$
		&\begin{minipage}{0.2\textwidth}
			\centering\vspace{0.5em}\scalebox{0.75}{\begin{tikzpicture}[scale=0.725,baseline=0.7cm]\lybd{i}{j}{i}{j}{i}{i}{i}{j}{i}\end{tikzpicture}}\vspace{0.5em}
		\end{minipage}
		{\LARGE $=$}
		\begin{minipage}{0.2\textwidth}
			\centering\vspace{0.5em}\scalebox{0.75}{\begin{tikzpicture}[scale=0.725,baseline=0.7cm]\rybd{i}{j}{i}{j}{i}{i}{i}{i}{j}\end{tikzpicture}}\vspace{0.5em}
		\end{minipage}
		{\LARGE $+$}
		\begin{minipage}{0.2\textwidth}
			\centering\vspace{0.5em}\scalebox{0.75}{\begin{tikzpicture}[scale=0.725,baseline=0.7cm]\rybd{i}{j}{i}{j}{i}{i}{j}{j}{i}\end{tikzpicture}}\vspace{0.5em}
		\end{minipage}
		& \cellcolor{green!50}
		\\
		\hline
    	\multicolumn{3}{|c|}{\large\rule{0pt}{1.5em}\rule[-0.5em]{0pt}{1.5em}$C_{ij}b_{ji}(S)a_{i}(T)=B_{ji}c_{ij}(S)c_{ji}(T)+C_{ij}a_{i}(S)b_{ji}(T)$} 
		\\
             \multicolumn{3}{|c|}{\large\rule{0pt}{1.5em}\rule[-0.5em]{0pt}{1.5em}$Y_5(i,j) \coloneqq C_{ij}b_{ji}(S)a_{i}(T)-B_{ji}c_{ij}(S)c_{ji}(T)-C_{ij}a_{i}(S)b_{ji}(T)$} 
		\\
 \hline
 \end{longtable}
%%%%%%%%%
\begin{longtable}{|c|c|l|}
		\hline
	\LARGE $X^{ijj}_{ijj}$ 
		&\begin{minipage}{0.2\textwidth}
			\centering\vspace{0.5em}\scalebox{0.75}{\begin{tikzpicture}[scale=0.725,baseline=0.7cm]\lybd{i}{j}{j}{i}{j}{j}{j}{i}{j}\end{tikzpicture}}\vspace{0.5em}
		\end{minipage}
		{\LARGE $=$}
		\begin{minipage}{0.2\textwidth}
			\centering\vspace{0.5em}\scalebox{0.75}{\begin{tikzpicture}[scale=0.725,baseline=0.7cm]\rybd{i}{j}{j}{i}{j}{j}{j}{i}{j}\end{tikzpicture}}\vspace{0.5em}
		\end{minipage}
		& \cellcolor{red!50}
		\\
		\hline
		\multicolumn{3}{|c|}{\large\rule{0pt}{1.5em}\rule[-0.5em]{0pt}{1.5em}$B_{ij}b_{ij}(S)a_{j}(T)=B_{ij}b_{ij}(S)a_{j}(T)$} 
		\\
 \hline
 \end{longtable}
%%%%%%%%
\begin{longtable}{|c|c|l|}
		\hline
		\LARGE $X^{ijj}_{jij}$ 
		&\begin{minipage}{0.2\textwidth}
			\centering\vspace{0.5em}\scalebox{0.75}{\begin{tikzpicture}[scale=0.725,baseline=0.7cm]\lybd{i}{j}{j}{j}{i}{j}{i}{i}{j}\end{tikzpicture}}\vspace{0.5em}
		\end{minipage}
		{\LARGE $+$}
		\begin{minipage}{0.2\textwidth}
			\centering\vspace{0.5em}\scalebox{0.75}{\begin{tikzpicture}[scale=0.725,baseline=0.7cm]\lybd{i}{j}{j}{j}{i}{j}{j}{j}{i}\end{tikzpicture}}\vspace{0.5em}
		\end{minipage}
		{\LARGE $=$}
		\begin{minipage}{0.2\textwidth}
			\centering\vspace{0.5em}\scalebox{0.75}{\begin{tikzpicture}[scale=0.725,baseline=0.7cm]\rybd{i}{j}{j}{j}{i}{j}{j}{i}{j}\end{tikzpicture}}\vspace{0.5em}
		\end{minipage}
		& \cellcolor{green!50}
		\\
		\hline
		\multicolumn{3}{|c|}{\large\rule{0pt}{1.5em}\rule[-0.5em]{0pt}{1.5em}$B_{ij}c_{ij}(S)c_{ji}(T)+C_{ij}a_{j}(S)b_{ij}(T)=C_{ij}b_{ij}(S)a_{j}(T)$} 
		\\
  \multicolumn{3}{|c|}{\large\rule{0pt}{1.5em}\rule[-0.5em]{0pt}{1.5em}$Y_6(i,j) \coloneqq B_{ij}c_{ij}(S)c_{ji}(T)+C_{ij}a_{j}(S)b_{ij}(T)=C_{ij}b_{ij}(S)a_{j}(T)$} 
		\\
 \hline
 \end{longtable}
%%%%%%%
\begin{longtable}{|c|c|l|}
		\hline
		\LARGE $X^{ijj}_{jji}$ 
		&\begin{minipage}{0.2\textwidth}
			\centering\vspace{0.5em}\scalebox{0.75}{\begin{tikzpicture}[scale=0.725,baseline=0.7cm]\lybd{i}{j}{j}{j}{j}{i}{i}{i}{j}\end{tikzpicture}}\vspace{0.5em}
		\end{minipage}
		{\LARGE $+$}
		\begin{minipage}{0.2\textwidth}
			\centering\vspace{0.5em}\scalebox{0.75}{\begin{tikzpicture}[scale=0.725,baseline=0.7cm]\lybd{i}{j}{j}{j}{j}{i}{j}{j}{i}\end{tikzpicture}}\vspace{0.5em}
		\end{minipage}
		{\LARGE $=$}
		\begin{minipage}{0.2\textwidth}
			\centering\vspace{0.5em}\scalebox{0.75}{\begin{tikzpicture}[scale=0.725,baseline=0.7cm]\rybd{i}{j}{j}{j}{j}{i}{j}{j}{j}\end{tikzpicture}}\vspace{0.5em}
		\end{minipage}
		& \cellcolor{green!50}
		\\
		\hline
		\multicolumn{3}{|c|}{\large\rule{0pt}{1.5em}\rule[-0.5em]{0pt}{1.5em}$B_{ij}c_{ij}(S)b_{ji}(T)+C_{ij}a_{j}(S)c_{ij}(T)=A_{j}c_{ij}(S)a_{j}(T)$} 
		\\
        \multicolumn{3}{|c|}{\large\rule{0pt}{1.5em}\rule[-0.5em]{0pt}{1.5em}$Y_7(i,j) \coloneqq B_{ij}c_{ij}(S)b_{ji}(T)+C_{ij}a_{j}(S)c_{ij}(T)=A_{j}c_{ij}(S)a_{j}(T)$} 
		\\
 \hline
 \end{longtable}
 \subsection*{3 color cases}
%%%%%%%%%

\begin{longtable}{|c|c|l|}
		\hline
		\LARGE $X^{ijk}_{ijk}$
		&\begin{minipage}{0.2\textwidth}
			\centering\vspace{0.5em}\scalebox{0.75}{\begin{tikzpicture}[scale=0.725,baseline=0.7cm]\lybd{i}{j}{k}{i}{j}{k}{k}{i}{j}\end{tikzpicture}}\vspace{0.5em}
		\end{minipage}
		{\LARGE $=$}
		\begin{minipage}{0.2\textwidth}
			\centering\vspace{0.5em}\scalebox{0.75}{\begin{tikzpicture}[scale=0.725,baseline=0.7cm]\rybd{i}{j}{k}{i}{j}{k}{k}{i}{j}\end{tikzpicture}}\vspace{0.5em}
		\end{minipage}
		& \cellcolor{red!50}
		\\
		\hline
		\multicolumn{3}{|c|}{\large\rule{0pt}{1.5em}\rule[-0.5em]{0pt}{1.5em}$B_{ij}b_{ik}(S)b_{jk}(T)=B_{ij}b_{ik}(S)b_{jk}(T)$}
		\\
 \hline
 \end{longtable}
%%%%%%%%%
\begin{longtable}{|c|c|l|}
		\hline
		\LARGE $X^{ijk}_{ikj}$
		&\begin{minipage}{0.2\textwidth}
			\centering\vspace{0.5em}\scalebox{0.75}{\begin{tikzpicture}[scale=0.725,baseline=0.7cm]\lybd{i}{j}{k}{i}{k}{j}{k}{i}{j}\end{tikzpicture}}\vspace{0.5em}
		\end{minipage}
		{\LARGE $=$}
		\begin{minipage}{0.2\textwidth}
			\centering\vspace{0.5em}\scalebox{0.75}{\begin{tikzpicture}[scale=0.725,baseline=0.7cm]\rybd{i}{j}{k}{i}{k}{j}{j}{i}{k}\end{tikzpicture}}\vspace{0.5em}
		\end{minipage}
		& \cellcolor{yellow!50}
		\\
		\hline
		\multicolumn{3}{|c|}{\large\rule{0pt}{1.5em}\rule[-0.5em]{0pt}{1.5em}$B_{ij}b_{ik}(S)c_{jk}(T)=B_{ik}b_{ij}(S)c_{jk}(T)$}
		\\
  \multicolumn{3}{|c|}{\large\rule{0pt}{1.5em}\rule[-0.5em]{0pt}{1.5em}$Y_8(i,j,k) \coloneqq B_{ij}b_{ik}(S)c_{jk}(T)-B_{ik}b_{ij}(S)c_{jk}(T)$}
		\\
 \hline
 \end{longtable}
%%%%%%%%%
\begin{longtable}{|c|c|l|}
		\hline
		\LARGE $X^{ijk}_{jik}$
		&\begin{minipage}{0.2\textwidth}
			\centering\vspace{0.5em}\scalebox{0.75}{\begin{tikzpicture}[scale=0.725,baseline=0.7cm]\lybd{i}{j}{k}{j}{i}{k}{k}{j}{i}\end{tikzpicture}}\vspace{0.5em}
		\end{minipage}
		{\LARGE $=$}
		\begin{minipage}{0.2\textwidth}
			\centering\vspace{0.5em}\scalebox{0.75}{\begin{tikzpicture}[scale=0.725,baseline=0.7cm]\rybd{i}{j}{k}{j}{i}{k}{k}{i}{j}\end{tikzpicture}}\vspace{0.5em}
		\end{minipage}
		& \cellcolor{yellow!50}
		\\
		\hline
		\multicolumn{3}{|c|}{\large\rule{0pt}{1.5em}\rule[-0.5em]{0pt}{1.5em}$C_{ij}b_{jk}(S)b_{ik}(T)=C_{ij}b_{ik}(S)b_{jk}(T)$} 
		\\
  		\multicolumn{3}{|c|}{\large\rule{0pt}{1.5em}\rule[-0.5em]{0pt}{1.5em}$Y_9(i,j,k) \coloneqq C_{ij}b_{jk}(S)b_{ik}(T) - C_{ij}b_{ik}(S)b_{jk}(T)$} 
		\\
 \hline
 \end{longtable}
%%%%%%%%%
\begin{longtable}{|c|c|l|}
		\hline
		\LARGE $X^{ijk}_{kij}$
		&\begin{minipage}{0.2\textwidth}
			\centering\vspace{0.5em}\scalebox{0.75}{\begin{tikzpicture}[scale=0.725,baseline=0.7cm]\lybd{i}{j}{k}{k}{i}{j}{i}{i}{j}\end{tikzpicture}}\vspace{0.5em}
		\end{minipage}
		{\LARGE $+$}
		\begin{minipage}{0.2\textwidth}
			\centering\vspace{0.5em}\scalebox{0.75}{\begin{tikzpicture}[scale=0.725,baseline=0.7cm]\lybd{i}{j}{k}{k}{i}{j}{j}{j}{i}\end{tikzpicture}}\vspace{0.5em}
		\end{minipage}
		{\LARGE $=$}
		\begin{minipage}{0.2\textwidth}
			\centering\vspace{0.5em}\scalebox{0.75}{\begin{tikzpicture}[scale=0.725,baseline=0.7cm]\rybd{i}{j}{k}{k}{i}{j}{j}{j}{k}\end{tikzpicture}}\vspace{0.5em}
		\end{minipage}
		& \cellcolor{green!50}
		\\
		\hline
		\multicolumn{3}{|c|}{\large\rule{0pt}{1.5em}\rule[-0.5em]{0pt}{1.5em}$B_{ij}c_{ik}(S)b_{ji}(T)+C_{ij}c_{jk}(S)b_{ij}(T)=C_{ik}b_{ij}(S)c_{jk}(T)$} 
		\\
  		\multicolumn{3}{|c|}{\large\rule{0pt}{1.5em}\rule[-0.5em]{0pt}{1.5em}$Y_{10}(i,j,k) \coloneqq B_{ij}c_{ik}(S)b_{ji}(T)+C_{ij}c_{jk}(S)b_{ij}(T)-C_{ik}b_{ij}(S)c_{jk}(T)$} 
		\\
 \hline
 \end{longtable}
%%%%%%%%%
\begin{longtable}{|c|c|l|}
		\hline
		\LARGE $X^{ijk}_{jki}$
		&\begin{minipage}{0.2\textwidth}
			\centering\vspace{0.5em}\scalebox{0.75}{\begin{tikzpicture}[scale=0.725,baseline=0.7cm]\lybd{i}{j}{k}{j}{k}{i}{k}{j}{i}\end{tikzpicture}}\vspace{0.5em}
		\end{minipage}
		{\LARGE $=$}
		\begin{minipage}{0.2\textwidth}
			\centering\vspace{0.5em}\scalebox{0.75}{\begin{tikzpicture}[scale=0.725,baseline=0.7cm]\rybd{i}{j}{k}{j}{k}{i}{j}{j}{k}\end{tikzpicture}}\vspace{0.5em}
		\end{minipage}
		{\LARGE $+$}
		\begin{minipage}{0.2\textwidth}
			\centering\vspace{0.5em}\scalebox{0.75}{\begin{tikzpicture}[scale=0.725,baseline=0.7cm]\rybd{i}{j}{k}{j}{k}{i}{k}{k}{j}\end{tikzpicture}}\vspace{0.5em}
		\end{minipage}
		& \cellcolor{green!50}
		\\
		\hline
		\multicolumn{3}{|c|}{\large\rule{0pt}{1.5em}\rule[-0.5em]{0pt}{1.5em}$C_{ij}b_{jk}(S)c_{ik}(T)=B_{jk}c_{ij}(S)c_{jk}(T) + C_{kj}c_{ik}(S)b_{jk}(T)$} 
		\\
  		\multicolumn{3}{|c|}{\large\rule{0pt}{1.5em}\rule[-0.5em]{0pt}{1.5em}$Y_{11}(i,j,k) \coloneqq C_{ij}b_{jk}(S)c_{ik}(T)- B_{jk}c_{ij}(S)c_{jk}(T) - C_{kj}c_{ik}(S)b_{jk}(T)$} 
		\\
 \hline
 \end{longtable}
%%%%%%%%%
\begin{longtable}{|c|c|l|}
		\hline
		\LARGE $X^{ijk}_{kji}$
		&\begin{minipage}{0.18\textwidth}
			\centering\vspace{0.5em}\scalebox{0.75}{\begin{tikzpicture}[scale=0.725,baseline=0.7cm]\lybd{i}{j}{k}{k}{j}{i}{i}{i}{j}\end{tikzpicture}}\vspace{0.5em}
		\end{minipage}
		{\LARGE $+$}
		\begin{minipage}{0.18\textwidth}
			\centering\vspace{0.5em}\scalebox{0.75}{\begin{tikzpicture}[scale=0.725,baseline=0.7cm]\lybd{i}{j}{k}{k}{j}{i}{j}{j}{i}\end{tikzpicture}}\vspace{0.5em}
		\end{minipage}
		{\LARGE $=$}
		\begin{minipage}{0.18\textwidth}
			\centering\vspace{0.5em}\scalebox{0.75}{\begin{tikzpicture}[scale=0.725,baseline=0.7cm]\rybd{i}{j}{k}{k}{j}{i}{j}{j}{k}\end{tikzpicture}}\vspace{0.5em}
		\end{minipage}
		{\LARGE $+$}
		\begin{minipage}{0.18\textwidth}
			\centering\vspace{0.5em}\scalebox{0.75}{\begin{tikzpicture}[scale=0.725,baseline=0.7cm]\rybd{i}{j}{k}{k}{j}{i}{k}{k}{j}\end{tikzpicture}}\vspace{0.5em}
		\end{minipage}
		& \cellcolor{yellow!50}
		\\
		\hline
		\multicolumn{3}{|c|}{\large\rule{0pt}{1.5em}\rule[-0.5em]{0pt}{1.5em}$B_{ij}c_{ik}(S)b_{ji}(T)+C_{ij}c_{jk}(S)c_{ij}(T)=B_{kj}c_{ik}(S)b_{jk}(T) + C_{jk}c_{ij}(S)b_{jk}(T)$} 
		\\
  		\multicolumn{3}{|c|}{\large\rule{0pt}{1.5em}\rule[-0.5em]{0pt}{1.5em}$Y_{12}(i,j,k) \coloneqq B_{ij}c_{ik}(S)b_{ji}(T)+C_{ij}c_{jk}(S)c_{ij}(T)- B_{kj}c_{ik}(S)b_{jk}(T) - C_{jk}c_{ij}(S)b_{jk}(T)$} 
		\\
		\hline
	\end{longtable}
	%}
\newpage
\centering
\section{Appendix $A.2$: Sample YBEs from each equivalence class}

\begin{align*}
X^{000}_{000} \quad & A_{0} a_0(S) a_0(T) = A_{0} a_0(S) a_0(T) \tag{$Z^0(0)$} \\
X^{001}_{001} \quad & A_{0} b_{01}(S) b_{01}(T) = A_{0} b_{01}(S) b_{01}(T) \tag{$Z^1(0,1)$} \\
X^{001}_{010} \quad & A_{0} b_{01}(S) c_{01}(T) = B_{01} a_{0}(S) c_{01}(T) + C_{10} b_{01}(T) c_{01}(S) \tag{$Z^3(0,1)$} \\
X^{001}_{100} \quad & A_{0} a_{0}(T) c_{01}(S) = B_{10} b_{01}(T) c_{01}(S) + C_{01} a_{0}(S) c_{01}(T) \tag{$Z^4(0,1)$} \\
X^{010}_{001} \quad & B_{01} a_{0}(S) c_{10}(T) + C_{01} b_{01}(T) c_{10}(S) = A_{0} b_{01}(S) c_{10}(T) \tag{$Z^5(0,1)$} \\
X^{010}_{010} \quad & B_{01} a_{0}(S) b_{10}(T) + C_{01} c_{01}(T) c_{10}(S) = B_{01} a_{0}(S) b_{10}(T) + C_{10} c_{01}(S) c_{10}(T) \tag{$Z^6(0,1)$} \\
X^{010}_{100} \quad & C_{01} a_{0}(T) b_{10}(S) = B_{10} c_{01}(S) c_{10}(T) + C_{01} a_{0}(S) b_{10}(T) \tag{$Z^7(0,1)$} \\
X^{011}_{011} \quad & B_{01} b_{01}(S) a_1(T) = B_{01} b_{01}(S) a_0(T) \tag{$Z^8(0,1)$} \\
X^{011}_{101} \quad & B_{01} c_{01}(S) c_{10}(T) + C_{01} a_{1}(S) b_{01}(T) = C_{01} a_{1}(T) b_{01}(S) \tag{$Z^9(0,1)$} \\
X^{011}_{110} \quad & B_{01} b_{10}(T) c_{01}(S) + C_{01} a_{1}(S) c_{01}(T) = A_{1} a_{1}(T) c_{01}(S) \tag{$Z^{10}(0,1)$} \\ 
X^{001}_{001} \quad & B_{01}b_{03}(S)b_{13}(T) = B_{01}b_{03}(S)b_{13}(T) \tag{$Z^{11}(0,1,3)$}\\
X^{001}_{010} \quad & B_{00} b_{01}(S) c_{01}(T) = B_{01} b_{00}(S) c_{01}(T) \tag{$Z^{12}(0,1,3)$}\\
X^{001}_{001} \quad & C_{00} b_{01}(S) b_{01}(T) = C_{00} b_{01}(S) b_{01}(T) \tag{$Z^{13}(0,1,3)$}\\
X^{001}_{100} \quad & C_{00} c_{01}(S) b_{00}(T) + B_{00} c_{01}(S) c_{00}(T) = C_{01} b_{00}(S) c_{01}(T) \tag{$Z^{14}(0,1,3)$}\\
X^{001}_{010} \quad & C_{00} b_{01}(S) c_{01}(T) = C_{10} c_{01}(S) b_{01}(T) + B_{01} c_{00}(S) c_{01}(T) \tag{$Z^{15}(0,1,3)$}\\
X^{001}_{100} \quad & C_{00} c_{01}(S) c_{00}(T) + B_{00} c_{01}(S) b_{00}(T) = C_{01} c_{00}(S) c_{01}(T) + B_{10} c_{01}(S) b_{01}(T) \tag{$Z^{16}(0,1,3)$}\\
\end{align*}

	\bigskip
	
	Appendix B.1
	\centering
	
	Case 1 $(+,+,-,+,-,+)$:
    \begin{center}
        \begin{tikzpicture}
        \vertex{0}{0}{}{$-$}{$+$}{}{}
        \vertex{0}{-1}{}{$-$}{$-$}{$+$}{}
        \rvertex{-1}{-1}{$+$}{$+$}{$+$}{$+$}{}
    \end{tikzpicture}
    \begin{tikzpicture}
        \vertex{3}{0}{$+$}{$-$}{}{}{}
        \vertex{3}{-1}{$+$}{$+$}{}{$+$}{}
        \rvertex{4}{-1}{$+$}{$-$}{$+$}{$-$}{}
    \end{tikzpicture}
    \begin{tikzpicture}
        \vertex{3}{0}{$+$}{$-$}{}{}{}
        \vertex{3}{-1}{$+$}{$-$}{}{$+$}{}
        \rvertex{4}{-1}{$-$}{$+$}{$+$}{$-$}{}
    \end{tikzpicture}
    \end{center}
    
Here is the abstraction:
    \begin{center}
    \begin{tikzpicture}
            \node (a) at (0,0){$+$};
            
            \node[below=1cm of a] (b){$+$};
            
            \node[below=1cm of b] (c){$-$};
            
            \node[right=1cm of a] (d){$+$};
            
            \node[right=1cm of b] (e){$-$};
            
            \node[right=1cm of c] (f){$+$};
            
            \draw[->] (c) -- (e);
        \end{tikzpicture}
    \end{center}
    Case 2 $(+,+,-,-,+,+)$:

    \begin{center}
        \begin{tikzpicture}
        \vertex{0}{0}{}{$-$}{$-$}{}{}
        \vertex{0}{-1}{}{$+$}{$+$}{$+$}{}
        \rvertex{-1}{-1}{$+$}{$+$}{$+$}{$+$}{}
    \end{tikzpicture}
    \begin{tikzpicture}
        \vertex{3}{0}{$+$}{$-$}{}{}{}
        \vertex{3}{-1}{$+$}{$+$}{}{$+$}{}
        \rvertex{4}{-1}{$+$}{$-$}{$-$}{$+$}{}
    \end{tikzpicture}
    \begin{tikzpicture}
        \vertex{3}{0}{$+$}{$-$}{}{}{}
        \vertex{3}{-1}{$+$}{$-$}{}{$+$}{}
        \rvertex{4}{-1}{$-$}{$+$}{$-$}{$+$}{}
    \end{tikzpicture}
    \end{center}
    Furthermore, observe the abstraction of the boundary condition:
    \begin{center}
        \begin{tikzpicture}
            \node (a) at (0,0){$+$};
            
            \node[below=1cm of a] (b){$+$};
            
            \node[below=1cm of b] (c){$-$};
            
            \node[right=1cm of a] (d){$-$};
            
            \node[right=1cm of b] (e){$+$};
            
            \node[right=1cm of c] (f){$+$};
            
            \draw[->] (c) -- (d);
        \end{tikzpicture}
    \end{center}
    Again this provides an interesting result as the right model has one admissible state whereas the left has two admissible states.
    
    Case 3 $(+,-,+,+,+,-)$:
    \begin{center}
        \begin{tikzpicture}
        \vertex{0}{0}{}{$+$}{$+$}{}{}
        \vertex{0}{-1}{}{$-$}{$+$}{$-$}{}
        \rvertex{-1}{-1}{$+$}{$-$}{$-$}{$+$}{}
    \end{tikzpicture}
     \begin{tikzpicture}
        \vertex{0}{0}{}{$+$}{$+$}{}{}
        \vertex{0}{-1}{}{$+$}{$+$}{$-$}{}
        \rvertex{-1}{-1}{$+$}{$-$}{$+$}{$-$}{}
    \end{tikzpicture}
    \begin{tikzpicture}
        \vertex{3}{0}{$-$}{$+$}{}{}{}
        \vertex{3}{-1}{$+$}{$-$}{}{$-$}{}
        \rvertex{4}{-1}{$+$}{$+$}{$+$}{$+$}{}
    \end{tikzpicture}
    \end{center}
    Observe the abstraction of the boundary condition:
    \begin{center}
        \begin{tikzpicture}
            \node (a) at (0,0){$+$};
            
            \node[below=1cm of a] (b){$-$};
            
            \node[below=1cm of b] (c){$+$};
            
            \node[right=1cm of a] (d){$+$};
            
            \node[right=1cm of b] (e){$+$};
            
            \node[right=1cm of c] (f){$-$};
            
            \draw[->] (b) -- (f);
        \end{tikzpicture}
    \end{center}
    Again this boundary condition provides an interesting result as the right model has two admissible states whereas the left has one admissible state.
    
    Case 4 $(+,-,+,-,+,+)$:

    \begin{center}
        \begin{tikzpicture}
        \vertex{0}{0}{$-$}{$+$}{$-$}{}{}
        \vertex{0}{-1}{}{$+$}{$+$}{$+$}{}
        \rvertex{-1}{-1}{$+$}{$-$}{$-$}{$+$}{}
    \end{tikzpicture}
    \begin{tikzpicture}
        \vertex{3}{0}{$-$}{$+$}{}{}{}
        \vertex{3}{-1}{$+$}{$+$}{$+$}{$+$}{}
        \rvertex{4}{-1}{$+$}{$-$}{$-$}{$+$}{}
    \end{tikzpicture}
    \begin{tikzpicture}
        \vertex{3}{0}{$-$}{$+$}{}{}{}
        \vertex{3}{-1}{$+$}{$-$}{}{$+$}{}
        \rvertex{4}{-1}{$-$}{$+$}{$-$}{$+$}{}
    \end{tikzpicture}
    \end{center}
    Observe the abstraction of the boundary condition:
    \begin{center}
        \begin{tikzpicture}
            \node (a) at (0,0){$+$};
            
            \node[below=1cm of a] (b){$-$};
            
            \node[below=1cm of b] (c){$+$};
            
            \node[right=1cm of a] (d){$-$};
            
            \node[right=1cm of b] (e){$+$};
            
            \node[right=1cm of c] (f){$+$};
            
            \draw[->] (b) -- (d);
        \end{tikzpicture}
    \end{center}
    Again this boundary condition provides an interesting result as the left model has two admissible states whereas the right has one admissible state.
    
    Case 5 $(-,+,+,+,-,+)$:
    
    \begin{center}
        \begin{tikzpicture}
            \vertex{0}{0}{}{$+$}{$+$}{}{}
            \vertex{0}{-1}{}{$-$}{$-$}{$+$}{}
            \rvertex{-1}{-1}{$-$}{$+$}{$-$}{$+$}{}
        \end{tikzpicture}
        \begin{tikzpicture}
            \vertex{0}{0}{}{$+$}{$+$}{}{}
            \vertex{0}{-1}{}{$+$}{$-$}{$+$}{}
            \rvertex{-1}{-1}{$-$}{$+$}{$+$}{$-$}{}
        \end{tikzpicture}
        \begin{tikzpicture}
            \vertex{3}{0}{$+$}{$+$}{$+$}{$+$}{}
            \vertex{3}{-1}{$-$}{$+$}{}{$+$}{}
            \rvertex{4}{-1}{$-$}{$+$}{$+$}{$-$}{}
        \end{tikzpicture}
    \end{center}
    Observe the boundary condition:
    \begin{center}
        \begin{tikzpicture}
            \node (a) at (0,0){$-$};
            
            \node[below=1cm of a] (b){$+$};
            
            \node[below=1cm of b] (c){$+$};
            
            \node[right=1cm of a] (d){$+$};
            
            \node[right=1cm of b] (e){$-$};
            
            \node[right=1cm of c] (f){$+$};
            
            \draw[->] (a) -- (e);
        \end{tikzpicture}
    \end{center}
    Again, this boundary condition provides an interesting result as the left model has two admissible states whereas the right has one admissible state. 
    
    Case 6 $(-,+,+,+,+,-)$:
    
    \begin{center}
        \begin{tikzpicture}
            \vertex{0}{0}{}{$+$}{$+$}{}{}
            \vertex{0}{-1}{}{$+$}{$+$}{$-$}{}
            \rvertex{-1}{-1}{$-$}{$+$}{$+$}{$-$}{}
        \end{tikzpicture}
        \begin{tikzpicture}
            \vertex{0}{0}{}{$+$}{$+$}{}{}
            \vertex{0}{-1}{}{$-$}{$+$}{$-$}{}
            \rvertex{-1}{-1}{$-$}{$+$}{$-$}{$+$}{}
        \end{tikzpicture}
        \begin{tikzpicture}
            \vertex{3}{0}{$+$}{$+$}{}{}{}
            \vertex{3}{-1}{$-$}{$+$}{}{$-$}{}
            \rvertex{4}{-1}{$+$}{$+$}{$+$}{$+$}{}
        \end{tikzpicture}
    \end{center}
    Observe the abstraction of the boundary condition:
    \begin{center}
        \begin{tikzpicture}
            \node (a) at (0,0){$-$};
            
            \node[below=1cm of a] (b){$+$};
            
            \node[below=1cm of b] (c){$+$};
            
            \node[right=1cm of a] (d){$+$};
            
            \node[right=1cm of b] (e){$+$};
            
            \node[right=1cm of c] (f){$-$};
            
            \draw[->] (a) -- (f);
        \end{tikzpicture}
    \end{center}  being added to appendix
    Again, this boundary condition provides an interesting result as the left model has two admissible states whereas the right has one.
    Thus in conclusion, any model in which the boundary conditions are $\alpha\not\mapsto\delta$, $\beta\not\mapsto\epsilon$ and $\gamma\not\mapsto\eta$ we know that the result will be interesting. 
    
    \begin{center}
    Appendix B.2
    \end{center}
    
    Assume that there is a path connecting $R \mapsto R$, a path connecting $B \mapsto B$, and a path $+ \mapsto +$. We will now go through all the possible three color case such that we do not repeat cases that are 1 color model, i.e. a boundary condition like this (R,+,+,+,+,R) as this is equivalent to a previous boundary condition (-,+,+,+,+,-).

    Case 1 (R,B,+,+,R,B):
    \begin{center}
        \begin{tikzpicture}
            \vertex{0}{0}{}{$+$}{$+$}{}{}
            \vertex{0}{-1}{}{}{$\color{red}R$}{$\color{blue}B$}{}
            \rvertex{-1}{-1}{$\color{red}R$}{$\color{blue}B$}{}{}{} % left hand YBE case 1 for the 2 color case
        \end{tikzpicture}
        \begin{tikzpicture}
            \vertex{3}{0}{$\color{blue}B$}{$+$}{}{}{}
            \vertex{3}{-1}{$\color{red}R$}{}{}{$\color{blue}B$}{}
            \rvertex{4}{-1}{}{}{$+$}{$\color{red}R$}{}
        \end{tikzpicture}
    \end{center}
    Notice, the left hand Yang-Baxter Equation has two admissible states, whereas the left has only one. Hence:
    
    \begin{center}
        \begin{tikzpicture}
            \vertex{0}{0}{}{$+$}{$+$}{}{}
            \vertex{0}{-1}{}{\color{red}$R$}{$\color{red}R$}{$\color{blue}B$}{}
            \rvertex{-1}{-1}{$\color{red}R$}{$\color{blue}B$}{$\color{red}R$}{$\color{blue}B$}{}
        \end{tikzpicture}
        \begin{tikzpicture}
            \vertex{0}{0}{}{$+$}{$+$}{}{}
            \vertex{0}{-1}{}{\color{blue}$B$}{$\color{red}R$}{$\color{blue}B$}{}
            \rvertex{-1}{-1}{$\color{red}R$}{$\color{blue}B$}{\color{blue}$B$}{\color{red}$R$}{}
        \end{tikzpicture}
        \begin{tikzpicture}
            \vertex{3}{0}{$\color{blue}B$}{$+$}{}{}{}
            \vertex{3}{-1}{$\color{red}R$}{\color{blue}$B$}{}{$\color{blue}B$}{}
            \rvertex{4}{-1}{\color{red}$R$}{$+$}{$+$}{$\color{red}R$}{}
        \end{tikzpicture}
    \end{center}
    Now, let us look at the abstraction for the boundary ins and outs:
    \begin{center}
        \begin{tikzpicture}
            \node (a) at (0,0){$R$};
            
            \node[below=1cm of a] (b){$B$};
            
            \node[below=1cm of b] (c){$+$};
            
            \node[right=1cm of a] (d){$+$};
            
            \node[right=1cm of b] (e){$R$};
            
            \node[right=1cm of c] (f){$B$};
            
            \draw[->] (a) -- (e);
            
            \draw[->] (b) -- (f);
            
            \draw[->] (c) -- (d);
        \end{tikzpicture}
    \end{center}
    Thus the abstraction has a desirable mapping as each in maps diagonally to one out. 
    
    Case 2 (+,R,B,R,B,+):
    \begin{center}
        \begin{tikzpicture}
            \vertex{0}{0}{}{\color{blue}$B$}{\color{red}$R$}{}{}
            \vertex{0}{-1}{}{}{$\color{blue}B$}{$+$}{}
            \rvertex{-1}{-1}{$+$}{$\color{red}R$}{}{}{} % left hand YBE case 2 for the 2 color case
        \end{tikzpicture}
        \begin{tikzpicture}
            \vertex{3}{0}{$\color{red}R$}{\color{blue}$B$}{}{}{}
            \vertex{3}{-1}{$+$}{}{}{$+$}{}
            \rvertex{4}{-1}{}{}{\color{red}$R$}{$\color{blue}B$}{}
        \end{tikzpicture}
    \end{center}
    Notice that the model on the right has two admissible states whereas the left has one. 
    \begin{center}
        \begin{tikzpicture}
            \vertex{0}{0}{}{\color{blue}$B$}{\color{red}$R$}{}{}
            \vertex{0}{-1}{}{\color{blue}$B$}{$\color{blue}B$}{$+$}{}
            \rvertex{-1}{-1}{$+$}{$\color{red}R$}{\color{red}$R$}{$+$}{} 
        \end{tikzpicture}
         \begin{tikzpicture}
            \vertex{3}{0}{$\color{red}R$}{\color{blue}$B$}{}{}{}
            \vertex{3}{-1}{$+$}{\color{blue}$B$}{}{$+$}{}
            \rvertex{4}{-1}{\color{blue}$B$}{\color{red}$R$}{\color{red}$R$}{$\color{blue}B$}{}
        \end{tikzpicture}
        \begin{tikzpicture}
            \vertex{3}{0}{$\color{red}R$}{\color{blue}$B$}{}{}{}
            \vertex{3}{-1}{$+$}{\color{red}$R$}{}{$+$}{}
            \rvertex{4}{-1}{\color{red}$R$}{\color{blue}$B$}{\color{red}$R$}{$\color{blue}B$}{}
        \end{tikzpicture}
    \end{center}
    Once again, let us look at the abstraction:
    
    \begin{center}
        \begin{tikzpicture}
            \node (a) at (0,0){$+$};
            
            \node[below=1cm of a] (b){$R$};
            
            \node[below=1cm of b] (c){$B$};
            
            \node[right=1cm of a] (d){$R$};
            
            \node[right=1cm of b] (e){$B$};
            
            \node[right=1cm of c] (f){$+$};
            
            \draw[->] (a) -- (f);
            
            \draw[->] (b) -- (d);
            
            \draw[->] (c) -- (e);
        \end{tikzpicture}
    \end{center}
    We again observe a result which is desirable as $R \mapsto R$ diagonally, $B \mapsto B$ diagonally, and $+\mapsto+$ diagonally.
    
    Case 3 (R,B,+,B,+,R):
    
    \begin{center}
        \begin{tikzpicture}
            \vertex{0}{0}{}{$+$}{\color{blue}$B$}{}{}
            \vertex{0}{-1}{}{}{$+$}{\color{red}$R$}{}
            \rvertex{-1}{-1}{$\color{red}R$}{$\color{blue}B$}{}{}{} % left hand YBE case 3 for the 2 color case
        \end{tikzpicture}
        \begin{tikzpicture}
            \vertex{3}{0}{$\color{blue}B$}{$+$}{}{}{}
            \vertex{3}{-1}{$\color{red}R$}{}{}{\color{red}$R$}{}
            \rvertex{4}{-1}{}{}{\color{blue}$B$}{$+$}{}
        \end{tikzpicture}
    \end{center}
    Notice that the right hand model has two admissible states whereas the left only has one state.
    \begin{center}
        \begin{tikzpicture}
            \vertex{0}{0}{}{$+$}{\color{blue}$B$}{}{}
            \vertex{0}{-1}{}{$+$}{$+$}{\color{red}$R$}{}
            \rvertex{-1}{-1}{$\color{red}R$}{$\color{blue}B$}{\color{blue}$B$}{\color{red}$R$}{} % left hand YBE case 3 for the 2 color case
        \end{tikzpicture}
        \begin{tikzpicture}
            \vertex{3}{0}{$\color{blue}B$}{$+$}{}{}{}
            \vertex{3}{-1}{$\color{red}R$}{\color{blue}$B$}{}{\color{red}$R$}{}
            \rvertex{4}{-1}{\color{blue}$B$}{$+$}{\color{blue}$B$}{$+$}{}
        \end{tikzpicture}
        \begin{tikzpicture}
            \vertex{3}{0}{$\color{blue}B$}{$+$}{}{}{}
            \vertex{3}{-1}{$\color{red}R$}{$+$}{}{\color{red}$R$}{}
            \rvertex{4}{-1}{$+$}{\color{blue}$B$}{\color{blue}$B$}{$+$}{}
        \end{tikzpicture}
    \end{center}
    Let us now look at the abstraction: 
    \begin{center}
        \begin{tikzpicture}
            \node (a) at (0,0){$R$};
            
            \node[below=1cm of a] (b){$B$};
            
            \node[below=1cm of b] (c){$+$};
            
            \node[right=1cm of a] (d){$B$};
            
            \node[right=1cm of b] (e){$+$};
            
            \node[right=1cm of c] (f){$R$};
            
            \draw[->] (a) -- (f);
            
            \draw[->] (b) -- (d);
            
            \draw[->] (c) -- (e);
        \end{tikzpicture}
    \end{center}
    
    Hence, we observe a result which is desirable as $R \mapsto R$ diagonally, $B \mapsto B$ diagonally, and $+\mapsto+$ diagonally.
    
    Case 4 (B,+,R,+,R,B):
    \begin{center}
        \begin{tikzpicture}
            \vertex{0}{0}{}{\color{red}$R$}{$+$}{}{}
            \vertex{0}{-1}{}{}{\color{red}$R$}{\color{blue}$B$}{}
            \rvertex{-1}{-1}{$\color{blue}B$}{$+$}{}{}{} % left hand YBE case 4 for the 2 color case
        \end{tikzpicture}
        \begin{tikzpicture}
            \vertex{3}{0}{$+$}{\color{red}$R$}{}{}{}
            \vertex{3}{-1}{\color{blue}$B$}{}{}{\color{blue}$B$}{}
            \rvertex{4}{-1}{}{}{$+$}{\color{red}$R$}{}
        \end{tikzpicture}
    \end{center}
    Again we find that the right model has two admissible states whereas the left model has only one admissible state. 
    \begin{center}
        \begin{tikzpicture}
            \vertex{0}{0}{}{\color{red}$R$}{$+$}{}{}
            \vertex{0}{-1}{}{\color{red}$R$}{\color{red}$R$}{\color{blue}$B$}{}
            \rvertex{-1}{-1}{$\color{blue}B$}{$+$}{$+$}{\color{blue}$B$}{} % left hand YBE case 4 for the 2 color case
        \end{tikzpicture}
        \begin{tikzpicture}
            \vertex{3}{0}{$+$}{\color{red}$R$}{}{}{}
            \vertex{3}{-1}{\color{blue}$B$}{$+$}{}{\color{blue}$B$}{}
            \rvertex{4}{-1}{$+$}{\color{red}$R$}{$+$}{\color{red}$R$}{}
        \end{tikzpicture}
        \begin{tikzpicture}
            \vertex{3}{0}{$+$}{\color{red}$R$}{}{}{}
            \vertex{3}{-1}{\color{blue}$B$}{\color{red}$R$}{}{\color{blue}$B$}{}
            \rvertex{4}{-1}{\color{red}$R$}{$+$}{$+$}{\color{red}$R$}{}
        \end{tikzpicture}
    \end{center}
    We will now look at the abstraction of the ins and outs:
    \begin{center}
        \begin{tikzpicture}
            \node (a) at (0,0){$B$};
            
            \node[below=1cm of a] (b){$+$};
            
            \node[below=1cm of b] (c){$R$};
            
            \node[right=1cm of a] (d){$+$};
            
            \node[right=1cm of b] (e){$R$};
            
            \node[right=1cm of c] (f){$B$};
            
            \draw[->] (a) -- (f);
            
            \draw[->] (b) -- (d);
            
            \draw[->] (c) -- (e);
        \end{tikzpicture}
    \end{center}
    Hence, we observe a result which is desirable as $R \mapsto R$ diagonally, $B \mapsto B$ diagonally, and $+\mapsto+$ diagonally.
    
    Case 5 (B,+,R,R,B,+):
    \begin{center}
        \begin{tikzpicture}
            \vertex{0}{0}{}{\color{red}$R$}{\color{red}$R$}{}{}
            \vertex{0}{-1}{}{}{\color{blue}$B$}{$+$}{}
            \rvertex{-1}{-1}{\color{blue}$B$}{$+$}{}{}{} % left hand YBE case 5 for the 2 color case
        \end{tikzpicture}
        \begin{tikzpicture}
            \vertex{3}{0}{$+$}{\color{red}$R$}{}{}{}
            \vertex{3}{-1}{\color{blue}$B$}{}{}{$+$}{}
            \rvertex{4}{-1}{}{}{\color{red}$R$}{\color{blue}$B$}{}
        \end{tikzpicture}
    \end{center}
    Observe that there are two admissible states for the left model whereas there is only one admissible state for the right model. 
    \begin{center}
        \begin{tikzpicture}
            \vertex{0}{0}{}{\color{red}$R$}{\color{red}$R$}{}{}
            \vertex{0}{-1}{}{$+$}{\color{blue}$B$}{$+$}{}
            \rvertex{-1}{-1}{\color{blue}$B$}{$+$}{$+$}{\color{blue}$B$}{} % left hand YBE case 5 for the 2 color case
        \end{tikzpicture}
        \begin{tikzpicture}
            \vertex{0}{0}{}{\color{red}$R$}{\color{red}$R$}{}{}
            \vertex{0}{-1}{}{\color{blue}$B$}{\color{blue}$B$}{$+$}{}
            \rvertex{-1}{-1}{\color{blue}$B$}{$+$}{\color{blue}$B$}{$+$}{} 
        \end{tikzpicture}
        \begin{tikzpicture}
            \vertex{3}{0}{$+$}{\color{red}$R$}{}{}{}
            \vertex{3}{-1}{\color{blue}$B$}{$+$}{}{$+$}{}
            \rvertex{4}{-1}{\color{blue}$B$}{\color{red}$R$}{\color{red}$R$}{\color{blue}$B$}{}
        \end{tikzpicture}
    \end{center}
    We will now look at the abstraction of the ins and outs:
    \begin{center}
        \begin{tikzpicture}
            \node (a) at (0,0){$B$};
            
            \node[below=1cm of a] (b){$+$};
            
            \node[below=1cm of b] (c){$R$};
            
            \node[right=1cm of a] (d){$R$};
            
            \node[right=1cm of b] (e){$B$};
            
            \node[right=1cm of c] (f){$+$};
            
            \draw[->] (a) -- (e);
            
            \draw[->] (b) -- (f);
            
            \draw[->] (c) -- (d);
        \end{tikzpicture}
    \end{center}
    Hence, we observe a result which is desirable as $R \mapsto R$ diagonally, $B \mapsto B$ diagonally, and $+\mapsto+$ diagonally.
    
    Case 6 (R,+,B,+,B,R):
    
    \begin{center}
        \begin{tikzpicture}
            \vertex{0}{0}{}{\color{blue}$B$}{$+$}{}{}
            \vertex{0}{-1}{}{}{\color{blue}$B$}{\color{red}$R$}{}
            \rvertex{-1}{-1}{\color{red}$R$}{$+$}{}{}{} % left hand YBE case 6 for the 2 color case
        \end{tikzpicture}
        \begin{tikzpicture}
            \vertex{3}{0}{$+$}{\color{blue}$B$}{}{}{}
            \vertex{3}{-1}{\color{red}$R$}{}{}{\color{red}$R$}{}
            \rvertex{4}{-1}{}{}{$+$}{\color{blue}$B$}{}
        \end{tikzpicture}
    \end{center}
    Notice the left model has one admissible state whereas, the left model has two admissible states.
    \begin{center}
        \begin{tikzpicture}
            \vertex{0}{0}{}{\color{blue}$B$}{$+$}{}{}
            \vertex{0}{-1}{}{\color{blue}$B$}{\color{blue}$B$}{\color{red}$R$}{}
            \rvertex{-1}{-1}{\color{red}$R$}{$+$}{$+$}{\color{red}$R$}{} % left hand YBE case 6 for the 2 color case
        \end{tikzpicture}
        \begin{tikzpicture}
            \vertex{3}{0}{$+$}{\color{blue}$B$}{}{}{}
            \vertex{3}{-1}{\color{red}$R$}{\color{blue}$B$}{}{\color{red}$R$}{}
            \rvertex{4}{-1}{\color{blue}$B$}{$+$}{$+$}{\color{blue}$B$}{}
        \end{tikzpicture}
        \begin{tikzpicture}
            \vertex{3}{0}{$+$}{\color{blue}$B$}{}{}{}
            \vertex{3}{-1}{\color{red}$R$}{$+$}{}{\color{red}$R$}{}
            \rvertex{4}{-1}{$+$}{\color{blue}$B$}{$+$}{\color{blue}$B$}{}
        \end{tikzpicture}
    \end{center}
    We will now look at the abstraction of the ins and outs:
    \begin{center}
        \begin{tikzpicture}
            \node (a) at (0,0){$R$};
            
            \node[below=1cm of a] (b){$+$};
            
            \node[below=1cm of b] (c){$B$};
            
            \node[right=1cm of a] (d){$+$};
            
            \node[right=1cm of b] (e){$B$};
            
            \node[right=1cm of c] (f){$R$};
            
            \draw[->] (a) -- (f);
            
            \draw[->] (b) -- (d);
            
            \draw[->] (c) -- (e);
        \end{tikzpicture}
    \end{center}
    Hence, we observe a result which is desirable as $R \mapsto R$ diagonally, $B \mapsto B$ diagonally, and $+\mapsto+$ diagonally.
    
    Case 7 (R,+,B,B,R,+):
    \begin{center}
        \begin{tikzpicture}
            \vertex{0}{0}{}{\color{blue}$B$}{\color{blue}$B$}{}{}
            \vertex{0}{-1}{}{}{\color{red}$R$}{$+$}{}
            \rvertex{-1}{-1}{\color{red}$R$}{$+$}{}{}{} % left hand YBE case 7 for the 2 color case
        \end{tikzpicture}
        \begin{tikzpicture}
            \vertex{3}{0}{$+$}{\color{blue}$B$}{}{}{}
            \vertex{3}{-1}{\color{red}$R$}{}{}{$+$}{}
            \rvertex{4}{-1}{}{}{\color{blue}$B$}{\color{red}$R$}{}
        \end{tikzpicture}
    \end{center}
    Observe that the left model has two states whereas the right model only has one.
    \begin{center}
        \begin{tikzpicture}
            \vertex{0}{0}{}{\color{blue}$B$}{\color{blue}$B$}{}{}
            \vertex{0}{-1}{}{\color{red}$R$}{\color{red}$R$}{$+$}{}
            \rvertex{-1}{-1}{\color{red}$R$}{$+$}{\color{red}$R$}{$+$}{} 
        \end{tikzpicture}
        \begin{tikzpicture}
            \vertex{0}{0}{}{\color{blue}$B$}{\color{blue}$B$}{}{}
            \vertex{0}{-1}{}{$+$}{\color{red}$R$}{$+$}{}
            \rvertex{-1}{-1}{\color{red}$R$}{$+$}{$+$}{\color{red}$R$}{} 
        \end{tikzpicture}
        \begin{tikzpicture}
            \vertex{3}{0}{$+$}{\color{blue}$B$}{}{}{}
            \vertex{3}{-1}{\color{red}$R$}{$+$}{}{$+$}{}
            \rvertex{4}{-1}{\color{red}$R$}{\color{blue}$B$}{\color{blue}$B$}{\color{red}$R$}{}
        \end{tikzpicture}
    \end{center}
    We will now look at the abstraction of the ins and outs:
    \begin{center}
        \begin{tikzpicture}
            \node (a) at (0,0){$R$};
            
            \node[below=1cm of a] (b){$+$};
            
            \node[below=1cm of b] (c){$B$};
            
            \node[right=1cm of a] (d){$B$};
            
            \node[right=1cm of b] (e){$R$};
            
            \node[right=1cm of c] (f){$+$};
            
            \draw[->] (a) -- (e);
            
            \draw[->] (b) -- (f);
            
            \draw[->] (c) -- (d);
        \end{tikzpicture}
    \end{center}
    Hence, we observe a result which is desirable as $R \mapsto R$ diagonally, $B \mapsto B$ diagonally, and $+\mapsto+$ diagonally.
    
    Case 8 (+,R,B,B,+,R):
    
    \begin{center}
        \begin{tikzpicture}
            \vertex{0}{0}{}{\color{blue}$B$}{\color{blue}$B$}{}{}
            \vertex{0}{-1}{}{}{$+$}{\color{red}$R$}{}
            \rvertex{-1}{-1}{$+$}{\color{red}$R$}{}{}{} % left hand YBE case 8 for the 2 color case
        \end{tikzpicture}
        \begin{tikzpicture}
            \vertex{3}{0}{\color{red}$R$}{\color{blue}$B$}{}{}{}
            \vertex{3}{-1}{$+$}{}{}{\color{red}$R$}{}
            \rvertex{4}{-1}{}{}{\color{blue}$B$}{$+$}{}
        \end{tikzpicture}
    \end{center}
    Notice that the model on the left has two admissible states whereas the right has one. 
    \begin{center}
        \begin{tikzpicture}
            \vertex{0}{0}{}{\color{blue}$B$}{\color{blue}$B$}{}{}
            \vertex{0}{-1}{}{\color{red}$R$}{$+$}{\color{red}$R$}{}
            \rvertex{-1}{-1}{$+$}{\color{red}$R$}{\color{red}$R$}{$+$}{} 
        \end{tikzpicture}
         \begin{tikzpicture}
            \vertex{0}{0}{}{\color{blue}$B$}{\color{blue}$B$}{}{}
            \vertex{0}{-1}{}{$+$}{$+$}{\color{red}$R$}{}
            \rvertex{-1}{-1}{$+$}{\color{red}$R$}{$+$}{\color{red}$R$}{} 
        \end{tikzpicture}
        \begin{tikzpicture}
            \vertex{3}{0}{\color{red}$R$}{\color{blue}$B$}{}{}{}
            \vertex{3}{-1}{$+$}{\color{red}$R$}{}{\color{red}$R$}{}
            \rvertex{4}{-1}{$+$}{\color{blue}$B$}{\color{blue}$B$}{$+$}{}
        \end{tikzpicture}
    \end{center}
    We will now look at the abstraction of the ins and outs:
    \begin{center}
        \begin{tikzpicture}
            \node (a) at (0,0){$+$};
            
            \node[below=1cm of a] (b){$R$};
            
            \node[below=1cm of b] (c){$B$};
            
            \node[right=1cm of a] (d){$B$};
            
            \node[right=1cm of b] (e){$+$};
            
            \node[right=1cm of c] (f){$R$};
            
            \draw[->] (a) -- (e);
            
            \draw[->] (b) -- (f);
            
            \draw[->] (c) -- (d);
        \end{tikzpicture}
    \end{center}
    Hence, we observe a result which is desirable as $R \mapsto R$ diagonally, $B \mapsto B$ diagonally, and $+\mapsto+$ diagonally.
    
    Case 9 (+,B,R,B,R,+):
    \begin{center}
        \begin{tikzpicture}
            \vertex{0}{0}{}{\color{red}$R$}{\color{blue}$B$}{}{}
            \vertex{0}{-1}{}{}{\color{red}$R$}{$+$}{}
            \rvertex{-1}{-1}{$+$}{\color{blue}$B$}{}{}{} % left hand YBE case 9 for the 2 color case
        \end{tikzpicture}
        \begin{tikzpicture}
            \vertex{3}{0}{\color{blue}$B$}{\color{red}$R$}{}{}{}
            \vertex{3}{-1}{$+$}{}{}{$+$}{}
            \rvertex{4}{-1}{}{}{\color{blue}$B$}{\color{red}$R$}{}
        \end{tikzpicture}
    \end{center}
    Notice that the model on the right has two admissible states whereas the left has one.
    
    \begin{center}
        \begin{tikzpicture}
            \vertex{0}{0}{}{\color{red}$R$}{\color{blue}$B$}{}{}
            \vertex{0}{-1}{}{\color{red}$R$}{\color{red}$R$}{$+$}{}
            \rvertex{-1}{-1}{$+$}{\color{blue}$B$}{\color{blue}$B$}{$+$}{} % left hand YBE case 9 for the 2 color case
        \end{tikzpicture}
        \begin{tikzpicture}
            \vertex{3}{0}{\color{blue}$B$}{\color{red}$R$}{}{}{}
            \vertex{3}{-1}{$+$}{\color{blue}$B$}{}{$+$}{}
            \rvertex{4}{-1}{\color{blue}$B$}{\color{red}$R$}{\color{blue}$B$}{\color{red}$R$}{}
        \end{tikzpicture}\begin{tikzpicture}
            \vertex{3}{0}{\color{blue}$B$}{\color{red}$R$}{}{}{}
            \vertex{3}{-1}{$+$}{\color{red}$R$}{}{$+$}{}
            \rvertex{4}{-1}{\color{red}$R$}{\color{blue}$B$}{\color{blue}$B$}{\color{red}$R$}{}
        \end{tikzpicture}
    \end{center}
    
    We will now look at the abstraction of the ins and outs:
    \begin{center}
        \begin{tikzpicture}
            \node (a) at (0,0){$+$};
            
            \node[below=1cm of a] (b){$B$};
            
            \node[below=1cm of b] (c){$R$};
            
            \node[right=1cm of a] (d){$B$};
            
            \node[right=1cm of b] (e){$R$};
            
            \node[right=1cm of c] (f){$+$};
            
            \draw[->] (a) -- (f);
            
            \draw[->] (b) -- (d);
            
            \draw[->] (c) -- (e);
        \end{tikzpicture}
    \end{center}
    Hence, we observe a result which is desirable as $R \mapsto R$ diagonally, $B \mapsto B$ diagonally, and $+\mapsto+$ diagonally.
    
    Case 10 (+,B,R,R,+,B):
    \begin{center}
        \begin{tikzpicture}
            \vertex{0}{0}{}{\color{red}$R$}{\color{red}$R$}{}{}
            \vertex{0}{-1}{}{}{$+$}{\color{blue}$B$}{}
            \rvertex{-1}{-1}{$+$}{\color{blue}$B$}{}{}{} % left hand YBE case 10 for the 2 color case
        \end{tikzpicture}
        \begin{tikzpicture}
            \vertex{3}{0}{\color{blue}$B$}{\color{red}$R$}{}{}{}
            \vertex{3}{-1}{$+$}{}{}{\color{blue}$B$}{}
            \rvertex{4}{-1}{}{}{\color{red}$R$}{$+$}{}
        \end{tikzpicture}
    \end{center}
    Notice that the left model has two admissible states, whereas the right model has only one.
    \begin{center}
        \begin{tikzpicture}
            \vertex{0}{0}{}{\color{red}$R$}{\color{red}$R$}{}{}
            \vertex{0}{-1}{}{\color{blue}$B$}{$+$}{\color{blue}$B$}{}
            \rvertex{-1}{-1}{$+$}{\color{blue}$B$}{\color{blue}$B$}{$+$}{} % left hand YBE case 10 for the 2 color case
        \end{tikzpicture}
        \begin{tikzpicture}
            \vertex{0}{0}{}{\color{red}$R$}{\color{red}$R$}{}{}
            \vertex{0}{-1}{}{$+$}{$+$}{\color{blue}$B$}{}
            \rvertex{-1}{-1}{$+$}{\color{blue}$B$}{$+$}{\color{blue}$B$}{} % left hand YBE case 10 for the 2 color case
        \end{tikzpicture}
        \begin{tikzpicture}
            \vertex{3}{0}{\color{blue}$B$}{\color{red}$R$}{}{}{}
            \vertex{3}{-1}{$+$}{\color{blue}$B$}{}{\color{blue}$B$}{}
            \rvertex{4}{-1}{$+$}{\color{red}$R$}{\color{red}$R$}{$+$}{}
        \end{tikzpicture}
    \end{center}
    We will now look at the abstraction of the ins and outs:
    \begin{center}
        \begin{tikzpicture}
            \node (a) at (0,0){$+$};
            
            \node[below=1cm of a] (b){$B$};
            
            \node[below=1cm of b] (c){$R$};
            
            \node[right=1cm of a] (d){$R$};
            
            \node[right=1cm of b] (e){$+$};
            
            \node[right=1cm of c] (f){$B$};
            
            \draw[->] (a) -- (e);
            
            \draw[->] (b) -- (f);
            
            \draw[->] (c) -- (d);
        \end{tikzpicture}
    \end{center}
    
    Hence, we observe a result which is desirable as $R \mapsto R$ diagonally, $B \mapsto B$ diagonally, and $+\mapsto+$ diagonally.
    
    Case 11 (B,R,+,+,B,R):
    \begin{center}
        \begin{tikzpicture}
            \vertex{0}{0}{}{$+$}{$+$}{}{}
            \vertex{0}{-1}{}{}{\color{blue}$B$}{\color{red}$R$}{}
            \rvertex{-1}{-1}{\color{blue}$B$}{\color{red}$R$}{}{}{} % left hand YBE case 11 for the 2 color case
        \end{tikzpicture}
        \begin{tikzpicture}
            \vertex{3}{0}{\color{red}$R$}{$+$}{}{}{}
            \vertex{3}{-1}{\color{blue}$B$}{}{}{$\color{red}R$}{}
            \rvertex{4}{-1}{}{}{$+$}{\color{blue}$B$}{}
        \end{tikzpicture}
    \end{center}
    
    The left hand Yang-Baxter equation has two admissible states, whereas the right has only one. Hence:
    
    \begin{center}
        \begin{tikzpicture}
            \vertex{0}{0}{}{$+$}{$+$}{}{}
            \vertex{0}{-1}{}{\color{red}$R$}{\color{blue}$B$}{\color{red}$R$}{}
            \rvertex{-1}{-1}{\color{blue}$B$}{\color{red}$R$}{\color{red}$R$}{\color{blue}$B$}{} % left hand YBE case 11 for the 2 color case
        \end{tikzpicture}
         \begin{tikzpicture}
            \vertex{0}{0}{}{$+$}{$+$}{}{}
            \vertex{0}{-1}{}{\color{blue}$B$}{\color{blue}$B$}{\color{red}$R$}{}
            \rvertex{-1}{-1}{\color{blue}$B$}{\color{red}$R$}{\color{blue}$B$}{\color{red}$R$}{} % left hand YBE case 11 for the 2 color case
        \end{tikzpicture}
        \begin{tikzpicture}
            \vertex{3}{0}{\color{red}$R$}{$+$}{}{}{}
            \vertex{3}{-1}{\color{blue}$B$}{\color{red}$R$}{}{$\color{red}R$}{}
            \rvertex{4}{-1}{\color{blue}$B$}{$+$}{$+$}{\color{blue}$B$}{}
        \end{tikzpicture}
    \end{center}
    
    We will now look at the abstraction of the ins and outs:
    \begin{center}
        \begin{tikzpicture}
            \node (a) at (0,0){$B$};
            
            \node[below=1cm of a] (b){$R$};
            
            \node[below=1cm of b] (c){$+$};
            
            \node[right=1cm of a] (d){$+$};
            
            \node[right=1cm of b] (e){$B$};
            
            \node[right=1cm of c] (f){$R$};
            
            \draw[->] (a) -- (e);
            
            \draw[->] (b) -- (f);
            
            \draw[->] (c) -- (d);
        \end{tikzpicture}
    \end{center}
    
    Hence, we observe a result which is desirable as $R \mapsto R$ diagonally, $B \mapsto B$ diagonally, and $+\mapsto+$ diagonally.
    
    Case 12 (B,R,+,R,+,B):
    \begin{center}
        \begin{tikzpicture}
            \vertex{0}{0}{}{$+$}{\color{red}$R$}{}{}
            \vertex{0}{-1}{}{}{$+$}{\color{blue}$B$}{}
            \rvertex{-1}{-1}{\color{blue}$B$}{\color{red}$R$}{}{}{} % left hand YBE case 12 for the 2 color case
        \end{tikzpicture}
        \begin{tikzpicture}
            \vertex{3}{0}{\color{red}$R$}{$+$}{}{}{}
            \vertex{3}{-1}{\color{blue}$B$}{}{}{\color{blue}$B$}{}
            \rvertex{4}{-1}{}{}{\color{red}$R$}{$+$}{}
        \end{tikzpicture}
    \end{center}
    The right hand Yang-Baxter equation has two admissible states, whereas the left has only one. Hence:
    
    \begin{center}
        \begin{tikzpicture}
            \vertex{0}{0}{}{$+$}{\color{red}$R$}{}{}
            \vertex{0}{-1}{}{$+$}{$+$}{\color{blue}$B$}{}
            \rvertex{-1}{-1}{\color{blue}$B$}{\color{red}$R$}{\color{red}$R$}{\color{blue}$B$}{} % left hand YBE case 12 for the 2 color case
        \end{tikzpicture}
        \begin{tikzpicture}
            \vertex{3}{0}{\color{red}$R$}{$+$}{}{}{}
            \vertex{3}{-1}{\color{blue}$B$}{$+$}{}{\color{blue}$B$}{}
            \rvertex{4}{-1}{$+$}{\color{red}$R$}{\color{red}$R$}{$+$}{}
        \end{tikzpicture}
        \begin{tikzpicture}
            \vertex{3}{0}{\color{red}$R$}{$+$}{}{}{}
            \vertex{3}{-1}{\color{blue}$B$}{\color{red}$R$}{}{\color{blue}$B$}{}
            \rvertex{4}{-1}{\color{red}$R$}{$+$}{\color{red}$R$}{$+$}{}
        \end{tikzpicture}
    \end{center}
    We will now look at the abstraction of the ins and outs:
    \begin{center}
        \begin{tikzpicture}
            \node (a) at (0,0){$B$};
            
            \node[below=1cm of a] (b){$R$};
            
            \node[below=1cm of b] (c){$+$};
            
            \node[right=1cm of a] (d){$R$};
            
            \node[right=1cm of b] (e){$+$};
            
            \node[right=1cm of c] (f){$B$};
            
            \draw[->] (a) -- (f);
            
            \draw[->] (b) -- (d);
            
            \draw[->] (c) -- (e);
        \end{tikzpicture}
    \end{center}
    Hence, we observe a result which is desirable as $R \mapsto R$ diagonally, $B \mapsto B$ diagonally, and $+\mapsto+$ diagonally.
\newpage